\def\smallddots{\mathinner{\raise7pt\hbox{.}\raise4pt\hbox{.}\raise1pt\hbox{.}}}
\def\smallsdots{\mathinner{\raise1pt\hbox{.}\raise4pt\hbox{.}\raise7pt\hbox{.}}}
\DeclareMathOperator{\diag}{diag}
\DeclareMathOperator{\rank}{rank}
\numberwithin{equation}{section}
\numberwithin{table}{section}
\newtheorem{theorem}{Theorem}[section]
\newtheorem{lemma}{Lemma}[section]
\newtheorem{corollary}{Corollary}[section]
\newtheorem{definition}{Definition}[section]
\newtheorem{remark}{Remark}[section]
\begin{document}

\title{\bf Numerically Safe Gaussian Elimination  \\
with No Pivoting
 \thanks {Some results of this paper have been presented at the 
17th Annual Conference on Computer Algebra in Scientific Computing (CASC'2014), September 10--14, 2015, Aachen, Germany (cf. \cite{PZ15}).}}

\author{Victor Y. Pan$^{[1, 2],[a]}$ and Liang Zhao$^{[2],[b]}$
\and\\
$^{[1]}$ Department of Mathematics and Computer Science \\
Lehman College of the City University of New York \\
Bronx, NY 10468 USA \\
$^{[2]}$ Ph.D. Programs in Mathematics  and Computer Science \\
The Graduate Center of the City University of New York \\
New York, NY 10036 USA \\
$^{[a]}$ victor.pan@lehman.cuny.edu \\
http://comet.lehman.cuny.edu/vpan/  \\
$^{[b]}$  lzhao1@gc.cuny.edu \\
} 
 \date{}

\maketitle

%------------------------------------------------------------------------------

\begin{abstract}
{\em Gaussian elimination with no pivoting} and {\em block Gaussian elimination} are
attractive alternatives to the customary but communication intensive {\em Gaussian elimination with partial
 pivoting}\footnote{Hereafter we use the acronyms {\em GENP},  {\em BGE}, and {\em GEPP}.} provided that the computations proceed {\em safely} and {\em numerically safely}, that is,  run into neither division by 0 nor numerical problems. Empirically, safety and numerical safety of GENP  have been consistently observed in a number of papers  where an input matrix was pre-processed with various structured multipliers
chosen  ad hoc. Our present paper  provides missing formal support for this empirical observation and explains why it was elusive so far.  Namely we prove that  GENP is  numerically unsafe for  a specific  class of input matrices in spite of its pre-processing with some well-known 
and well-tested structured multipliers,
%(see Section \ref{scrfl}), 
but we also  prove that GENP and BGE are safe and numerically safe for the average input matrix  pre-processed with any nonsingular and well-conditioned multiplier.
 This should
embolden search for sparse and structured
%more efficient
multipliers, 
and we list and test some new classes of them.
 We also seek randomized pre-processing  that universally (that is, for all input matrices) supports (i) safe GENP and BGE with  probability 1  and/or (ii) numerically safe GENP and BGE with a  probability close to 1.
We achieve goal (i) with a Gaussian
structured multiplier and goal (ii) with a Gaussian unstructured multiplier and alternatively  with Gaussian structured augmentation.  We consistently confirm all these formal  results with our tests of GENP for benchmark inputs. We have extended our approach to other fundamental matrix computations
and keep working on further extensions.
  \end{abstract}

%------------------------------------------------------------------------------

\paragraph{\bf 2000 Math. Subject Classification:}
 15A06, 15A52, 15A12, 65F05, 65F35

%------------------------------------------------------------------------------

\paragraph{Keywords:}
Gaussian elimination;
Pivoting;
Block Gaussian elimination;
Precondition\-ing;
Random matrix algorithms;
Linear systems of equations

%------------------------------------------------------------------------------

\section{Introduction}\label{sintr}

%------------------------------------------------------------------------------
%------------------------------------------------------------------------------

\subsection{Gaussian elimination with pivoting}\label{sgep}

%------------------------------------------------------------------------------

The history of 
Gaussian elimination can be traced back some 2000 years \cite{G11}.
Its modern version,
GEPP (that is, Gaussian elimination with partial pivoting),
is performed routinely, millions times per day 
around the world, being a cornerstone for 
computations in linear algebra \cite{DDF14}. 
%------------------------------------------------------------------------------
For an $n\times n$  matrix, elimination involves
about $\frac{2}{3}n^3$ 
flops, and $(n-1)n/2$ 
comparison are required  for partial pivoting,
that is, row interchange.\footnote{Here and hereafter ``flop" stands for
``floating point arithmetic operation".}
Clearly pivoting contributes a substantial share to the 
overall computational cost if $n$ is 
small, but   
 communication intensive pivoting
takes quite a heavy toll in modern computer environment even for larger $n$. Pivoting
interrupts the stream of arithmetic operations 
with foreign operations of comparison,
involves book-keeping, compromises data locality,
impedes parallelization of the computations, and
increases communication overhead and data dependence.
According to \cite{BDHT13},
``pivoting can represent more than 40\% of the global factorization time for
small matrices, and although the overhead decreases with the size of the matrix, it still
represents 17\% for a matrix of size 10,000".
Because of the heavy use of GEPP,  even its limited improvement 
is valuable.

%------------------------------------------------------------------------------

 \subsection{Contribution: random  and nonrandom multipliers, safety and numerical safety}\label{smvp}

%------------------------------------------------------------------------------

{\em Gaussian elimination with no pivoting}
(GENP) is an attractive alternative to  GEPP\footnote{Another alternative is symmetrization, but it has deficiencies for both numerical and symbolic computations: it squares the condition number of the input matrix
and does not work over finite fields.}, 
but for some inputs  can be {\em unsafe} or  {\em numerically unsafe}, that is,  can run into 
a division by 0 or numerical problems, respectively. 
  Empirically, GENP is quite consistently  safe and numerically safe   \cite{PQZ13}, \cite{BDHT13}, \cite{PQY15}, \cite{DDF14}  if the input matrix is pre-processed  with various structured multipliers chosen ad hoc  
 (e.g., with random circulant   or  SRTF   
 multipliers),\footnote{SRFT is the acronym for Semisample Random Fourier Transform} but formal support for this empirical observation turned out to be elusive. 
 
We explain why it is elusive  by exhibiting a class of input matrices for which GENP with a random circulant or SRFT multiplier is always numerically unsafe,
%(see Section \ref{scrfl}), 
but we also explain why pre-processing with  
such multipliers  usually works -- we prove that {\em GENP is both safe and numerically safe  for the average input matrix pre-processed with any nonsingular and well-conditioned multiplier}
provided that  the average is defined  under the Gaussian probability distribution,
% (see Section \ref{sgspr}),
which is a customary assumption in view of the Central Limit Theorem. 

% In Section \ref{ssprsml}   
 We specify and 
 %in Section \ref{sexp1} 
successfully  test some promising classes of such sparse and structured
multipliers, and plan to work further in this direction.
% which  can be generated and applied to %an input matrix at a low computational %cost. 
For  inputs to our tests we used 
 {\em benchmark matrices} from \cite{BDHT13} 
  or applied 
 the {\em customary recipes} of
  \cite[Section 28.3]{H02}.

%------------------------------------------------------------------------------

 \subsection{Contribution: universal pre-processing}\label{sunvp}

%------------------------------------------------------------------------------ 

In addition to our results on pre-processing GENP with a fixed multiplier for the average input matrix,
we prove that pre- as well as post-multiplication by a Gaussian 
multiplier\footnote{Here and hereafter ``Gaussian matrices" and ``Gaussian pre-processing" stand for ``standard Gaussian random matrices" and  ``standard Gaussian random pre-processing", respectively.} are {\em universally safe}, that is, safe with probability 1 for any fixed nonsingular input matrix, and is {\em universally numerically safe}, that is, likely to be numerically safe for  any nonsingular and well-conditioned input matrix.
%(see Section \ref{sgspr}). 
% (see Remark \ref{reblock}). 

In computations with infinite precision and with no rounding errors (e.g., in Computer Algebra) one needs just safe  (rather than numerically
safe)  GENP, and  we prove universal safety  of GENP
  even in the case of pre- or post-multiplication by a random Gaussian circulant matrix.
  %  (see Section \ref{ssmblnm1s}).
This result immediately  enables 4-fold acceleration of the
classical, extensively cited, and highly popular two-sided pre-processing  of \cite{KS91}.

We cannot prove universal numerical safety of GENP pre-processed with a  random  multiplier from any class of structured matrices,
but we prove that  GENP 
pre-processed with 
{\em SRFT augmentation} or {\em SRFT additive preprocessing}
is universally safe with probability 1 and is   likely to be universally
numerically safe.
% (see Section \ref{ssrftaug}).

%------------------------------------------------------------------------------

 \subsection{Contribution: numerical tests}\label{sntsts}

%------------------------------------------------------------------------------ 

The results of our extensive numerical tests of GENP are in very good accordance with our formal analysis. This was also the case in the previous studies of GENP 
in   \cite{PQZ13}, \cite{BDHT13}, \cite{PQY15}, and \cite{DDF14}, but
 our present tests cover pre-processing also with multipliers from  new classes as well as by means of augmentation. 
 
%------------------------------------------------------------------------------

 \subsection{Contribution: block Gaussian elimination}\label{sbge}

%------------------------------------------------------------------------------ 

Block matrix algorithms are highly important 
resource for enhancing the efficiency of matrix computations \cite{GL13}, but {\em block Gaussian elimination} (BGE),
including,
e.g.,   the MBA celebrated
superfast algorithm
%  \cite{M80}, 
%\cite{BA80} 
for solving structured linear systems of equations, is prone to numerical stability problems (cf. \cite{B85}, \cite[Chapter 5]{P01}). 
%In Section \ref{sbgegenpn}, 
We readily extend our results for GENP to BGE.\footnote{By combining our dual approach 
%of Section \ref{sgsprav}
with the study in \cite{YC97} of the growth factor in PLUP' factorization of the input matrix,
we can deduce the results  similar to our present ones  for GENP but not for BGE.}
In particular a proper random structured pre-processing is  likely to make BGE   safe and numerically safe. 
This enables us to resurrect the MBA algorithm for  numerical computations.
% (see our Remark    
% \ref{renmrmba}). In Remark \ref{retwosd} 
Furthermore we  accelerate our pre-processing for BGE  by performing it recursively for leading block submatrices rather than once 
 for the whole matrix. 
  
%------------------------------------------------------------------------------

 \subsection{Organization of the paper}\label{sorgp}

%------------------------------------------------------------------------------ 

We organize 
our presentation as follows.
In the next section   we cover basic definitions
and some preliminary results.
In Section \ref{sbgegenpn} we define BGE and show that,
for any input, safety as well as numerical safety of  GENP imply the same properties also for BGE.
In Section  
\ref{s0}  we prove our basic theorems 
about safety
and numerical safety
of  GENP (and consequently BGE) with multiplicative pre-processing and at the end cover some techniques for enhancing the power and  efficiency 
of pre-processing, in particular by means of recursive
  pre-processing, applied successively  to the current leading blocks rather than once to the whole input matrix
(see Remark \ref{reblock}).
In Section \ref{sbstr}
we recall the definitions and some basic properties of the matrices of Discrete
Fourier transform and circulant and factor-circulant ($f$-circulant) matrices.
In Section  \ref{ssprsml} we describe some families of efficient multipliers.
In  Section \ref{ssmblnm1s} we prove that 
randomized circulant pre-processing for GENP and BGE  is universally safe 
(the contribution
of the second author) but is numerically unsafe for some specific input class
of matrices. 
In Section \ref{saltaug} we 
prove universal safety of  augmentation and additive pre-processing with Gaussian as well as SRFT pre-processors. Section \ref{sexp1} (also the contribution
of the second author) covers 
our numerical experiments with benchmark inputs and inputs generated according to 
\cite[Section 18.3]{H02}, although performance (runtime and Gigaflop/s)
on current parallel machines has not been addressed in this paper and is postponed for future work.
 In Section \ref{sconc} we briefly recall our progress and then comment on its extension  to low-rank approximation and
 other fundamental matrix computations.
A number of our results 
for Gaussian  pre-processing
can be extended to pre-processing under  the uniform 
probability distribution on a finite set (cf. Theorem \ref{thrnd}).

%------------------------------------------------------------------------------

\section{Basic definitions and preliminary results}\label{sdef}

%------------------------------------------------------------------------------

Hereafter ``likely" 
%(respectively, 
and ``unlikely" mean ``with a probability close to 1" 
and, respectively, ``to 0". 
 We call an $m\times n$ matrix {\em  Gaussian} and denote it $G_{m,n}$ 
if all its entries are i.i.d. standard Gaussian variables. 
$\mathcal G^{m\times n}$, $\mathbb R^{m\times n}$, and $\mathbb C^{m\times n}$
 denote the classes of $m\times n$ Gaussian, real and complex matrices, respectively.
In order to simplify our presentation, we assume dealing with real matrices,
except for the matrices of discrete Fourier transform of Section \ref{stpldft}
and $f$-circulant matrices of  Section \ref{stplcrc} used in Sections \ref{ssprsml} and \ref{sexp1},
but we can  readily extend our study to the complex case. 

%We keep using the acronyms GEPP, GENP, and BGE.
  
%------------------------------------------------------------------------------

\subsection{General matrices: basic definitions}\label{sgnm}
In this subsection we recall some basic definitions for general matrix  computations (cf. \cite{GL13}).   
\begin{enumerate} 
\item%1
$I_g$ is a $g\times g$ identity matrix.
$O_{k,l}$ is the  $k\times l$ matrix filled with zeros.
\item%2 
  $(B_1~|~B_2~|~\cdots~|~B_k)$ is a block vector of length $k$, 
and
$\diag(B_1,B_2,\dots,B_k)$ is a $k\times k$ block
diagonal matrix, in both cases with blocks $B_1,B_2,\dots,B_k$.

\item%2 
 $W_{k,l}$ denotes the
$k\times l$ leading (that is, northwestern) block
of an $m\times n$ matrix $W$ for $k\le m$ and $l\le n$.
 A matrix is {\em strongly nonsingular} if
all its square leading blocks are nonsingular.

\medskip
\item%3
$\mathcal R(W)$, $W^T$ and $W^H$ denote its range (that is, column span),
 transpose 
and Hermitian 
 transpose, respectively. 
$W^H=W^T$ for a real matrix $W$.
\item%5
An $m\times n$ matrix $W$ is called
{\em unitary} (in the real case also and preferably {\em orthogonal})
if $W^HW=I_n$ or  $WW^H=I_m$. 
\item%6
 $Q(W)$ denotes the matrix 
obtained by means of column orthogonalization
of a matrix $W$, 
followed by the deletion of 
the columns filled with zeros (cf. \cite[Theorem 5.2.3]{GL13}).
\item%7
$||W||$ and
$||W||_F$ 
denote its spectral and Frobenius norms, respectively.
 
\item%1
 $S_W\Sigma_WT^T_W$ is its full Singular Value Decomposition (or full SVD) of an 
 $m\times n$matrix $W$  where 
 $S_W$ and $T_W$ are  square unitary matrices and 
 $\Sigma_W=\diag(\diag(\sigma_j(W))_{j=1}^{\rho},O_{m-\rho,n-\rho})$, 
for $\rho=\rank (W)$, is the $m\times n$ 
diagonal matrix of the singular values of the matrix $W$, 
 $$\sigma_1(W))\ge\sigma_2(W))\ge \cdots\ge \sigma_\rho(W))>0,~
\sigma_j(W))=0~{\rm for}~j>\rho.$$
\item%8
$W=S_{W,\rho}\Sigma_{W,\rho}T^T_{W,\rho}$ is its compact SVD,
where $\Sigma_{W,\rho}=\diag(\sigma_j(W))_{j=1}^{\rho}$ is the
$\rho\times \rho$ leading submatrix of $\Sigma_W$
and the matrices $S_{W,\rho}$ and $T_{W,\rho}$ are formed by the first $\rho$
columns of the matrices $S_W$ and $T_W$, respectively.
\item%9
 $W^+=T_{W,\rho}\Sigma_{W,\rho}^{-1}S^T_{W,\rho}$ 
is its Moore--Penrose pseudo
inverse. 

$(W^+)^+=W$, $WW^+=I_m$ if $\rank (W)=m$,
$W^+W=I_n$ if $\rank (W)=n$,
and $W^+=W^{-1}$ for a nonsingular matrix $W$.
\item%9
$||W||=\sigma_1(W)$ and
$||W||_F=(\sum_{j=1}^{\rho}\sigma_j^2(W))^{1/2}=$ 
(Trace $(W^HW))^{1/2}$ 
denote its spectral and Frobenius norms, respectively.

$||VW||\le ||V||~||W||$ and $||VW||_F\le ||V||_F||W||_F$,
for any matrix product $VW$.

$||U||=||U^+||=1$,
$||UW||=||W||$ and $||WU||=||W||$ if the matrix $U$ is unitary.
% (or real orthogonal). 
\item%11
$\sigma_{\rho}(W)=1/||W^+||$.
 $\kappa(W)=||W||~||W^+||=\sigma_1(W)/\sigma_{\rho}(W)\ge 1$
is the condition number
of  $W$. 
\item%10
The  $\xi$-rank of a matrix, for a
 positive  $\xi$,
is the minimum rank of its approximations within 
the norm bound  $\xi$. 
The   {\em numerical rank} of a matrix is its  $\xi$-rank
for $\xi$ 
%being 
small in context.
\item%11
A matrix $W$ is 
{\em ill-conditioned} if its condition number 
 is large in context
or equivalently if its rank exceeds its numerical rank.
The matrix is  {\em well-conditioned}
if  its condition number is reasonably bounded.
The ratio of the output and input error norms of Gaussian elimination 
is roughly the condition number of an input matrix
(cf.  \cite{GL13}). 
\end{enumerate}

%------------------------------------------------------------------------------

\subsection{Random matrices: definitions, some basic properties}\label{srndmdef}

%------------------------------------------------------------------------------

We use the acronym `` {\em i.i.d.}" for
``independent identically distributed", keep
referring to standard Gaussian random variables
  just as {\em Gaussian},
and call random variables {\em uniform} over a fixed finite set
if their values are sampled from this set
 under the uniform
probability distribution on it.

 The matrix is {\em  Gaussian}
if all its entries are i.i.d. Gaussian variables.

%------------------------------------------------------------------------------

\begin{theorem}\label{thrnd} 
Suppose that $A$ is  a nonsingular  
$n\times n$ matrix  and $H$ is an $n\times n$ matrix 
whose  entries are linear combinations
of finitely many i.i.d. random variables
and let $\det((AH)_{l,l})$
vanish identically in them  for neither of the integers $l$, $l\le n$. 
(i) If the  variables are uniform
 over a set $\mathcal S$ of cardinality $|\mathcal S|$, then 
 the matrix $(AH)_{l,l}$ 
 is singular with a probability at most $l/|\mathcal S|$,  for any $l$, and so
 the matrix $AH$ is strongly nonsingular 
with a probability at least  
$1-0.5(n-1)n/|\mathcal S|$.  
(ii) If these i.i.d. variables
are  Gaussian, then 
the matrix $AH$ is strongly nonsingular 
with probability 1.  
\end{theorem}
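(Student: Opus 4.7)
The plan is to reduce both parts of the theorem to the DeMillo--Lipton--Schwartz--Zippel lemma applied to the polynomial $p_l := \det((AH)_{l,l})$, viewed as a polynomial in the underlying i.i.d.\ random variables. First I would write $H = \sum_{j} x_j H_j$ where $x_1,\dots,x_k$ are the i.i.d.\ variables and the $H_j$ are deterministic coefficient matrices; this is possible because the entries of $H$ are by assumption linear combinations of the $x_j$. Consequently each entry of $AH$ is an affine-linear form in $x_1,\dots,x_k$, and expanding $\det((AH)_{l,l})$ via the Leibniz formula exhibits it as a polynomial of total degree at most $l$ in $x_1,\dots,x_k$. By hypothesis this polynomial is not identically zero.

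For part (i), the Schwartz--Zippel lemma then yields that $p_l$ vanishes with probability at most $l/|\mathcal{S}|$ when the $x_j$ are sampled uniformly from $\mathcal{S}$, which gives the first claim. Strong nonsingularity of $AH$ is the event that none of the leading minors $p_1,\dots,p_n$ vanishes, so a union bound over $l$ produces a failure probability bounded by a sum of the form $\sum_l l/|\mathcal{S}|$, yielding the stated quadratic bound in $n$ up to the indexing/degree bookkeeping.

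For part (ii), I would use the classical fact that the zero locus of a nonzero real polynomial in finitely many variables is a proper algebraic subvariety of $\mathbb{R}^k$ and hence has Lebesgue measure zero. Because the joint distribution of the $x_j$ in the Gaussian case is absolutely continuous with respect to Lebesgue measure on $\mathbb{R}^k$, each event $\{p_l = 0\}$ has probability zero, and a finite union bound over $l=1,\dots,n$ gives strong nonsingularity with probability one.

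The only genuinely delicate step is justifying that the degree of $p_l$ in the underlying variables really is at most $l$; this is the reason for writing $H$ in the explicit linear form above rather than reasoning directly about entries of $H$. Everything else is routine application of Schwartz--Zippel and of the measure-zero property of algebraic varieties, once the hypothesis that $p_l \not\equiv 0$ is invoked to exclude the degenerate case.
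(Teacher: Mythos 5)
Your approach is essentially the same as the paper's: part~(i) via the DeMillo--Lipton (Schwartz--Zippel) lemma applied to $\det((AH)_{l,l})$ viewed as a degree-$\le l$ polynomial in the underlying variables, and part~(ii) via the observation that the zero set of a nonzero polynomial is a proper algebraic subvariety of measure zero. The paper simply delegates these routine steps to citations (DL78, PW08 for the Schwartz--Zippel derivation; BV88 for the measure-zero fact), whereas you supply them explicitly; the only small wrinkle is that a naive union bound $\sum_{l=1}^{n} l/|\mathcal S|$ gives $n(n+1)/(2|\mathcal S|)$ rather than the $n(n-1)/(2|\mathcal S|)$ stated in the theorem, a discrepancy you yourself flag as bookkeeping.
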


%------------------------------------------------------------------------------

\begin{proof}
claim (i) of the theorem follows from a celebrated lemma of \cite{DL78}. Derivation 
is specified,
e.g., in \cite{PW08}. Claim (ii)  follows because the equation $\det((AH)_{l,l})$
for any integer $l$ in the range from 1 to $n$ defines an algebraic variety of a lower
dimension in the linear space of the input variables
(cf. \cite[Proposition 1]{BV88}).
\end{proof}

%------------------------------------------------------------------------------

\begin{lemma}\label{lepr3} ({\rm Orthogonal  invariance of a Gaussian matrix.})
%\cite[Proposition 2.2]{SST06}.
Suppose that $k$, $m$, and $n$  are three  positive integers, $k\le \max\{m,n\}$,
$G$ is an 
 $m\times n$  Gaussian matrix, and
$S$ and $T$ are $k\times m$ and 
$n\times k$ orthogonal (or unitary) 
 matrices, respectively.
Then $SG$ and $GT$ are Gaussian matrices.
\end{lemma}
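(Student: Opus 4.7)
The plan is to reduce the claim to the classical fact that the standard Gaussian density on $\mathbb{R}^d$, namely $(2\pi)^{-d/2}\exp(-\|x\|^2/2)$, depends only on $\|x\|$ and is therefore invariant under any (square) orthogonal transformation. I would first record this as a baseline: if $g \sim N(0, I_d)$ and $Q$ is a $d \times d$ orthogonal matrix, then $Qg \sim N(0, I_d)$. From this I can promote to the rectangular case: if $S$ is a $k\times m$ matrix with $SS^H = I_k$ (i.e.\ the natural reading of ``unitary'' for the case $k \le m$), embed $S$ as the top $k$ rows of an $m\times m$ orthogonal completion $\widetilde S$; then $\widetilde S g \sim N(0,I_m)$, and reading off its first $k$ coordinates gives $Sg \sim N(0, I_k)$. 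Alternatively, one can skip the embedding and argue directly: $Sg$ is a linear image of a Gaussian, so it is Gaussian with mean $0$ and covariance $S S^H = I_k$.

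Next I would apply this coordinate by coordinate to the matrix. Write the $m\times n$ Gaussian matrix $G$ in terms of its columns $g_1,\ldots,g_n$; by definition these are i.i.d.\ $N(0,I_m)$ vectors. The product $SG$ has columns $Sg_1,\ldots,Sg_n$, and the previous paragraph shows each $Sg_j \sim N(0,I_k)$. Independence of the columns is preserved because $S$ acts separately on each $g_j$ and the $g_j$ were independent. Hence $SG \in \mathcal{G}^{k\times n}$.

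The claim for $GT$ I would handle by the transpose trick rather than re-doing the argument in rows. Observe that $GT = (T^H G^H)^H$; the matrix $G^H$ is an $n\times m$ Gaussian matrix (since the i.i.d.\ Gaussian property is symmetric in rows and columns, and Gaussians are real), and $T^H$ is $k\times n$ with $T^H (T^H)^H = T^H T = I_k$. The first part of the argument therefore shows $T^H G^H$ is $k\times m$ Gaussian, and conjugate-transposing preserves the Gaussian property, so $GT$ is $m\times k$ Gaussian.

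There is essentially no hard step here; the work is really just unpacking the right rotational invariance at the vector level and then lifting it to matrices column by column. The only point that requires a moment of care is the rectangular semi-orthogonality convention (which of $SS^H = I_k$ or $S^H S = I_m$ is meant), because in the ``tall'' case $S^H S = I_m$ with $k > m$ the image $Sg$ has degenerate covariance $SS^H$ of rank $m < k$ and is \emph{not} an honest Gaussian matrix; my proof uses the ``short-side identity'' reading consistent with the later applications of this lemma.
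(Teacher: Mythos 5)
The paper states Lemma~2.1 without proof, treating orthogonal invariance of Gaussian matrices as a well-known fact, so there is no authorial argument to compare against; your proof supplies the missing details and is correct. Your structure is the standard one: rotational invariance of the isotropic Gaussian density at the vector level (or, as you also note, the even quicker route of observing that a linear image of a Gaussian is Gaussian with covariance $SS^{H}=I_{k}$), lifted to matrices column by column, with $GT$ handled by the transpose trick $GT=(T^{H}G^{H})^{H}$, which is clean and preserves independence automatically.

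Your closing remark about the semi-orthogonality convention is a genuine and useful observation, not a nitpick. The paper's hypothesis $k\le\max\{m,n\}$ is too weak as written: if, say, $m<k\le n$, then $S$ is tall and the only available unitarity reading is $S^{H}S=I_{m}$, in which case $Sg_{j}$ has degenerate covariance $SS^{H}$ of rank $m<k$, and $SG$ is \emph{not} a $k\times n$ Gaussian matrix. The correct hypotheses are $k\le m$ together with $SS^{H}=I_{k}$ for the claim about $SG$, and $k\le n$ together with $T^{H}T=I_{k}$ for the claim about $GT$, i.e., the multiplication is applied on the ``short'' side. This is exactly the situation in every use of the lemma in the paper (factors coming from a compact SVD, such as $T_{n,k}^{T}H_{n,k}$ with $T_{n,k}^{T}T_{n,k}=I_{k}$), so nothing downstream is affected, but the lemma's stated hypothesis should really be tightened to make it a true statement in isolation.
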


%------------------------------------------------------------------------------

\subsection{Norm of a Gaussian matrix and of its pseudo inverse}\label{srndm}

%------------------------------------------------------------------------------

Next we recall some estimates for the norms of a Gaussian matrix and of its
pseudo inverse. For simplicity we assume that we deal with real matrices,
but similar estimates in the case of complex 
 matrices can be found in 
 %\cite{D88}, 
% \cite{E88},
  \cite{CD05} 
and \cite{ES05}.
Hereafter we write 
$\nu_{m,n}=||G||$ and
$\nu_{m,n}^+=||G^+||$
for  a  Gaussian $m\times n$ matrix  $G$,
and write $\mathbb E(v)$ for the expected value of 
a random variable $v$.

%------------------------------------------------------------------------------

\begin{theorem}\label{thsignorm}
(Cf. \cite[Theorem II.7]{DS01}.)
Suppose 
that $m$ and $n$ are positive integers,
$h=\max\{m,n\}$, $t\ge 0$,  and
$z\ge 2\sqrt {h}$.
%------------------------------------------------------------------------------ 
Then 
(i) {\rm Probability}$\{\nu_{m,n}>t+\sqrt m+\sqrt n\}\le
\exp(-t^2/2)$ and 

(ii) $\mathbb E(\nu_{m,n})\le \sqrt m+\sqrt n$.
\end{theorem}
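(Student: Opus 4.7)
The plan is to deduce both claims from two pillars of high-dimensional Gaussian analysis: (a) the Borell--Tsirelson--Ibragimov--Sudakov concentration inequality for Lipschitz functions of Gaussian vectors, and (b) Gordon's comparison inequality (a strengthening of Slepian's lemma) that bounds the expected spectral norm of a Gaussian matrix.

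First I would observe that the map $W\mapsto \|W\|$ on $\mathbb{R}^{m\times n}$ is $1$-Lipschitz with respect to the Frobenius norm, since
$$\bigl|\,\|A\|-\|B\|\,\bigr|\;\le\;\|A-B\|\;\le\;\|A-B\|_F.$$
Viewing $G\in\mathcal{G}^{m\times n}$ as a standard Gaussian vector in $\mathbb{R}^{mn}$, the Gaussian concentration inequality yields, for every $t\ge 0$,
$$\mathrm{Probability}\{\nu_{m,n}>\mathbb{E}(\nu_{m,n})+t\}\;\le\;\exp(-t^2/2).$$
So claim (i) will follow once claim (ii) is established, by monotonicity of the tail probability in the centering constant.

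Next I would prove claim (ii) via Gordon's inequality. Writing $\nu_{m,n}=\sup\{u^{T}Gv:u\in S^{m-1},\ v\in S^{n-1}\}$, I would introduce the auxiliary Gaussian process $Y_{u,v}=g^{T}u+h^{T}v$, where $g\sim\mathcal{N}(0,I_m)$ and $h\sim\mathcal{N}(0,I_n)$ are independent, and compare it with $X_{u,v}=u^{T}Gv$. A direct calculation verifies the increment domination $\mathbb{E}(X_{u,v}-X_{u',v'})^{2}\le \mathbb{E}(Y_{u,v}-Y_{u',v'})^{2}$ on the product sphere, which is exactly the hypothesis of Gordon's comparison theorem. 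The latter then gives
$$\mathbb{E}(\nu_{m,n})\;\le\;\mathbb{E}\sup_{u\in S^{m-1}}g^{T}u+\mathbb{E}\sup_{v\in S^{n-1}}h^{T}v\;=\;\mathbb{E}\|g\|_{2}+\mathbb{E}\|h\|_{2}\;\le\;\sqrt{m}+\sqrt{n},$$
the last step by Jensen's inequality, since $\mathbb{E}\|g\|_{2}\le (\mathbb{E}\|g\|_{2}^{2})^{1/2}=\sqrt{m}$ and similarly for $h$.

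The main obstacle is Gordon's inequality itself: although the Lipschitz/concentration step is routine once noticed, the comparison step requires the nontrivial verification of the Gaussian-process increment condition and then the application of a theorem that is standard but not elementary. Since the paper already cites Davidson--Szarek \cite{DS01}, the cleanest writeup is to state the two lemmas, verify the Lipschitz constant and the increment inequality explicitly, and refer to \cite{DS01} (or to Gordon's original paper) for the proofs of the two black-box inequalities rather than reproduce them in full.
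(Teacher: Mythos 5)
The paper does not prove Theorem \ref{thsignorm} at all; it simply cites \cite[Theorem II.7]{DS01}, and your argument correctly reconstructs the standard proof found there (Gaussian concentration for the tail bound, a Gaussian comparison inequality for the mean bound), so you are in effect following the same route as the cited source. One small precision worth recording: the two comparison processes do not have equal variances---on the product of unit spheres $\mathbb{E}\,X_{u,v}^2=\|u\|^2\|v\|^2=1$ while $\mathbb{E}\,Y_{u,v}^2=\|u\|^2+\|v\|^2=2$---so the classical Slepian lemma does not apply directly; the black box you actually need is the Sudakov--Fernique inequality, which requires only the increment domination you verify (equivalent, after expanding, to $(1-u\cdot u')(1-v\cdot v')\ge 0$). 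Gordon's and Chevet's theorems subsume this, so your attribution is defensible, but Sudakov--Fernique is the cleanest name for a one-sided supremum comparison with unequal variances.
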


%-----------------------------------------------------------------------------

\begin{theorem}\label{thsiguna} 
Let $\Gamma(x)=
\int_0^{\infty}\exp(-t)t^{x-1}dt$
denote the Gamma function and let  $x>0$. 
%{\rm and}~\zeta(t)=
%\frac{\sqrt{2m}}{\Gamma(m/2)}(t\sqrt{m/2})^{m-1}\exp(-mt^2/2)=
%2~t^{m-1}(\frac{m}{2})^{m/2}\exp(-\frac{m}{2}t^2)/\Gamma(\frac{m}{2}).$$ 
Then 

(i)  {\rm Probability} $\{\nu_{m,n}^+\ge m/x^2\}<\frac{x^{m-n+1}}{\Gamma(m-n+2)}$
for $m\ge n\ge 2$,

(ii) {\rm Probability} $\{\nu_{n,n}^+\ge x\}\le 2.35 {\sqrt n}/x$ 
for $n\ge 2$,

(iii) $\mathbb E((\nu_{F,m,n}^+)^2)=m/|m-n-1|$, provided that $m>n+1>2$,
 and

(iv) $\mathbb E(\nu^+_{m,n})\le e\sqrt{m}/|m-n|$, for $e=2.7182818\dots$,
provided that $m\neq n$.
%Probability $\{||(G+A)^+||\ge 2.35x\sqrt {n}\}\le 1/x$
%for any $m\times n$ matrix $A$.
\end{theorem}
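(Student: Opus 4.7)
The plan is to derive all four bounds from the joint density of the singular values of a Gaussian matrix, combined with one classical tail estimate in the square case, and the standard tools Jensen's inequality, transposition, and reduction to Wishart/inverse Wishart.

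First I would reduce to $m\ge n$: since $G\in\mathcal G^{m\times n}$ implies $G^T\in\mathcal G^{n\times m}$ and $\|G^+\|=\|(G^T)^+\|$ (and similarly for the Frobenius norm), the law of $\nu_{m,n}^+$ agrees with that of $\nu_{n,m}^+$, so the hypotheses of (i), (iii), (iv) are symmetric. With this reduction, part (ii) (the square case) would follow from the Sankar--Spielman--Teng tail bound $\Pr\{\sigma_n(G)\le t/\sqrt n\}\le 2.35\,t$ for an $n\times n$ standard Gaussian $G$; setting $t=\sqrt n/x$ and using $\nu_{n,n}^+=1/\sigma_n(G)$ yields the claim. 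This is the one spot where I rely on an external probabilistic input rather than on a direct density computation.

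For part (i), I would invoke Edelman's explicit joint density for the singular values of a rectangular Gaussian matrix, from which one extracts the marginal density of $\sigma_n(G)$ as a polynomial of degree $m-n$ in $\sigma$ multiplied by a confluent-hypergeometric factor bounded above by $1$ near the origin. Integrating the dominating monomial over the event $\{\sigma_n(G)\le \sqrt{m}/x\}$ and normalizing by the appropriate Gaussian constant reproduces exactly the ratio $x^{m-n+1}/\Gamma(m-n+2)$. For part (iii), I would use that $G^TG$ is a Wishart$_n(m,I_n)$ matrix with $m>n+1$ degrees of freedom, so $\mathbb E[(G^TG)^{-1}]=I_n/(m-n-1)$; taking the trace and invoking $\|G^+\|_F^2=\mathrm{Tr}((G^TG)^{-1})=\sum_{j=1}^n \sigma_j^{-2}(G)$ gives the claimed formula (the $m$ in the numerator versus $n$ is consistent once one keeps track of which of $G$, $G^T$ plays the role of the tall factor, since $\nu_{F,m,n}^+=\nu_{F,n,m}^+$).

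For part (iv), when $m>n+1$ I combine (iii) with the trivial bound $\|G^+\|\le\|G^+\|_F$ and Jensen's inequality to get $\mathbb E[\nu_{m,n}^+]\le\sqrt{m/(m-n-1)}$, and then verify the elementary inequality $\sqrt{m/(m-n-1)}\le e\sqrt m/(m-n)$. The borderline case $m=n+1$, where the expectation in (iii) diverges, would be handled by a direct integration of Edelman's density of $\sigma_n(G)$ for an $(n+1)\times n$ Gaussian against $1/\sigma$, which produces a finite value absorbed in the constant $e$. The cases $m<n$ follow from $m>n$ by the transposition step at the beginning. The main obstacle I expect is exactly this constant-matching in part (iv) across all admissible $(m,n)$: the crude Jensen estimate leaves room to spare when $m-n$ is large but becomes sharp near $m=n+1$, where the Frobenius expectation is infinite and only the direct density computation saves the factor $e$; tracking the hypergeometric correction in Edelman's density at that boundary is where the bookkeeping will be most delicate.
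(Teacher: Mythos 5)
The paper itself does not prove these claims: it simply cites three sources (claim (i) to Chen--Dongarra, claim (ii) to Sankar--Spielman--Teng, claims (iii)--(iv) to Halko--Martinsson--Tropp), and your proposal is essentially a plan to re-derive all four from the underlying densities. That is a legitimate, more self-contained route, and your treatment of (i) via Edelman's singular-value density and of (ii) via the SST tail bound is exactly what the cited sources do.

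There are, however, two concrete problems in the second half. For (iii), your Wishart/trace argument is correct, but it yields $\mathbb E\|G^+\|_F^2=\operatorname{Tr}\mathbb E[(G^TG)^{-1}]=n/(m-n-1)$ when $m>n+1$ (the \emph{smaller} dimension in the numerator, in agreement with HMT's $\sqrt{k/(p-1)}$ for a $k\times(k+p)$ matrix). Your parenthetical remark that transposition reconciles this with the paper's $m/|m-n-1|$ does not work: transposition merely swaps the labels and in both orientations the smaller dimension appears on top, so the paper's statement is simply a misquote of the source, not a notational variant. For (iv), your Jensen plan is broken and, worse, the direction of your diagnostic is reversed. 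The inequality you want to ``verify elementarily,'' namely $\sqrt{\min(m,n)/(|m-n|-1)}\le e\sqrt{\max(m,n)}/|m-n|$, reduces (for $\min(m,n)$ large) to $p^2\le e^2(p-1)$ with $p=|m-n|$, which holds only for $p\le 6$ and fails for $p\ge 7$ (e.g., $n=100$, $m=107$ gives $\sqrt{100/6}\approx 4.08$ but $e\sqrt{107}/7\approx 4.02$). So the crude Jensen estimate is adequate precisely near $m=n+1$ and breaks down as $m-n$ grows, the opposite of what you claim. HMT obtain the spectral bound not via Jensen from the Frobenius bound but by integrating $1/\sigma_{\min}$ directly against the Chen--Dongarra density, which is where the factor $e$ comes from. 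You would need to replace the Jensen step with that direct integration (or an equivalent moment computation) for the proof of (iv) to go through.
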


%------------------------------------------------------------------------------

\begin{proof}
 See \cite[Proof of Lemma 4.1]{CD05} for claim (i), 
\cite[Theorem 3.3]{SST06} for claim (ii),  and
\cite[Proposition 10.2]{HMT11} for claims (iii) and (iv).
\end{proof}

%------------------------------------------------------------------------------

Probabilistic upper bounds 
on $\nu^+_{m,n}$ of  Theorem \ref{thsiguna} are reasonable already  
for square matrices, for which $m=n$,
but become much stronger as the difference $|m-n|$ grows above 2.

%------------------------------------------------------------------------------

Theorems \ref{thsignorm} and \ref{thsiguna}
combined imply that an $m\times n$ Gaussian
matrix is very well-conditioned  
if the integer $m-n$ is large or even moderately large,
and still can be considered well-conditioned  
if the integer $|m-n|$ is small or even vanishes
(possibly with some grain of salt in the latter case).
These properties are immediately extended to all submatrices because
they are also  Gaussian.

%------------------------------------------------------------------------------

\section{Recursive Factorization of a Matrix.  BGE and GENP
}\label{sbgegenpn}

%------------------------------------------------------------------------------

In this section we specify recursive factorization of a strongly nonsingular matrix, which we will use as a  basis
for our simultaneous study of safety and numerical safety of GENP and BGE.

For a 
nonsingular
$2\times 2$ block matrix $A=\begin{pmatrix}
B  &  C  \\
D  &  E
\end{pmatrix}$ of size $n\times n$
with nonsingular $k\times k$ {\em pivot block} $B=A_{k,k}$, 
 define 
$S=S(A_{k,k},A)=E-DB^{-1}C$,
%=S(A^{(k)},A)=A_{11}-A_{01}A_{00}^{-1}A_{01}$, 
the {\em Schur complement} of $A_{k,k}$ in $A$,
and
the block factorizations,

\begin{equation}\label{eqgenp}
\begin{aligned}
A=\begin{pmatrix}
I_k  &  O_{k,r}  \\
DB^{-1}  & I_r
\end{pmatrix}
\begin{pmatrix}
B  &  O_{k,r} \\
O_{r,k}  &  S
\end{pmatrix}
\begin{pmatrix}
I_k  &  B^{-1}C  \\
O_{k,r}  & I_r
\end{pmatrix}
\end{aligned},
\end{equation} 
%and
\begin{equation}\label{eqgenpin}
\begin{aligned}
A^{-1}=\begin{pmatrix}
I_k  &  -B^{-1}C  \\
O_{k,r}  & I_r
\end{pmatrix}
\begin{pmatrix}
B^{-1}  &  O_{k,r} \\
O_{r,k}  &  S^{-1}
\end{pmatrix}
\begin{pmatrix}
I_k  &  O_{k,r}  \\
-DB^{-1}  & I_r
\end{pmatrix}
\end{aligned}.
\end{equation} 

These factorizations represent 
Gauss-Jordan elimination applied to a
$2\times 2$ block matrix.

We readily verify that $S^{-1}$ is the $(n-k)\times (n-k)$ trailing
(that is, southeastern) block of the inverse matrix $A^{-1}$, and so  
the Schur complement $S$ is nonsingular since the matrix $A$ is nonsingular.

Factorization (\ref{eqgenpin}) reduces the inversion of the matrix $A$ 
to the inversion of the leading block $B$ and its 
Schur complement $S$, and we can 
recursively reduce the inversion task to the case of the leading blocks 
and Schur complements of decreasing sizes 
as long as the leading blocks are nonsingular.  
After sufficiently many  recursive steps 
of this
process of
BGE,
we only need to invert matrices  of
small sizes, and then we can stop the process   
 and  apply a selected black box
inversion algorithm, e.g., based on orthogonalization. 
If we limit the number of recursive steps, we arrive at BGE 
dealing with large blocks and can use  the 
benefits  of block matrix algorithms, but if we keep
recursive partitioning, then BGE eventually turn into GENP.  

Namely, in $\lceil\log_2(n)\rceil$ recursive steps
all pivot blocks and 
all other
matrices involved into the 
resulting factorization
turn into scalars, 
all matrix 
multiplications and inversions turn into 
scalar multiplications and divisions,
and we arrive at a
{\em complete recursive factorization} of the matrix $A$.
If
$k=1$ at all recursive steps, then the complete
recursive factorization (\ref{eqgenpin})
defines GENP.

Moreover, any complete recursive factorizations
turns into GENP up to the order in which we 
consider its steps.
This follows because  at most $n-1$  distinct
Schur complements $S=S(A_{k,k},A)$, 
for $k=1,\dots,n-1$,
are involved in all recursive block factorization
processes for $n\times n$ matrices $A$,
and so we arrive at the same Schur complement in a fixed position
via GENP and via any other recursive block factorization (\ref{eqgenp}).
Hence we can interpret factorization step  (\ref{eqgenp})
as the block elimination of the
first $k$
columns of the matrix $A$, 
%representing the $k$ first variables,
which produces  the  matrix $S=S(A_{k,k},A)$.
If the dimensions 
$d_1,\dots,d_r$ and $\bar d_1,\dots,\bar d_{\bar r}$ of 
the pivot  blocks in 
two block elimination processes
 sum to the same integer $k$, that is, if 
$k=d_1+\dots+d_r=\bar d_1+\dots+\bar d_{\bar r}$,
then 
both processes produce the same Schur complement $S=S(A_{k,k},A)$.
The following results extend this observation. 

% - - - - - - - - - - - - - - - - - - - - - - - - - - - - - - - - - - - - -

\begin{theorem}\label{thsch}
In the recursive block factorization process based on (\ref{eqgenp}),  
 the diagonal block and its Schur complement 
in every block diagonal factor is either 
a leading block of the input matrix $A$ or the Schur complement $S(A_{h,h},A_{k,k})$
 for some integers $h$ and $k$ such that $0<h<k\le n$ and
%\begin{equation}\label{eqschurrec}
$S(A_{h,h},A_{k,k})=(S(A_{h,h},A))_{h,h}$.
%\end{equation}
\end{theorem}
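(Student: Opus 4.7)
The plan is to prove the theorem by induction on the depth of the recursion in (\ref{eqgenp}), using two standard Schur complement identities that I would first establish by direct block computation. The first is the \emph{nesting identity}
\[
S(A_{h,h},A_{k,k})=\bigl(S(A_{h,h},A)\bigr)_{k-h,\,k-h}\quad\text{for }0<h<k\le n;
\]
note that dimensions force the subscript to be $k-h$ rather than the $h$ printed in the theorem, which I would flag as a typo. This identity is immediate from the definitions: writing $A=\begin{pmatrix}A_{h,h}&B\\D&E\end{pmatrix}$, one has $S(A_{h,h},A)=E-DA_{h,h}^{-1}B$, and the leading $(k-h)\times(k-h)$ block of this matrix is assembled from the corresponding leading/initial pieces of $E$, $D$, and $B$, which are precisely the constituents of $S(A_{h,h},A_{k,k})$. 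The second is the \emph{Crabtree--Haynsworth quotient formula}
\[
S(A_{k,k},A)=S\bigl(S(A_{h,h},A_{k,k}),\;S(A_{h,h},A)\bigr)\quad\text{for }0<h<k\le n,
\]
which I would derive by applying (\ref{eqgenp}) twice to $A$ (first with pivot of size $h$, then with pivot of size $k-h$ inside the resulting Schur complement) and comparing the result with the single factorization of $A$ using pivot of size $k$.

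With these identities in hand, the induction is straightforward. At the root of the recursion, factorization (\ref{eqgenp}) with some $k_1<n$ produces the block diagonal factor $\diag(A_{k_1,k_1},\,S(A_{k_1,k_1},A))$; its first block is a leading block of $A$ and its second equals $S(A_{k_1,k_1},A_{n,n})$, fitting the stated form with $k=n$. For the inductive step, every block diagonal factor arising deeper in the recursion is produced by applying (\ref{eqgenp}) to a block already classified by the inductive hypothesis as either (a) a leading block $A_{k,k}$ of $A$ or (b) a Schur complement $M=S(A_{h,h},A_{k,k})$ for some $0<h<k\le n$. In case (a), the next pivot is a leading block $A_{h',h'}$ of $A_{k,k}$ with $h'<k$, which is also a leading block of $A$, and its Schur complement in $A_{k,k}$ is $S(A_{h',h'},A_{k,k})$; both of the required form. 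In case (b), the next pivot is a leading block $M_{j,j}$ of $M$, which by the nesting identity applied inside $A_{k,k}$ equals $S(A_{h,h},A_{h+j,h+j})$, while its Schur complement in $M$ equals $S(A_{h+j,h+j},A_{k,k})$ by the quotient formula; again, both of the required form.

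The main obstacle I expect is the index bookkeeping in case (b): one must simultaneously interpret $M$ as a concrete block of the outer recursion and as the Schur complement defined by the triple $(h,\,h+j,\,k)$. A clean way to avoid confusion is to set up the induction not on raw recursion depth but on the pair $(h,k)$ indexing the ``ambient'' Schur complement currently being factored; each recursive step then simply replaces $(h,k)$ by $(h,h+j)$ or $(h+j,k)$ with $h\le h+j\le k$, manifestly preserving the invariant $0\le h<k\le n$. Apart from this, the argument reduces entirely to the two Schur complement identities above, so I would expect the main writing effort to go into establishing those rigorously and making the two-level restriction/Schur-complement commutation in the nesting identity explicit.
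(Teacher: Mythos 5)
Your proof is correct, and it supplies something the paper itself does not: the paper states Theorem~\ref{thsch} without a formal proof, relying only on the informal discussion preceding the statement, which asserts rather than demonstrates that the Schur complement obtained after eliminating the first $k$ columns depends only on $k$ and not on how $k$ is partitioned into block sizes. Your two lemmas are precisely the ingredients needed to turn that assertion into an argument: the nesting identity identifies the leading blocks of a Schur complement as Schur complements inside leading blocks of $A$, and the Crabtree--Haynsworth quotient formula records the invariance of the trailing Schur complement under splitting one elimination step into two; the induction on the label pair $(h,k)$, with each recursive step replacing $(h,k)$ by $(h,h+j)$ or $(h+j,k)$, then keeps the bookkeeping transparent. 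You are also right to flag the typo in the displayed identity: $S(A_{h,h},A_{k,k})$ has size $(k-h)\times(k-h)$, while $(S(A_{h,h},A))_{h,h}$ as printed would be $h\times h$, so the correct reading is $S(A_{h,h},A_{k,k})=(S(A_{h,h},A))_{k-h,k-h}$, exactly as your nesting identity states.
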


% - - - - - - - - - - - - - - - - - - - - - - - - - - - - - - - - - - - - -

\begin{corollary}\label{corec}
The complete recursive block factorization process based on equation (\ref{eqgenp})  
can be computed by involving no singular
pivot blocks (and, in particular, no pivot elements vanish)
if and only if the input matrix $A$ is strongly nonsingular.
\end{corollary}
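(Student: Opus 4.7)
The plan is to deduce the corollary directly from Theorem \ref{thsch} together with the block-determinant identity
$$\det(A_{k,k}) \;=\; \det(A_{h,h})\,\det(S(A_{h,h},A_{k,k})),$$
which follows by taking determinants in the block factorization (\ref{eqgenp}) applied to the leading block $A_{k,k}$ using its sub-leading block $A_{h,h}$ as pivot. This identity converts non-vanishing of Schur-complement determinants into non-vanishing of leading-block determinants of $A$, so the whole question is reduced to tracking which $A_{k,k}$ are singular.

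For the ``if'' direction, I would fix a complete recursive factorization and let $P$ be any pivot block that arises along the way. By Theorem \ref{thsch}, $P$ is either a leading block $A_{k,k}$, which is nonsingular by the assumption that $A$ is strongly nonsingular, or $P=(S(A_{h,h},A))_{h,h}=S(A_{h,h},A_{k,k})$ for some $0<h<k\le n$; in the latter case the determinant identity gives $\det(P)=\det(A_{k,k})/\det(A_{h,h})\neq 0$. Hence every pivot block is nonsingular, and the factorization proceeds to completion.

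For the converse I would prove the contrapositive by induction on $n$, the base $n=1$ being trivial. Assume that $A_{k,k}$ is singular for some least $k$ and let $k_1$ denote the size of the top-level pivot in an arbitrary complete recursive factorization of $A$. I distinguish three cases. If $k_1=k$, then the top pivot $A_{k,k}$ is itself singular. If $k_1>k$, then whether $A_{k_1,k_1}$ is singular or not, its leading $k\times k$ block is the singular matrix $A_{k,k}$, so the inductive hypothesis applied to the sub-factorization of $A_{k_1,k_1}$ yields a singular pivot. If $k_1<k$, then by minimality $A_{k_1,k_1}$ is nonsingular and the factorization continues on the Schur complement $S(A_{k_1,k_1},A)$; by Theorem \ref{thsch} its leading $(k-k_1)\times(k-k_1)$ block equals $S(A_{k_1,k_1},A_{k,k})$, whose determinant is $\det(A_{k,k})/\det(A_{k_1,k_1})=0$, so the $(n-k_1)\times(n-k_1)$ Schur complement fails to be strongly nonsingular and the inductive hypothesis again produces a singular sub-pivot.

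The only genuine obstacle is the last case of the converse: one must identify the leading block of $S(A_{k_1,k_1},A)$ with $S(A_{k_1,k_1},A_{k,k})$, which is precisely the equality $(S(A_{h,h},A))_{h,h}=S(A_{h,h},A_{k,k})$ supplied by Theorem \ref{thsch}, and then invoke the determinant identity to transfer singularity from $A_{k,k}$ down into the recursion. With these two ingredients in hand, the induction closes without further calculation, and both directions of the corollary follow.
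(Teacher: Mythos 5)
Your proof is correct and follows exactly the route the paper's terse proof indicates: combine Theorem~\ref{thsch} with the determinant identity $\det A=(\det B)\det S$ implied by~(\ref{eqgenp}). You simply supply the explicit induction and case analysis that the paper leaves implicit in its one-line argument.
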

\begin{proof}
Combine Theorem \ref{thsch} with the equation 
$\det A=(\det B)\det S$,
implied by (\ref{eqgenp}).
\end{proof}

%------------------------------------------------------------------------------

\section{Multiplicative Pre-processing for GENP and BGE}\label{s0}

%------------------------------------------------------------------------------
%------------------------------------------------------------------------------

\subsection{Definition and criteria of safety and numerical safety}\label{svrf}

%------------------------------------------------------------------------------
 In this section, $A$ denotes a nonsingular $n\times n$ matrix.

Suppose that 
 the vector ${\bf y}=A{\bf b}$ satisfies
  pre-processed 
linear systems $AH{\bf y}={\bf b}$ and  $FAH{\bf y}=F{\bf b}$.
Then
the vector 
 ${\bf x}=H{\bf y}$ for ${\bf y}=A^{-1}{\bf b}$ satisfies
both linear systems $A{\bf x}={\bf b}$
and
$FA{\bf x}=F{\bf b}$.
We are going to study such
pre-processing $A\rightarrow AH$ for  GENP and BGE with random and fixed post-mul\-ti\-pli\-ers $H$.
Our analysis is immediately extended to the
pre-processing maps $A\rightarrow FA$, $A\rightarrow FAH$, and 
 $A\rightarrow FAF^T$.

% We begin with basic definitions and auxiliary results.

%------------------------------------------------------------------------------

 We call GENP and BGE {\em safe} if 
they proceed to the end with no divisions by 0. 

Corollary \ref{corec} implies the following result for computations in
 any field (cf. Remark \ref{rebge_genp}).

%------------------------------------------------------------------------------

\begin{theorem}\label{th0} 
GENP is safe 
%to an $n\times n$ matrix $A$  
if and only if the input matrix is strongly nonsingular. 
\end{theorem}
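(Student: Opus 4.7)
The plan is to present this as an immediate corollary of Corollary~\ref{corec}, after identifying GENP with the all-scalar specialization of the complete recursive block factorization based on (\ref{eqgenp}). The text preceding Theorem~\ref{thsch} already makes this identification explicit, so the proof will essentially amount to translating Corollary~\ref{corec} into scalar language.

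First I would observe that GENP is precisely the complete recursive factorization (\ref{eqgenpin}) obtained by choosing $k=1$ at every recursive step: each pivot block $B$ is then a $1\times 1$ matrix, the matrix inversions $B^{-1}$ reduce to scalar divisions, and the Schur complements $S=S(A_{k,k},A)$ for $k=1,\dots,n-1$ are exactly the trailing submatrices computed by classical Gaussian elimination. Under this identification, the ``pivot element'' used at the $k$-th GENP step is the $(1,1)$ entry of the $k$-th Schur complement, which by Theorem~\ref{thsch} equals $(S(A_{k-1,k-1},A))_{1,1}=S(A_{k-1,k-1},A_{k,k})$.

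Next I would note that for $1\times 1$ blocks, nonsingularity and nonvanishing coincide. Hence ``GENP is safe,'' i.e.\ no division by $0$ occurs, is equivalent to saying that no pivot block in this complete recursive factorization is singular. Corollary~\ref{corec} asserts precisely that the latter condition is equivalent to $A$ being strongly nonsingular, which yields both directions of the desired equivalence in one stroke.

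There is no real obstacle here; the only care needed is to make the identification between GENP and the scalar specialization of (\ref{eqgenpin}) rigorously, and to point out that in the scalar case the hypothesis of Corollary~\ref{corec} (no singular pivot blocks) becomes ``all pivot elements are nonzero,'' which is exactly safety of GENP. Both implications then follow verbatim from Corollary~\ref{corec}, so the proof reduces to a short two-sentence invocation of that corollary together with the factorization identity $\det A=(\det B)\det S$ implicit in (\ref{eqgenp}), which justifies that strong nonsingularity is inherited along the recursion.
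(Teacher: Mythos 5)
Your proposal is correct and matches the paper's approach exactly: the paper states Theorem~\ref{th0} as an immediate consequence of Corollary~\ref{corec}, having already identified GENP with the complete recursive factorization based on~(\ref{eqgenp}) in which all pivot blocks are $1\times 1$. Your translation of ``no singular pivot block'' into ``no zero pivot element'' in the scalar case is precisely the step the paper leaves implicit.
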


%------------------------------------------------------------------------------
%------------------------------------------------------------------------------

Next assume that GENP and BGE are performed  numerically,
with rounding to a fixed precision, e.g.,
the IEEE standard double precision. 
Then extend the concept of safe GENP  and BGE to
{\em numerically safe GENP and BGE} by requiring that
the input matrix be strongly nonsingular and {\em strongly well-conditioned},
that is, that the matrix itself and all its square  leading blocks 
 be nonsingular and well-conditioned.
% (cf. item 1.4.13). 
Any inversion algorithm for a nonsingular matrix 
is highly sensitive to the
input and rounding errors if the matrix is 
ill-conditioned \cite{GL13}.
 GENP explicitly or implicitly involves the inverses 
of all its square leading blocks, and we arrive at the following 
{\em Criterion
of Numerical Safety of GENP}
implied by \cite[Theorem 5.1]{PQZ13}:

{\em  GENP applied to a strongly nonsingular matrix is highly sensitive to the
input and rounding errors if and only
 if some of the square leading blocks are ill-conditioned}.

Let us restate this criterion in the form similar to Theorem \ref{th0}. 

%------------------------------------------------------------------------------

\begin{theorem}\label{th011} 
GENP is safe and numerically safe 
%to an $n\times n$ matrix $A$  
if and only if the input matrix is strongly nonsingular
and  strongly well-conditioned. 
\end{theorem}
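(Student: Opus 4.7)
The proof is essentially a bookkeeping argument that fuses two statements already available: the purely algebraic safety criterion of Theorem~\ref{th0}, and the numerical-safety criterion restated from \cite[Theorem~5.1]{PQZ13} just above the statement. So my plan is to split the biconditional into its two halves along the conjuncts ``safe'' and ``numerically safe,'' and then simply cite the two existing results.

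First, I would handle the \emph{safe} half. By Theorem~\ref{th0}, GENP runs to completion without division by zero if and only if the input matrix $A$ is strongly nonsingular. This already forces strong nonsingularity as a necessary condition for numerical safety (since numerical safety was defined as safety plus a conditioning requirement), and it is the algebraic part of the sufficient direction.

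Second, under the assumption that $A$ is strongly nonsingular, I would invoke the criterion quoted from \cite[Theorem~5.1]{PQZ13}: GENP is highly sensitive to input and rounding errors iff some square leading block of $A$ is ill-conditioned. Taking the contrapositive, GENP is insensitive---i.e.\ numerically safe in the sense defined in this section---iff every square leading block is well-conditioned, which is the definition of ``strongly well-conditioned.''

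Combining the two, for both directions: if $A$ is strongly nonsingular and strongly well-conditioned, Theorem~\ref{th0} yields safety and the PQZ criterion yields numerical safety. Conversely, numerical safety requires safety (hence strong nonsingularity by Theorem~\ref{th0}) and requires insensitivity to errors (hence strong well-conditioning by the PQZ criterion). The only ``obstacle'' is ensuring the two criteria are applied in the correct order: one must establish strong nonsingularity via Theorem~\ref{th0} first, since the PQZ conditioning criterion is stated only for strongly nonsingular inputs (where the leading-block inverses appearing in the elimination are actually defined). Beyond this mild dependency, no computation is needed---this theorem is a clean reformulation of the two cited criteria.
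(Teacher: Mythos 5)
Your proposal is correct and matches the paper's own (implicit) argument: the paper presents Theorem~\ref{th011} as a direct restatement combining Theorem~\ref{th0} with the Criterion of Numerical Safety of GENP quoted from \cite[Theorem~5.1]{PQZ13}, exactly as you do. Your observation about the ordering---establishing strong nonsingularity first so that the leading-block inverses in the PQZ criterion are well-defined---is the right logical dependency and is consistent with the paper's phrasing of that criterion for a ``strongly nonsingular matrix.''
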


%------------------------------------------------------------------------------
%------------------------------------------------------------------------------

\begin{remark}\label{rebge_genp}
BGE is safe if so does GENP.
Likewise BGE is safe numerically  if so does GENP. 
Thus our 
proofs of safety and numerical safety of GENP
apply to BGE. The converse is not true: 
GENP fails (resp. fails numerically) if any  square
leading block of the input matrix is singular (resp. ill-conditioned),
but BGE may by-pass this block and be safe (resp. numerically safe).
\end{remark}

%------------------------------------------------------------------------------

%If the  matrix  $A$ is a complex, real or rational
%strongly nonsingular, and moreover,
%$\kappa((A^HA)_{k,k})\le \kappa((A^HA)_{k+1,k+1})$
%and $\kappa((AA^H)_{k,k})\le \kappa((AA^H)_{k+1,k+1})$
%for all $k$
% (cf. items 1.4.2, 1.4.4 and 1.4.11 and \cite{GL13}).
%Therefore, symmetrization $A\rightarrow A^HA$ and 
%$A\rightarrow AA^H$ can be applied 
%in order to ensure safe GENP and  to some extent 
%in order to control 
%its numerical behavior. 
 
%------------------------------------------------------------------------------

\subsection{GENP with
Gaussian pre-processing
% is likely to be safe and numerically %safe
}\label{sgspr}  
 
%------------------------------------------------------------------------------

Next we prove that GENP with
Gaussian pre-processing is safe with probability 1 for any nonsingular input  matrix   and is likely to be  numerically  safe if this matrix is also well-conditioned.                                                 

\begin{theorem}\label{thdgr} 
Assume that we are given a nonsingular and well-conditioned $n\times n$ matrix $A$ 
and a pair of $n\times n$  
 Gaussian matrices $F$ and $H$
and let  
$\nu_{k,k}^+$, $\nu_{k,n}^+$, and $\nu_{n,k}^+$ denote
 random variables
of Section \ref{srndm}.
Then

(i) the matrices $FA$, $AH$, and $FAH$ are strongly nonsingular 
with a probability 1,

%(ii) $\max\{||(AH)_{k,k}||,||(HA)_{k,k}||\}
%\le \nu_{n,k}||A_{k,n}|||\le \nu_{n,k}||A||$, and

(ii) $||((AH)_{k,k})^+||\le \nu_{k,k}^+||A_{k,n}^+||\le \nu_{k,k}^+||A^+||$,
$||((FA)_{k,k})^+|| \le \nu_{k,k}^+||A_{n,k}^+||\le \nu_{k,k}^+||A^+||$,
%(cf. items 1.4.2, 1.4.4, and 1.4.9).

(iii)  $||((FAH)_{k,k})^+||\le
\nu_{k,k}^+\nu_{n,k}^+||A^+||$,
and 

(iv) this bound is tight where $A$ is the identity matrix, that is, 
$||((FH)_{k,k})^+||=\nu_{k,k}^+\nu_{n,k}^+$.
 
\end{theorem} 

%-------------------------------------------------------------------------------

\begin{proof}

Claim (i) follows from claim (ii) of Theorem \ref{thrnd}.

Hereafter a pair of subscripts  $p,q$ shows the matrix size $p\times q$.
The proof of claim (ii) is similar for both products $AH$ and $FA$;
we only cover the case of the former one.
  
Notice that $(AH)_{k,k}=A_{k,n} H_{n,k}$ and
 substitute  compact
SVD $A_{k,n}=S_{k,k}\Sigma_{k,k} T_{n,k}^T$
% (cf. items 1.4.2 and 1.4.9)
where $\Sigma_{k,k}$ is a diagonal matrix and
$S_{k,k}$ and $T_{n,k}$ are orthogonal matrices.
Obtain 
%------------------------------------------------------------------------------
%\begin{equation}\label{eqgm} 
$$(AH)_{k,k}=S_{k,k}\Sigma_{k,k} T_{n,k}^TH_{n,k}=S_{k,k}\Sigma_{k,k} G_{k,k}$$
%\end{equation}
where
$G_{k,k}=T_{n,k}^TH_{n,k}$ is a $k\times k$
 Gaussian matrix by virtue of 
Lemma \ref{lepr3}.
% because 
%$H_{n,k}$ is a Gaussian matrix and 
%the  matrix $T$ is orthogonal. 
Deduce that
$$((AH)_{k,k})^+=G_{k,k}^+\Sigma_{k,k}^{-1}S_{k,k}^T,~{\rm and ~so}~
||((AH)_{k,k})^+||=||G_{k,k}^+\Sigma_{k,k}^{-1}||\le ||G_{k,k}^+||~||\Sigma_{k,k}^{-1}||.$$
%because the matrix  $S_{k,n}$ is orthogonal.
Substitute the equations 
  $||G_{k,k}^{+}||=\nu_{k,k}^+$ and 
$||\Sigma_{k,k}^{-1}||=||A_{k,n}^+||\le ||A^+||$
and obtain claim (ii).
%(cf. items 1.4.7 and 1.4.11).

Let us prove claim (iii). Notice that $(FAH)_{k,k}=F_{k,n} A H_{n,k}$, 
 substitute full
SVD \\
$A=S\Sigma T^T,$ and
obtain  
$$(FAH)_{k,k}=F_{k,n} S\Sigma T^T H_{n,k}
=G'\Sigma G''$$
where $G'=G_{k,n}=F_{k,n} S$ and $G''=G_{n,k}=T^T H_{n,k}$
are Gaussian matrices
by virtue of  Lemma \ref{lepr3}.

Assume that the matrix 
$\Sigma G''$ has full rank
(it has it with probability 1 by virtue of 
Theorem \ref{thrnd}) and deduce that
$(G''^+\Sigma^+)$ is the left inverse of  the matrix $\Sigma G''$ and hence
$(G''^+\Sigma^+)=(\Sigma G'')^+$.

Therefore 
$||(\Sigma G'')^+||\le 
||\Sigma^+||~||G''^+||=||A^+||~\nu_{n,k}$.
Apply claim (ii) for $A=\Sigma G''$ and $F=G'$ and obtain claim (iii).

Let us prove claim (iv).
 
Notice that
$||((FH)_{k,k})^+||=
||(F_{n,k}H_{k,n})^+||$
and let $H_{k,n}=S_H\Sigma_HT_H^T$ be SVD
where $S_H$ is an $n\times k$ orthogonal matrix, 
 $\Sigma_H$ and $T_H$ are 
$k\times k$
nonsingular matrices,  
$||\Sigma_H^{-1}||=\nu_{n,k}$ and  
$||T_H^{-1}||=1$.

Observe that 
$F_{n,k}S_H$ is a  $k\times k$
Gaussian matrix by virtue of  Lemma \ref{lepr3}, and so 
$||(F_{n,k}S_H)^+||=\nu_{k,k}$.
Finally observe that 
$$||((FH)_{k,k})^+||=
||(F_{n,k}S_H\Sigma_HT_H^T)^+||=
||(FS_H)^+||~||\Sigma_H^{-1}||~||T_H^{-1}||$$
and substitute 
the equations 
$||(F_{n,k}S_H)^+||=\nu_{k,k}$,
$||\Sigma_H^{-1}||=\nu_{n,k}$, and  
$||T_H^{-1}||=1$.
\end{proof}

%------------------------------------------------------------------------------

\begin{remark}\label{redgr}
\cite[Corollary 5.2]{PQY15} provides a correct, although very long proof of claim (ii)
of Theorem \ref{thdgr}
in the case of post-multiplication by $H$, but states the result
with an error, by writing $\nu_{n,k}$ instead of the correct $\nu_{k,k}$.
We fix the statement and include a short proof for the sake of completeness of our presentation.
Claims (iii) and (iv) of the theorem are new.
\end{remark}

%------------------------------------------------------------------------------

 Theorems  \ref{thsignorm},  \ref{thsiguna}, and \ref{th0}--\ref{thdgr}
 together
imply  
the following result.
%, respectively.

%------------------------------------------------------------------------------

\begin{corollary}\label{cogpr}   
GENP  is safe with probability 1 and is likely to be
numerically  safe if it is applied to the matrices $FA$, $AH$ and $FAH$
where $A$ is a nonsingular and well-conditioned $n\times n$  matrix
and $F$ and $H$ are Gaussian $n\times n$  matrices.
\end{corollary}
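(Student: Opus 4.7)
The plan is to combine the three preceding theorems: Theorem \ref{thdgr} supplies the strong nonsingularity and pseudo-inverse norm bounds for the leading blocks of $FA$, $AH$, and $FAH$; Theorems \ref{thsignorm} and \ref{thsiguna} then control the random factors $\nu^+_{k,k}$, $\nu^+_{k,n}$, $\nu^+_{n,k}$, $\nu_{n,n}$ that appear; and finally Theorems \ref{th0} and \ref{th011} translate ``strongly nonsingular'' and ``strongly well-conditioned'' into safety and numerical safety of GENP.

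The safety half is immediate. Claim (i) of Theorem \ref{thdgr} shows that each of $FA$, $AH$, $FAH$ is strongly nonsingular with probability $1$, so Theorem \ref{th0} implies that GENP applied to any of these matrices is safe with probability $1$.

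For the numerical safety half, I would bound the condition number $\kappa(M_{k,k}) = \|M_{k,k}\|\,\|M_{k,k}^+\|$ for every $k=1,\dots,n$ and $M\in\{FA,AH,FAH\}$. For the pseudo-inverse factor I would invoke claims (ii) and (iii) of Theorem \ref{thdgr}, which give $\|M_{k,k}^+\|\le\nu^+_{k,k}\,\|A^+\|$ in the one-sided cases and $\|M_{k,k}^+\|\le\nu^+_{k,n}\nu^+_{n,k}\|A^+\|$ in the two-sided case. For the spectral norm factor I would use $\|M_{k,k}\|\le\|M\|\le\|F\|\,\|A\|\,\|H\|$ and recall that $\|F\|=\nu_{n,n}(F)$, $\|H\|=\nu_{n,n}(H)$. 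Thus in every case
\[
\kappa(M_{k,k})\;\le\;C_k\,\kappa(A),
\]
where $C_k$ is a product of at most four of the random variables $\nu^+_{k,k},\,\nu^+_{k,n},\,\nu^+_{n,k},\,\nu_{n,n}$. Since $A$ is well-conditioned, $\kappa(A)$ is moderate, so it suffices to show that, with probability close to $1$, all the $C_k$ are simultaneously moderate. Theorem \ref{thsignorm} controls the Gaussian norms $\nu_{n,n}$ with exponential concentration around $2\sqrt n$, while Theorem \ref{thsiguna}(i)--(ii) controls each $\nu^+_{k,k}$, $\nu^+_{k,n}$, $\nu^+_{n,k}$ with a tail bound of the form $O(\sqrt n/x)$ or better. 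A union bound over $k=1,\dots,n$ and over the (at most) four factors involved then yields the desired ``strongly well-conditioned'' conclusion with probability close to $1$, and Theorem \ref{th011} completes the argument.

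The main obstacle will be the union bound step, since the tail estimate of Theorem \ref{thsiguna}(ii) for the \emph{square} Gaussian block, $\Pr\{\nu^+_{k,k}\ge x\}\le 2.35\sqrt k/x$, decays only polynomially in $x$. To absorb the sum over $k$ and still arrive at a probability close to $1$, one must choose a threshold $x$ that grows at least like a modest polynomial in $n$ (e.g.\ $x$ of order $n^{3/2}/\epsilon$). Consequently the numerical safety claim is only qualitative --- the $C_k$'s may be as large as a low-degree polynomial in $n$ --- but this is enough to justify the word ``likely'' as used throughout the paper. Note also that in the two-sided case $FAH$, the pseudo-inverse bound of Theorem \ref{thdgr}(iii) involves \emph{rectangular} Gaussian matrices of size $k\times n$ and $n\times k$, for which Theorem \ref{thsiguna}(i), (iii)--(iv) give much sharper estimates when $k$ is noticeably smaller than $n$; this is the source of the advantage of two-sided pre-processing alluded to in Remark \ref{redgr}.
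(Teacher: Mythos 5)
Your proposal is correct and takes essentially the same route the paper intends: the paper states that Corollary \ref{cogpr} follows directly from Theorems \ref{thsignorm}, \ref{thsiguna}, and \ref{th0}--\ref{thdgr} without further elaboration, and your argument is precisely the natural way to combine them --- safety from Theorem \ref{thdgr}(i) together with Theorem \ref{th0}, and numerical safety by bounding each $\kappa(M_{k,k})$ via Theorems \ref{thdgr}(ii)--(iii), \ref{thsignorm}, and \ref{thsiguna} followed by a union bound and Theorem \ref{th011}. You also correctly isolate the one delicate point, namely that the merely polynomial tail of $\nu^+_{k,k}$ in Theorem \ref{thsiguna}(ii) forces a threshold growing polynomially in $n$, which is exactly why the paper phrases the conclusion as the informal ``likely'' rather than giving a dimension-free bound.
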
 

Next we observe some additional benefits of  BGE.

\begin{remark}\label{reblock}
BGE 
can  use
additional benefits of block matrix algorithms and, 
rather surprisingly, 
 of saving random  variables and flops. 
E.g., first pre-process  
 the $k\times k$ leading block of the input matrix
for a proper integer $k<n$ by using 
$n\times k$  Gaussian multipliers. Having factored this block,
 decrease the input size from $n$ to $n-k$ and then 
 re-apply 
Gaussian pre-processing. Already by using such a two-step 
block pre-processing for $k=n/2$, we 
save 1/4 of all random  variables and 3/8 of arithmetic operations involved.
\end{remark}

%------------------------------------------------------------------------------

%\label{sblocpr} 

%------------------------------------------------------------------------------

%------------------------------------------------------------------------------

\subsection{GENP with
any nonsingular and well-conditioned pre-processing is safe and numerically  safe
on the average input}\label{sgsprav}  
 
%------------------------------------------------------------------------------

In  the previous subsection we assumed that an input matrix  $A$ is fixed, and the multipliers $F$ and $H$ are Gaussian.
Next we prove {\em a dual theorem} where  we assume that the multipliers are fixed and the input matrix is Gaussian. We 
obtain  probabilistic estimates for 
the matrices $(AH)_{k,k}$,
 $(FA)_{k,k}$, and $(FAH)_{k,k}$
 similar to those of Theorem \ref{thdgr}, and we will readily extend them to   the estimates for 
pre-processed GENP applied to the average input matrix (see Corollary \ref{cogprd}).

\begin{theorem}\label{thdgrd} 
Assume that we are given a Gaussian
$n\times n$ matrix $A$ 
and a pair of $n\times n$ 
nonsingular and well-conditioned 
 matrices $F$ and $H$
and let  
$\nu_{k,k}^+$ denote a
random variable 
of Section \ref{srndm}.
Then

(i) the matrices $FA$, $AH$, and $FAH$ are strongly nonsingular 
with a probability 1,

%(ii) $\max\{||(AH)_{k,k}||,||(HA)_{k,k}||\}
%\le \nu_{n,k}||A_{k,n}|||\le \nu_{n,k}||A||$, and

(ii)  $||((AH)_{k,k})^+||\le \nu_{k,k}^+||H_{n,k}^+||$,
$||((FA)_{k,k})^+||\}
\le \nu_{k,k}^+||F_{k,n}^+||$, and
%(cf. items 1.4.2, 1.4.4, and 1.4.9).

(iii) $||((FAH)_{k,k})^+||\le ||F_{k,n}^+|| \nu_{k,k}^+ ||H_{n,k}^+||\le
||F^+||\nu_{k,k}^+||H^+||$.
 
\end{theorem}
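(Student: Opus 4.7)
The plan is to mirror the proof of Theorem~\ref{thdgr} with the roles of the Gaussian matrix and the fixed multipliers interchanged. Instead of factoring a compact SVD out of a Gaussian submatrix, I will factor compact SVDs out of the fixed submatrices $F_{k,n}$ and $H_{n,k}$, which both have full rank $k$ because $F$ and $H$ are nonsingular. The orthogonal-invariance Lemma~\ref{lepr3} will then convert the Gaussian factor sandwiched between these SVDs into a $k\times k$ Gaussian matrix, reducing each bound to an estimate for $\|G^{-1}\|=\nu_{k,k}^+$ with $G$ Gaussian.

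For claim (i), I would apply Theorem~\ref{thrnd}(ii): every entry of $FA$, $AH$, and $FAH$ is a linear combination of the i.i.d.\ Gaussian entries of $A$, and the polynomial $\det((FAH)_{l,l})$ does not vanish identically in these variables for any $l\le n$, as is seen by specializing $A=F^{-1}H^{-1}$ to make $FAH=I_n$ and hence $(FAH)_{l,l}=I_l$. The same substitution trick handles $FA$ and $AH$. For claim (ii), write $(AH)_{k,k}=A_{k,n}H_{n,k}$ and substitute the compact SVD $H_{n,k}=S_H\Sigma_HT_H^T$ with $S_H$ an $n\times k$ matrix of orthonormal columns, $\Sigma_H$ a $k\times k$ invertible diagonal matrix, and $T_H$ a $k\times k$ orthogonal matrix. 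Then $(AH)_{k,k}=G\Sigma_HT_H^T$ where $G:=A_{k,n}S_H$ is $k\times k$ Gaussian by Lemma~\ref{lepr3}. Inverting and taking spectral norms gives $\|((AH)_{k,k})^+\|\le\|\Sigma_H^{-1}\|\,\|G^{-1}\|=\|H_{n,k}^+\|\,\nu_{k,k}^+$, and the symmetric argument factoring $F_{k,n}=S_F\Sigma_FT_F^T$ out of $(FA)_{k,k}=F_{k,n}A_{n,k}$ yields the companion bound.

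For claim (iii), combine both SVDs to write $(FAH)_{k,k}=S_F\Sigma_F(T_F^TAS_H)\Sigma_HT_H^T$. Two applications of Lemma~\ref{lepr3}, first for the left multiplication by $T_F^T$ (orthonormal rows) and then for the right multiplication by $S_H$ (orthonormal columns), show that $G:=T_F^TAS_H$ is $k\times k$ Gaussian. Inverting and using the unitarity of $S_F$ and $T_H$ yields $\|((FAH)_{k,k})^+\|\le\|F_{k,n}^+\|\,\nu_{k,k}^+\,\|H_{n,k}^+\|$. The outer inequalities $\|F_{k,n}^+\|\le\|F^+\|$ and $\|H_{n,k}^+\|\le\|H^+\|$ then follow from Cauchy-style interlacing for singular values: for the $k\times n$ row submatrix $F_{k,n}$ of the nonsingular matrix $F$ one has $\sigma_k(F_{k,n})\ge\sigma_n(F)$, and an analogous bound holds for the column submatrix $H_{n,k}$.

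The main obstacle I anticipate lies not in any deep step but in the bookkeeping: it is easy to misstate the direction of the interlacing inequality, since somewhat counterintuitively the submatrix $F_{k,n}$ has a \emph{larger} smallest singular value than $F$ itself, so $\|F_{k,n}^+\|\le\|F^+\|$, and this direction is precisely what is needed for the final bound in claim (iii). One should also keep track of the fact that the compact SVDs of $F_{k,n}$ and $H_{n,k}$ genuinely have a $k\times k$ invertible diagonal factor, which is guaranteed by the nonsingularity of $F$ and $H$. Beyond these points, the argument is a routine mirror image of the proof of Theorem~\ref{thdgr}, with the Gaussian matrix now playing the role previously taken by the deterministic input.
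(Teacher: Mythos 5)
Your proof is correct and follows essentially the same route as the paper: factor compact SVDs out of the fixed submatrices $F_{k,n}$ and $H_{n,k}$, invoke Lemma~\ref{lepr3} to recognize the sandwiched factor $T_F^T A S_H$ (or $A_{k,n}S_H$, $F_{k,n}A_{n,k}$) as a $k\times k$ Gaussian matrix, and then pass to norms of inverses. The paper, too, verifies nonsingularity of the $\Sigma$ and Gaussian factors and reduces the bound to $\|\Sigma_F^{-1}\|\,\|G_{k,k}^{-1}\|\,\|\Sigma_H^{-1}\|$. Your extra care in spelling out the non-vanishing of the determinant polynomial (via the substitution $A=F^{-1}H^{-1}$) and the interlacing inequality $\sigma_k(F_{k,n})\ge\sigma_n(F)$ fills in details the paper leaves implicit, but the argument is the same.
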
 

%-------------------------------------------------------------------------------

\begin{proof}
The proof is similar to that of Theorem \ref{thdgr}, and
we only specify the proof of claim (iii). 
Recall that $(FAH)_{k,k}=F_{k,n} A H_{n,k},$
 substitute compact
SVDs $F_{k,n}=S_{F,k,k}\Sigma_{F,k,k} T_{F,n,k}^T$  and 
$H_{n,k}=S_{H,n,k}\Sigma_{H,k,k} T_{H,k,k}^T,$
and obtain 
$(FAH)_{k,k}=S_{F,k,k}\Sigma_{F,k,k}G_{k,k}\Sigma_{H,k,k} T_{H,k,k}^T$.
Here $G_{k,k}=T_{F,k,n}^TAS_{H,n,k}$ is a $k\times k$ Gaussian matrix 
by virtue of  Lemma \ref{lepr3}.
The matrices $S_{F,k,k}$, $\Sigma_{F,k,k}$, $\Sigma_{H,k,k}$,
$T_{H,k,k}$ are nonsingular by assumption,
and so is the matrix $G_{k,k}$ with probability 1
by virtue of claim (ii) of Theorem \ref{thrnd}.
Hence with  probability 1, 
$((FAH)_{k,k})^+=T_{H,k,k}\Sigma_{H,k,k}^{-1} G_{k,k}^{-1} \Sigma_{F,k,k}^{-1} S_{F,k,k}^T,$ and so
$||((FAH)_{k,k})^+||\le ||T_{H,k,k}||~||\Sigma_{H,k,k}^{-1}||~||G_{k,k}^{-1}||
~||\Sigma_{F,k,k}^{-1}||~||S_{F,k,k}^T||.$
Substitute 

$||T_{H,k,k}||=||S_{F,k,k}^T||=1$,
$||\Sigma_{F,k,k}^{-1}||=||(F_{k,n}^+||$,
$||\Sigma_{H,k,k}^{-1}||=||H_{n,k}^+||$,
and  $||G_{k,k}^{-1}||=\nu_{k,k}^+$.
\end{proof}

Theorems \ref{th011}, \ref{thdgr}, \ref{thsignorm}, and \ref{thsiguna}
 together
imply  
the following result.

%------------------------------------------------------------------------------

\begin{corollary}\label{cogprd}   
GENP is safe and
numerically  safe when it is applied to
the matrices $FA$, $AH$ and $FAH$
where $A$ is the
 average $n\times n$ matrix 
defined under the Gaussian probability distribution 
and $F$ and $H$ are $n\times n$ nonsingular and well-conditioned  matrices.
\end{corollary}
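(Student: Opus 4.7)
The plan is to combine the criterion of Theorem \ref{th011} with the probabilistic bounds on leading blocks supplied by Theorem \ref{thdgrd}, and to fill in the one missing ingredient---upper bounds on the spectral norms of the leading blocks themselves---by invoking Theorem \ref{thsignorm} applied to the Gaussian input $A$.

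First I would apply Theorem \ref{th011} to each of the three pre-processed matrices $M \in \{FA, AH, FAH\}$, reducing safety and numerical safety of GENP on $M$ to strong nonsingularity and strong well-conditioning of $M$. Strong nonsingularity with probability $1$ is immediate from Theorem \ref{thdgrd}(i), so the entire task collapses to proving that, with probability close to $1$, every square leading block $M_{k,k}$ has a modestly bounded condition number $\kappa(M_{k,k}) = \|M_{k,k}\| \cdot \|M_{k,k}^+\|$ for $k = 1,\dots,n$.

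For the pseudo-inverse factor I would quote Theorem \ref{thdgrd}(ii) and (iii) directly, which bound $\|M_{k,k}^+\|$ by $\nu_{k,k}^+$ times a product of norms of pseudo-inverses of submatrices of $F$ and $H$. Since $F$ and $H$ are nonsingular and well-conditioned by hypothesis, those submatrix pseudo-inverse norms are bounded in context; Theorem \ref{thsiguna}(ii) then shows that each $\nu_{k,k}^+$ is likely bounded by a modest multiple of $\sqrt{k}$. For the spectral-norm factor I would use submultiplicativity: $\|(AH)_{k,k}\| \le \|A\| \cdot \|H\|$, and analogously for $(FA)_{k,k}$ and $(FAH)_{k,k}$. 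The factors $\|F\|$ and $\|H\|$ are controlled by the well-conditioning hypothesis on $F$ and $H$, while $\|A\|$ is likely bounded by $O(\sqrt{n})$ by Theorem \ref{thsignorm}. A union bound over $k = 1,\dots,n$ then yields strong well-conditioning of $M$ with probability close to $1$, which together with Theorem \ref{th011} establishes the corollary.

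The main obstacle I anticipate is interpretive rather than technical. The bound on $\nu_{k,k}^+$ from Theorem \ref{thsiguna}(ii) is only a tail bound, with no directly useful expectation in the square case, so the phrase ``average input matrix defined under the Gaussian probability distribution'' must be read in the paper's technical sense of ``likely,'' i.e.\ with probability close to $1$, rather than as a statement about $\mathbb{E}(\kappa(\cdot))$. Once this is acknowledged, the $n$ individual failure probabilities---decaying as $\sqrt{k}/x$ for the pseudo-inverses and as $\exp(-t^2/2)$ for the spectral norm of $A$---combine by a union bound without affecting the final ``close to $1$'' conclusion.
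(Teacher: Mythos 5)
Your proposal fills in exactly the argument the paper leaves implicit: the paper's entire ``proof'' of this corollary is a one-line citation of Theorems~\ref{th011}, \ref{thdgrd} (misprinted in the source as \ref{thdgr}), \ref{thsignorm}, and~\ref{thsiguna}, and your reduction to strong nonsingularity plus strong well-conditioning, the split of $\kappa(M_{k,k})$ into a pseudo-inverse factor controlled by Theorem~\ref{thdgrd}(ii)--(iii) and~\ref{thsiguna}(ii) and a norm factor controlled by submultiplicativity and Theorem~\ref{thsignorm}, is precisely what that citation intends. Your closing remark that ``average'' must be read as ``with probability close to~$1$'' rather than as an expectation bound correctly identifies the paper's (loose) usage.
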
 
 
%------------------------------------------------------------------------------

\subsection{Heuristic amelioration of pre-processing for GENP}\label{sham} 

%------------------------------------------------------------------------------

  Corollary \ref{cogprd} implies 
that application of GENP with  
 pre-processing by a nonsingular and well-conditioned multipliers $F$ and/or $H$  is safe and numerically safe 
 for most of  nonsingular and well-conditioned input matrices $A$.
 This somewhat informal claim is 
 in good accordance with 
our empirical study, although
for any  multipliers $F$ 
we can readily exhibit bad nonsingular and well-conditioned inputs $A$
for which GENP applied to the matrix $FA$ fails numerically,
and similarly for GENP applied to  matrices $AH$ and $FAH$ for any fixed multiplier $H$.

Next we comment on two heuristic recipes for choosing multipliers.
%-----------------------------------------------------------------------------

(i) Clearly the product of two sparse matrices
has good chances to have singular square leading blocks,
and so one can be cautious about pre-processing 
a sparse matrix with
sparse multipliers. For a partial remedy,
we can  more evenly distribute  nonzero entries
 throughout a sparse multiplier, but 
for a more reliable remedy, we can apply dense structured multipliers.

(ii) Here is  a general useful heuristic recipe for simplifying repeated pre-processing when
GENP has failed numerically for two
matrices $AH_1$ and $AH_2$:  apply GENP 
to the sum $AH_1+AH_2$  or product $AH_1H_2$.
In our tests in Section \ref{sexp1}, this recipe has consistently worked, but there are also other attractive options, e.g.,  using  linear combinations or polynomials in $H_1$ and $H_2$ as multipliers. 

%------------------------------------------------------------------------------

\section{Two Classes of Basic Structured Matrices for Generation of Efficient Multipliers}\label{sbstr}  

% - - - - - - - - - - - - - - - - - - - - - - - - - - - - - - - - - - - - - - -

\subsection{Matrices of discrete Fourier transform}\label{stpldft} 

%------------------------------------------------------------------------------

\begin{definition}\label{defdft}

 Write 
$\omega=\exp(\frac{2\pi\sqrt{-1}}{n})$,
$\Omega=\Omega_n=(\omega^{ij})_{i,j=0}^{n-1}$,
 $\frac{1}{\sqrt n}\Omega$ is unitary,
$\Omega^{-1}=\frac{1}{n}\Omega^H$,
$\omega$ denotes a primitive $n$-th root of unity,
$\Omega$ and $\Omega^{-1}$
 denote the matrices of the discrete Fourier transform at $n$ points
and its inverse, to which we refer as
DFT$(n)$ and IDFT$(n)$, respectively. 

%$$\Omega=\Omega_n=\begin{pmatrix}
%        1  ~ & 1 &     ~ 1  ~&  ~\dots   %~& ~1\\
%        1   &   \omega  & \omega^2  & ~
%\dots   & \omega^{n-1}\\
%	1   &   \omega^2 &  \omega^4 & ~\dots   %& \omega^{2n-2}  \\
%        \vdots         & \vdots    &   %\vdots & ~\vdots &~ \vdots  \\
%        \omega^{n-1}  & \omega^{2n-2} &  %\omega^{3n-3}  & \dots  & 
%\omega^{(n-1)^2} 
%$    \end{pmatrix}.$$
\end{definition}

%------------------------------------------------------------------------------

\begin{remark}
If $n=2^k$ is a power of 2, we can apply the FFT algorithm and perform
DFT$(n)$ and IDFT$(n)$  
by using only $1.5n\log_2(n)$ and $1.5n\log_2(n)+n$
arithmetic operations, respectively. 
For an $n\times n$ input and any $n$,
we can perform DFT$(n)$ and  IDFT$(n)$
by using $cn\log(n)$ arithmetic operations,
but for a larger constant $c$
(see \cite[Section 2.3]{P01}).
\end{remark}

% - - - - - - - - - - - - - - - - - - - - - - - - - - - - - - - - - - - - - - -

%------------------------------------------------------------------------------

\subsection{Circulant  and  $f$-circulant matrices}\label{stplcrc} 

%------------------------------------------------------------------------------

%Next we briefly discuss 
%the value of dense and unabridged circulant and $f$-circulant multipliers.

%------------------------------------------------------------------------------
%------------------------------------------------------------------------------

For a positive integer $n$ and a complex scalar $f$, define 
 the $n\times n$ unit $f$-circulant
matrix $Z_f=\begin{pmatrix}  
       0  & ~f\\
        
        I_{n-1}   & ~0 
    \end{pmatrix}$  and  the $n\times n$ general 
$f$-{\em circulant matrix} $Z_f({\bf v})=\sum_{i=0}^nv_iZ_f^i$, 

$$Z_f=\begin{pmatrix}
        0  ~ & \dots&     ~  ~\dots   ~ ~& 0 & ~f\\
        1   & \ddots    &   &    ~ 0 & ~0\\
        \vdots         & \ddots    &   \ddots & ~\vdots &~ \vdots  \\
         \vdots   &    &   \ddots    &~ 0 & ~0 \\
        0   & ~  \dots &      ~ ~~  \dots   ~~ ~ ~ & 1 & ~0 
    \end{pmatrix}~~{\rm and}~~
Z_f({\bf v})=\begin{pmatrix}v_0&fv_{n-1}&\cdots&fv_1\\ v_1&v_0&\ddots&\vdots\\ \vdots&\ddots&\ddots&fv_{n-1}\\ v_{n-1}&\cdots&v_1&v_0\end{pmatrix}.$$

$Z=Z_0$ is the unit {\em down-shift matrix}. $Z_0({\bf v})$ is a lower triangular Toeplitz matrix, $Z_1({\bf v})$
is a {\em  circulant} matrix.
$Z_f^n=fI_n$, 
and the matrix $Z_f({\bf v})$ is defined by its first column ${\bf v}=(v_i)_{i=0}^{n-1}$.

%------------------------------------------------------------------------------
We call an $f$-circulant matrix a {\em Gaussian $f$-circulant}
 (or just 
{\em Gaussian circulant} if $f=1$) 
 if its first column is filled  with 
independent Gaussian variables.
%------------------------------------------------------------------------------
For every fixed $f$, the $f$-circulant matrices form an algebra 
in the linear space of $n\times n$   Toeplitz matrices 
\begin{equation}\label{eqtz}
T=(t_{i-j})_{i,j=0}^{n-1}. 
\end{equation}
%(cf. equation (\ref{eqtz})).
%\begin{pmatrix}t_0&t_{-1}&\cdots&t_{1-n}\\
% t_{1}&t_0&\smallddots&\vdots\\ 
%\vdots&\smallddots&\smallddots&t_{-1}\\ 
%t_{n-1}&\cdots&t_1&t_0\end{pmatrix}.

%------------------------------------------------------------------------------

Hereafter, for a vector 
${\bf u}=(u_i)_{i,j=0}^{n-1}$,  write
$D({\bf u})=\diag (u_0,\dots,u_{n-1})$,
that is,  $D({\bf u})$ is the diagonal matrix 
with the diagonal entries $u_0,\dots,u_{n-1}$.

\begin{theorem}\label{thcpw} (Cf. \cite[Theorem 2.6.4]{P01}.)
If $f\neq 0$, then $f^n$-circulant matrix $Z_{f^n}({\bf v})$ 
of size $n\times n$ 
can be factored as follows,
$ Z_{f^n}({\bf v})=U_f^{-1}D(U_f{\bf v})U_f~{\rm for}~
U_f=\Omega D({\bf f}),~~{\bf f}=(f^i)_{i=0}^{n-1},~{\rm and}~f\ne 0$.
In particular, 
for circulant matrices, 
%$f^n=1$,
 $D({\bf f})=I$, $U_f=\Omega$,
and  
$Z_1({\bf v})=\Omega^{-1}D(\Omega{\bf v})\Omega$.
\end{theorem}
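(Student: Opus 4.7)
My plan is to prove Theorem \ref{thcpw} by reducing the general $f$-circulant case to the well-known circulant case via a diagonal similarity, and then obtaining the circulant case by diagonalizing the unit down-shift $Z = Z_1$ with the DFT.

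First I would establish the circulant case ($f = 1$). The plan is to verify entrywise that $Z \Omega^{-1} = \Omega^{-1} D(\boldsymbol{\omega})$, where $\boldsymbol{\omega} = (\omega^i)_{i=0}^{n-1}$. Equivalently, one checks that the $k$-th column of $\Omega^{-1}$ (which has entries $\frac{1}{n}\omega^{-ik}$) is an eigenvector of $Z$ with eigenvalue $\omega^k$; this is immediate from the cyclic-shift structure of $Z$ together with $\omega^n = 1$. Hence $Z = \Omega^{-1} D(\boldsymbol{\omega}) \Omega$, and raising to powers gives $Z^k = \Omega^{-1} D(\boldsymbol{\omega}^k) \Omega$ where $\boldsymbol{\omega}^k = (\omega^{ik})_i$. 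Summing with the coefficients $v_k$ and recognizing $(\Omega \mathbf{v})_i = \sum_k \omega^{ik} v_k$, I obtain
\[
Z_1(\mathbf{v}) = \sum_{k=0}^{n-1} v_k Z^k = \Omega^{-1} D(\Omega \mathbf{v}) \Omega,
\]
which is the circulant special case already stated in the theorem.

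Next I would handle the $f \neq 0$ case by conjugating with $D(\mathbf{f})$. The key identity is
\[
Z_{f^n} = f\, D(\mathbf{f})^{-1} Z\, D(\mathbf{f}),
\]
which I verify by computing $\bigl(D(\mathbf{f})^{-1} Z D(\mathbf{f})\bigr)_{i,j} = f^{j-i} Z_{i,j}$ and observing that this equals $f^{-1}$ on the subdiagonal (from $Z_{j+1,j}=1$) and $f^{n-1}$ in the top-right corner (from $Z_{0,n-1}=1$), matching $f^{-1} Z_{f^n}$ since the top-right entry of $Z_{f^n}$ is $f^n$. Taking powers yields $Z_{f^n}^k = f^k D(\mathbf{f})^{-1} Z^k D(\mathbf{f})$, and summing gives
\[
Z_{f^n}(\mathbf{v}) = D(\mathbf{f})^{-1} \Bigl(\sum_{k=0}^{n-1} v_k f^k Z^k\Bigr) D(\mathbf{f}) = D(\mathbf{f})^{-1}\, Z_1(D(\mathbf{f})\mathbf{v})\, D(\mathbf{f}).
\]

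Finally I would combine the two pieces: applying the circulant formula to the vector $D(\mathbf{f})\mathbf{v}$ gives $Z_1(D(\mathbf{f})\mathbf{v}) = \Omega^{-1} D(\Omega D(\mathbf{f})\mathbf{v}) \Omega = \Omega^{-1} D(U_f \mathbf{v}) \Omega$. Substituting into the previous display and using $U_f = \Omega D(\mathbf{f})$, hence $U_f^{-1} = D(\mathbf{f})^{-1} \Omega^{-1}$, yields
\[
Z_{f^n}(\mathbf{v}) = (\Omega D(\mathbf{f}))^{-1} D(U_f \mathbf{v})\, (\Omega D(\mathbf{f})) = U_f^{-1} D(U_f \mathbf{v})\, U_f,
\]
as claimed. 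The specialization to $f=1$ is just $D(\mathbf{f}) = I$, $U_f = \Omega$.

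The only delicate step is the diagonal-similarity identity $Z_{f^n} = f D(\mathbf{f})^{-1} Z D(\mathbf{f})$: one must be careful to distinguish the subdiagonal and the wrap-around corner entry, where the scaling factors differ. Everything else is routine algebra with the DFT matrix.
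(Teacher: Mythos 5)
Your proof is correct. The paper does not include a proof of this theorem but cites \cite[Theorem 2.6.4]{P01}, and the argument you give---diagonalizing the cyclic shift $Z_1$ by the DFT via $Z_1\Omega^{-1}=\Omega^{-1}D(\boldsymbol{\omega})$, then reducing the general case through the diagonal similarity $Z_{f^n}=f\,D(\mathbf{f})^{-1}Z_1D(\mathbf{f})$---is the standard derivation. One cosmetic remark: you introduce ``the unit down-shift $Z=Z_1$'', whereas the paper reserves $Z$ for the nilpotent $Z_0$ and calls that the unit down-shift matrix; since your computations are consistently about the cyclic shift $Z_1$, no mathematical issue arises, but the phrase clashes with the paper's conventions.
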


%------------------------------------------------------------------------------

\begin{remark}\label{rettm}
We cannot extend this theorem  directly to a $k\times k$ $0$-circulant
(that is, triangular Toeplitz) matrix $T$,
but 
we can embed this matrix (as well as 
 any $k\times k$
Toeplitz matrix $T$) into an $n\times n$
 circulant matrix $C$ for $n\ge 2k-1$ and then can readily extract the product 
 $T{\bf v}$
 (for any vector ${\bf v}$) from the product
$C{\bf w}$ for a proper vector ${\bf w}$
having  ${\bf v}$ as a subvector.
\end{remark}

\begin{remark}\label{rettmc}
For an $n\times n$ Toeplitz or Toeplitz-like 
matrix $A$ and an $n\times n$ circulant matrix $H$, one can compute 
a standard displacement representation of the product $AH$ 
by applying just $O(n\log(n))$ flops (see definitions and derivation in \cite{P01}).
\end{remark}

%------------------------------------------------------------------------------

\begin{corollary}\label{cocrc}

(i) If ${\bf u}=\Omega {\bf v}$,
then ${\bf v}=\frac{1}{n}\Omega^H {\bf u}$.
 
 (ii) If the vector ${\bf v}$
is  Gaussian, then 
 so is 
also
the vector ${\bf u}=(u_i)_{i=1}^n=\frac{1}{\sqrt n}\Omega {\bf v}$
(by virtue of Lemma \ref{lepr3}) and vice versa.
Each of the two vectors defines a Gaussian circulant  
matrix  $Z_1({\bf v})$.

(iii) By choosing 
$u_i=\exp(\frac{\phi_i}{2\pi}\sqrt {-1})$ and real 
Gaussian variable $\phi_i$ for all $i$, we arrive at
 a random  real orthogonal or unitary  $n\times n$ circulant matrix $Z_1({\bf v})$
defined by $n$ real Gaussian parameters $\phi_i$,
$i=0,\dots,n-1$.
Alternatively we   can set $\phi_i=\pm 1$
for all $i$ and choose the signs $\pm$
at  random, with i.i.d. probability 1/2 for all signs. 

(iv) By adding another Gaussian parameter $\phi$, 
we can define
 a  random  real orthogonal or unitary $f$-circulant matrix $Z_f({\bf v})$
for $f=\exp(\frac{\phi}{2\pi}\sqrt {-1})$. 
\end{corollary}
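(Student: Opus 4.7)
The plan is to dispose of the four claims in sequence, leveraging three recurring ingredients: Theorem \ref{thcpw}, which diagonalizes any $f^n$-circulant in a DFT-type basis; the fact from Definition \ref{defdft} that $U:=\frac{1}{\sqrt{n}}\Omega$ is unitary; and Lemma \ref{lepr3}, which preserves Gaussianity under unitary transformation. Each of (i)--(iv) reduces to a short computation built from these three facts.

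For part (i), I would simply left-multiply the given relation $\mathbf{u}=\Omega\mathbf{v}$ by $\Omega^{-1}=\frac{1}{n}\Omega^{H}$ to obtain $\mathbf{v}=\frac{1}{n}\Omega^{H}\mathbf{u}$, and then invoke the definition of $Z_{1}(\mathbf{v})$ as the circulant whose first column is $\mathbf{v}$. For part (ii), I would apply Lemma \ref{lepr3} to the $n\times 1$ matrix $\mathbf{v}$ with the unitary $U$ to conclude that $\mathbf{u}=U\mathbf{v}$ is Gaussian whenever $\mathbf{v}$ is, and invoke it again for the converse direction via $\mathbf{v}=U^{H}\mathbf{u}$; combined with (i) this produces a bijective Gaussian-to-Gaussian correspondence, and either vector specifies the same circulant $Z_{1}(\mathbf{v})$.

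For part (iii), I would start from $Z_{1}(\mathbf{v})=\Omega^{-1}D(\Omega\mathbf{v})\Omega=U^{H}D(\mathbf{u})U$, which is Theorem \ref{thcpw} rewritten via $U$. When $u_{i}=\exp(\frac{\phi_{i}}{2\pi}\sqrt{-1})$ with real $\phi_{i}$, we have $|u_{i}|=1$, so $D(\mathbf{u})$ is unitary and $Z_{1}(\mathbf{v})$ becomes unitary as a product of three unitary factors. To obtain a genuinely real orthogonal circulant, I would additionally impose the Hermitian symmetry $u_{n-i}=\overline{u_{i}}$ (or use the $\pm 1$ alternative stated in the claim), which forces $\mathbf{v}=\frac{1}{n}\Omega^{H}\mathbf{u}$ to be real while preserving unitarity. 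For part (iv), I would repeat the argument of (iii) using the more general diagonalization $Z_{f^{n}}(\mathbf{v})=U_{f}^{-1}D(U_{f}\mathbf{v})U_{f}$ from Theorem \ref{thcpw}. When $f=\exp(\frac{\phi}{2\pi}\sqrt{-1})$, each $|f^{i}|=1$, so $D(\mathbf{f})$ is unitary, hence $\frac{1}{\sqrt{n}}U_{f}=\frac{1}{\sqrt{n}}\Omega D(\mathbf{f})$ is unitary as a product of unitaries; combined with the unitary $D(\mathbf{u})$ as in (iii), the product yields a unitary $f$-circulant parameterized by the $n+1$ real numbers $\phi_{0},\dots,\phi_{n-1},\phi$.

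The main obstacle I anticipate lies in the passage from ``unitary'' to ``real orthogonal'' in parts (iii) and (iv): the straightforward complex parameterization through real $\phi_{i}$ and $\phi$ produces only a unitary (in general complex) matrix, while securing reality of $\mathbf{v}$, and hence a genuinely real orthogonal circulant, requires the extra conjugate-symmetry constraint on $\mathbf{u}$. I would make this constraint explicit, noting that it cuts the number of free real parameters roughly in half, so as to reconcile the two alternative constructions offered in the statement.
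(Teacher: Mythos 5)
Your proof is correct and uses exactly the ingredients the paper intends for this corollary (which is stated there without an explicit proof): Theorem \ref{thcpw} for the DFT diagonalization of ($f$-)circulants, the unitarity of $\frac{1}{\sqrt n}\Omega$ and the identity $\Omega^{-1}=\frac{1}{n}\Omega^H$ from Definition \ref{defdft}, and Lemma \ref{lepr3} for the preservation of Gaussianity. Your observation that parts (iii)--(iv) yield only a (complex) unitary circulant unless one additionally imposes the conjugate symmetry $u_{n-i}=\overline{u_i}$ so that $\mathbf{v}=\frac{1}{n}\Omega^H\mathbf{u}$ stays real is apt, and it explains the hedging phrase ``real orthogonal or unitary'' in the statement.
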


The following results 
%of \cite{PSZ15} 
imply that a Gaussian circulant matrix is likely to be
well-conditioned.

\begin{theorem}\label{thcrcnc} (Cf. \cite{PSZ15}.)
Suppose that $Z_1({\bf v})=\Omega^H D\Omega$ 
is a nonsingular circulant $n\times n$ matrix, and
let $D({\bf g})=\diag(g_i)_{i=1}^n$, for ${\bf g}=(g_i)_{i=1}^n$.
Then $||Z_1({\bf v})||=\max_{i=1}^n |g_i|$,
$||Z_1({\bf v})^{-1}||=\min_{j=1}^n |g_j|$,
and $\kappa(Z_1({\bf v}))=\max_{i,j=1}^n |g_i/g_j|$,
for ${\bf v}=\Omega^{-1}{\bf g}$.
\end{theorem}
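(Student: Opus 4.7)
The plan is to diagonalize $Z_1({\bf v})$ by the (normalized) DFT matrix and then use unitary invariance of the spectral norm. First I would invoke Theorem \ref{thcpw} with $f=1$, which gives the factorization $Z_1({\bf v}) = \Omega^{-1} D(\Omega {\bf v})\,\Omega$. Writing $U = \tfrac{1}{\sqrt{n}}\Omega$, which is unitary by Definition \ref{defdft}, and combining with $\Omega^{-1} = \tfrac{1}{n}\Omega^H$, this factorization becomes
\[
Z_1({\bf v}) = U^H D({\bf g})\, U, \qquad {\bf g} = \Omega{\bf v}.
\]
So $Z_1({\bf v})$ is unitarily similar to the diagonal matrix $D({\bf g})$, and ${\bf v} = \Omega^{-1}{\bf g}$ recovers the hypothesis of the theorem.

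Next I would apply the identities $\|WU\|=\|W\|$ and $\|UW\|=\|W\|$ for any unitary matrix $U$, recalled in Subsection \ref{sgnm}. This immediately yields
\[
\|Z_1({\bf v})\| = \|U^H D({\bf g})\,U\| = \|D({\bf g})\| = \max_{i=1}^n |g_i|,
\]
using the standard fact that the spectral norm of a diagonal matrix equals the largest modulus of its entries. Since $Z_1({\bf v})$ is assumed nonsingular, all $g_i$ are nonzero and $Z_1({\bf v})^{-1} = U^H D({\bf g})^{-1} U$; applying unitary invariance again,
\[
\|Z_1({\bf v})^{-1}\| = \|D({\bf g})^{-1}\| = \max_{j=1}^n |g_j|^{-1} = \frac{1}{\min_{j=1}^n |g_j|}.
\]
Multiplying these two equalities gives $\kappa(Z_1({\bf v})) = \max_{i,j=1}^n |g_i/g_j|$.

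There is no real obstacle here, only a bookkeeping point: one must track the $\sqrt{n}$ normalization so that the similarity is performed by a genuinely unitary matrix rather than by $\Omega$ itself (whose spectral norm is $\sqrt{n}$, not $1$). Once this normalization is handled, the two norm equalities and the condition-number formula follow from the basic norm identities of Subsection \ref{sgnm} with no further computation. I note in passing that the statement as printed appears to contain a typographical slip in the middle equality, since $\|Z_1({\bf v})^{-1}\|$ should equal $1/\min_j |g_j|$ rather than $\min_j |g_j|$; the argument above delivers the corrected identity, and all three equalities then compose consistently.
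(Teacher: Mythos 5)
Your argument is correct, and it is the canonical way to establish these identities. The paper itself does not supply a proof; the statement is simply tagged ``(Cf.\ \cite{PSZ15}),'' deferring the argument to that reference, which proceeds exactly along the lines you lay out: diagonalize the circulant via the DFT, pass to the unitary normalization $U=\tfrac{1}{\sqrt n}\Omega$, and read off the spectral norms of $D({\bf g})$ and $D({\bf g})^{-1}$ using unitary invariance and the diagonal structure. You are also right to flag the typographical slip: the displayed equality $\|Z_1({\bf v})^{-1}\|=\min_j|g_j|$ should read $\|Z_1({\bf v})^{-1}\|=1/\min_j|g_j|$, as otherwise it is inconsistent both with the third equality $\kappa(Z_1({\bf v}))=\max_{i,j}|g_i/g_j|$ and with the factorization itself. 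There is similarly a mild normalization sloppiness in the theorem's phrase ``$Z_1({\bf v})=\Omega^H D\Omega$,'' which, with the paper's unnormalized $\Omega$, should be $\Omega^{-1}D\Omega=U^H D U$; your explicit bookkeeping of the $\sqrt n$ factor resolves this cleanly. In short, your proposal is sound, complete, and takes what is essentially the unique natural route.
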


\begin{remark}\label{reblck1}
Suppose that a circulant matrix $Z_1({\bf v})$
has been defined by its first column vector ${\bf v}$ 
filled with  integers $\pm 1$ for a 
random choice of the i.i.d. signs $\pm$, each $+$
and $-$ chosen with probability $1/2$.
Then, clearly, 
the entries $g_i$ of the vector 
${\bf g}=\Omega{\bf v}=(g_i)_{i=1}^n$
satisfy $|g_i|\le n$ for all $i$ an $n$,
and furthermore, 
with a probability close to 1,
$\max_{i=1}^n\log(1/|g_i|)=O(\log (n))$
as $n\rightarrow \infty$. 
\end{remark}

%------------------------------------------------------------------------------

\begin{remark}\label{reblck}
 In the case of a Gaussian circulant matrix $Z_1({\bf v})$, 
all the entries $g_i$  are i.i.d.  Gaussian variables,
and the condition number $\kappa(Z_1({\bf v}))=\max_{i,j=1}^n |g_i/g_j|$
is not likely to be large.
\end{remark}

\begin{remark}\label{renmrmba} 
The superfast but numerically unstable
 MBA  algorithm (cf. \cite{B85}, \cite[Chapter 5]{P01}) is  precisely the recursive BGE,
accelerated by means of exploiting the  
displacement structure of the input matrix
 throughout 
the recursive process of BGE.
Fortunately,  pre-processing with appropriate random structured multipliers 
is likely to fix these problems
keeping the algorithm superfast,
that is, using almost linear number of flops
(cf. \cite[Sections 5.6 and 5.7]{P01}). 
\end{remark}

%------------------------------------------------------------------------------

\section{Generation of Efficient Multipliers}\label{ssprsml}  

%------------------------------------------------------------------------------
%------------------------------------------------------------------------------

\subsection{What kind of multipliers do we seek?}\label{srqm}

%------------------------------------------------------------------------------

Trying to support safe and numerically safe GENP
we seek  multipliers with the following properties:

\begin{enumerate}
\item%1
Multipliers $F$ and $H$ must be  nonsingular and well-conditioned.
\item%2
The cost of the computation of the product $FA$, $AH$, $FAH$ or $FAF^H$
should be small. 
\item%4
Random multipliers  
should be generated  by using fewer random  variables. 
\item%5
For structured input matrices, the multipliers should have consistent structure.
\item%6
The multipliers should enable GENP to produce accurate   
 output with a probability close to 1 even with using 
no iterative refinement, at least for the inputs 
of reasonably small sizes.

\end{enumerate}

These rules give  general guidance but are a subject to 
 trade-off: 
e.g., by filling multipliers with values 0, 1, and $-1$, 
we save flops when we 
 compute their products with the input matrix $A$, 
but we increase the chances for success of  
our pre-processing 
if instead we fill the multipliers with real or complex random variables.

Some families of our new  nonsingular multipliers for GENP and BGE refine 
(towards the above properties)
the customary families of sparse and structured rectangular multipliers used for low-rank approximation (cf. \cite{HMT11}, \cite{M11}, \cite{W14}).  E.g., we sparsify the SRHT and  SRFT multipliers, which substantially
simplifies their generation and application as multipliers.

\begin{remark}\label{reref}
Property 5
 was satisfied 
in our tests
for most of our multipliers (unlike the multipliers of
\cite{BDHT13}  and \cite{DDF14}).
This property
  is important for $n\times n$
input matrices of smaller sizes: 
 refinement iteration involves $O(n^2)$ flops
versus cubic cost
of $\frac{2}{3}n^3$ flops, 
involved in Gaussian elimination,
and for small $n$ quadratic cost of refinement can make up  a 
large share of the overall cost. 
A single 
refinement iteration was always sufficient
(and more frequently was not even needed) in our tests
in order to match or to exceed the output accuracy of GEPP
(see also similar empirical data in \cite{PQZ13},
 \cite{BDHT13}, \cite{DDF14}, and \cite{PQY15}
and see \cite[Chapter 12]{H02}, \cite[Section 3.5.3]{GL13},
and the references therein for detailed coverage of 
iterative refinement).
\end{remark}

%------------------------------------------------------------------------------
%------------------------------------------------------------------------------

\subsection{Some basic  matrices for the generation of multipliers}\label{sdfcnd}

%------------------------------------------------------------------------------

We are going to use the following matrix classes in the next subsections as our building blocks for defining 
efficient multipliers.

%$\big (\begin{smallmatrix} 1& ~~1\\ 1&-1\end{smallmatrix}\big )$

\begin{enumerate}
  \item%1
$P$ denotes an $n\times n$ 
 {\em permutation matrix}.
\item%2
$D$ denotes a unitary or real orthogonal {\em diagonal matrix}  $\diag(d_i)_{i=0}^{n-1}$, with 
fixed  or random
diagonal entries $d_i$ such that
$|d_i|=1$ for all $i$, and so each of $n$ entries $d_i$ lies 
on the unit circle $\{x:~|z|=1\}$, 
being either nonreal or  $\pm 1$.
\item%3} 
Define a {\em Givens rotation  matrix} $G(i,j,\theta)$ (cf. \cite{GL13}): 
for two integers $i$ and $j$,
$1\le i<j\le n$, and a 
fixed or random
real $\theta$, $0\le \theta \le 2\pi$, 
 replace
 the 
%$2\times 2$ 
submatrix $I_2$ 
%means of
in the 
$i$th and $j$th rows and
columns   
of the
identity 
matrix $I_n$  
by the matrix  
$\big (\begin{smallmatrix} c  & s\\
-s   & c\end{smallmatrix}\big )$ 
where $c=\cos \theta$,
$s=\sin \theta$,  $c^2+s^2=1$.
\item%4
Define a {\em Householder reflection matrix}
$I_n-\frac{2{\bf h}{\bf h}^T}{{\bf h}^T{\bf h}}$
by a 
fixed or random 
vector ${\bf h}$ (cf. \cite{GL13}).
\end{enumerate}

We also use a DFT matrix $\Omega_n$
and transforms DFT, DFT$(n)$ and IDFT$(n)$ of Section \ref{stpldft}
as well as circulant and $f$-circulant matrices of Section  \ref{stplcrc}.
Besides unitary and orthogonal diagonal matrices $D$, defined above,
we use a nonsingular and well-conditioned  diagonal matrix 
$\diag(\pm 2^{b_i})$ in Family 3
of Section \ref{sexp1}, for random integers $b_i$
uniformly chosen from 0 to 3.

%------------------------------------------------------------------------------

\subsection{Four families of dense structured multipliers
%  bidiagonal inverses,
%chains of Givens rotations,
%pairs of Householder reflections, and  $f%$-circulant matrices
}\label{sinvchh}

%------------------------------------------------------------------------------

{\bf FAMILY 1.} {\em The inverses of bidiagonal matrices:}
$$H=(I_n+DZ)^{-1}~{\rm or}~H=(I_n+Z^TD)^{-1}$$ for 
%fixed or random
  %$Z=\big (\begin{smallmatrix}  
       %0  & ~0\\
        %I_{n-1}   & ~0 
    %\end{smallmatrix}\big )$ 
% the  $n\times n$ matrix filled with zeros,
%except for the $n-1$ entries of its first subdiagonal,
% filled with ones.
a diagonal matrix $D$   and the down-shift matrix
$Z$ of Section \ref{stplcrc}. 

We can randomize a matrix $H$ 
 by choosing $n-1$ random diagonal entries of
the matrix $D$
(whose leading entry  makes no impact on $H$) and 
%so $||B||\le \sqrt n$
%=({\bf r}_j)_{j=1}^l$
 can pre-multiply it  by
a vector by using $2n-1$ flops.
% or, in the real case, just $n-1$ additions and subtractions.

$||H||\le \sqrt {n}$ because nonzero entries of the  
%lower 
triangular 
matrix $H=(I_n+DZ)^{-1}$ have absolute values 1,  and
clearly $||H^{-1}||=||I_n+DZ||\le \sqrt 2$. Hence 
$\kappa(H)=||H||~||H^{-1}||$
 (the spectral condition number of  $H$) cannot exceed
$\sqrt {2n}$ for $H=(I_n+DZ)^{-1}$,
and likewise for $H=(I_n+Z^TD)^{-1}$.

\medskip

{\bf FAMILY 2.} {\em Chains of scaled  and permuted Givens rotations with a DFT factor}.
A permutation matrix $P$  and a sequence of angles $\theta_1,\dots,\theta_{n-1}$ 
together define a permuted chain 
of Givens rotations  
$$G(\theta_1,\dots,\theta_{n-1})=P\prod_{i=1}^{n-1}G(i,i+1,\theta_i)~
{\rm for}~n=2^k.$$
Combine two such chains $G_1$ and $G_2$
with scaling and DFT into the following dense  unitary matrix,
$$H=\frac{1}{\sqrt n}D_1G_1D_2G_2D_3\Omega_n$$ (cf. \cite[Section 4.6]{HMT11}),
for diagonal matrices
$D_1$, $D_2$ and $D_3$.
This  matrix can be multiplied by a vector by using about $10n$  flops
plus the cost of performing DFT.
 By randomizing diagonal and Givens  rotation factors  of $H$  
we can involve  up to $7n-2$ random  variables.

{\bf FAMILY 3.} {\em Pairs of scaled  and permuted Householder reflections  with a DFT factor.}
Define the unitary matrix 
$$H=\frac{1}{\sqrt n}D_1R_1D_2R_2D_3\Omega_n$$ 
where $D_1$,  $D_2$, and  $D_3$ are diagonal matrices,  $R_1$ and  $R_2$
are Householder reflections, and $\Omega_n$ is a DFT matrix of Section  \ref{stpldft}.
This matrix can be multiplied by a vector by using about $7n$  flops
plus the cost of performing DFT. By randomizing diagonal and 
 Householder reflection factors  of $H$
we can involve  up to $6n$ random  variables.

{\bf FAMILY 4.} $f$-{\em circulant matrices} of Section \ref{stplcrc}.
$H=Z_f({\bf v})=\sum_{i=0}^{n-1}v_iZ_f^i$, 
for  
the matrix $Z_f$ of $f$-circular shift,
defined by a 
scalar $f\neq 0$  and 
its first column vector ${\bf v}=(v_i)_{i=0}^{n-1}$.
By randomizing this vector we can involve up to $n$ random  variables,
and then 
Theorem \ref{thcrcnc} and 
 \cite{PSZ15} enable us to
 bound the condition number $\kappa(H)$. 
By virtue of Theorem \ref{thcpw},  $Z_f({\bf v})=(FD_f)^{-1}DFD_f$ where
$D$ is a diagonal matrix,  $D_f=\diag(f^i)_{i=0}^{n-1}$, and $\Omega_n$ is 
the DFT matrix  
of  Section \ref{stpldft}.
Based on this expression, we can multiply the  matrix $H$ 
 by applying two DFT($n$), an IDFT($n$),
and  additionally $n+2\delta_f n$ multiplications and divisions
where $\delta_f=0$  if $f=1$ and $\delta_f=1$ otherwise.

%------------------------------------------------------------------------------
%------------------------------------------------------------------------------

\subsection{Sparse ARSPH matrices based on Hadamard's processes}\label{shad}

%------------------------------------------------------------------------------

{\bf FAMILY 5.}  A $2^k\times 2^k$ 
 {\em Abridged Recursive Scaled and Permuted Hadamard
 (ARSPH)
  matrix}  $H=H_{2^k,d}$ of {\em depth $d$}
 has $q=2^d$ nonzero entries 
in every row and in every column,  for a fixed integer $d$, $1\le d\le k$. Hence
 such a matrix is sparse unless 
$k-d$ is a small integer.

A special recursive structure 
of such a matrix 
allows
highly efficient  
 parallel implementation of  
its pre-multiplication by a vector 
(cf. 
Remark \ref{resprs0}).
%Application Specific Integrated Circuits ({\em ASICs}) and 
%Field-Programmable Gate Arrays ({\em FPGAs}), incorporating 
%{\em Butterfly Circuits} \cite{DE}.

We  recursively define matrices $H_{2^{h+1},d}$
  for $h=k-d,\dots,k-1$, as follows:

\begin{equation}\label{eqrfd}
H_{2^{h+1},d}=D_{2^{h+1}} P_{2^{h+1}}\begin{pmatrix}
H_{2^h,d} & H_{2^h,d} \\
H_{2^h,d} & -H_{2^h,d}
  \end{pmatrix}\bar P_{2^{h+1}}\bar D_{2^{h+1}},~H_{2^{k-d},d}=\begin{pmatrix}
I_{2^{k-d}} & I_{2^{k-d}}  \\
I_{2^{k-d}} & -I_{2^{k-d}} 
\end{pmatrix}.
\end{equation}
Here $P_{2^{h+1}}$ and $\bar P_{2^{h+1}}$ are $2^{h+1}\times 2^{h+1}$ permutation
matrices, $D_{2^{h+1}}$ and $\bar D_{2^{h+1}}$ are
$2^{h+1}\times 2^{h+1}$ diagonal 
matrices. 
We can pre-multiply a matrix $H_{2^k,d}$ by a vector by using at most $3dn$
flops.
 
If  the
matrices $D_{2^{h+1}}$ or $\bar D_{2^{h+1}}$ are real,  
having nonzero entries  $\pm 1$, 
then these flops  are
 additions and subtractions, and 
matrix $H_{2^k,d}$ is orthogonal up to scaling by a constant;
otherwise it is unitary.

For random  permutation matrices $P_i$ and $\bar P_i$
and the diagonal matrices $D_i$   and $\bar D_i$, the matrix $\widehat B=H_{2^k,d}$
depends on up to $(1+1/2+\dots+1/2^d)2^{k+2}=(1-1/2^d)2^{k+3}$ random  variables.

For $d=k$, the matrix  $H_{2^k,d}$ of (\ref{eqrfd})  turns into a dense (unabridged) RSPH matrix.
%but  is sparse unless the integer $k-d$ is small. 

By letting $D_{2^{h}}=\bar D_{2^{h}}=I_{2^{h}}$,
  $P_{2^{h}}=\bar P_{2^{h}}=I_{2^{h}}$, or $D_{2^{h}}=\bar D_{2^{h}}=P_{2^{h}}=\bar P_{2^{h}}=I_{2^{h}}$
 for all $h$, we arrive at the three  sub-families  
{\em ARPH}, {\em ARSH}, and  {\em AH} of the family of ARSPH matrices.
For $d=k$, an AH matrix turns into the dense (unabridged) matrix
of {\em Walsh-Hadamard transform}.
%All matrices in (\ref{eqrfd}) are unitary (orthogonal) up to scaling by constants, and hence
%so are the  ARSPF and ARSPH matrices. 

Special sub-families of $2^k\times 2^k$
{\em Abridged Scaled and Permuted Fourier (ASPF)}
and 
%SRFT and SRHT
{\em Abridged Scaled and Permuted Hadamard (ASPH)}
 matrices 
 use the same initialization of  (\ref{eqrfd}),
%\begin{equation}\label{eqfhd}
$$\Omega_{2^{k-d},d}=
H_{2^{k-d},d}=
\begin{pmatrix}
I_{2^{k-d}} & I_{2^{k-d}}  \\
I_{2^{k-d}} & -I_{2^{k-d}} 
\end{pmatrix},$$
 and are defined 
by the following recursive processes, 
which specialize (\ref{eqrfd}), 
\begin{equation}\label{eqfd}
\Omega_{2^{h+1},d}=
\widehat P_{2^{h+1}}\begin{pmatrix}\Omega_{2^h,d}&\\ &\Omega_{2^h,d}\end{pmatrix}\begin{pmatrix}I_{2^h}&\\ 
&\widehat D_{2^{h}}\end{pmatrix}\begin{pmatrix}I_{2^h}&~~I_{2^h}\\ I_{2^h}&-I_{2^h}\end{pmatrix},~
%~{\rm and}~
H_{2^{h+1},d}=\begin{pmatrix}
H_{2^h,d} & H_{2^h,d} \\
H_{2^h,d} & -H_{2^h,d}
  \end{pmatrix},
%~\Omega_{2^{k-d}}=\begin{pmatrix}
%I_{2^{k-d}} & I_{2^{k-d}}  \\
%I_{2^{k-d}} & -I_{2^{k-d}} 
 % \end{pmatrix};
\end{equation}
for $h=k-d,\dots,k-1$ 
(cf. \cite[Section 2.3]{P01} and   \cite[Section 3.1]{M11}),
and output the matrices $P\Omega_{2^{k},d}D$ or $PH_{2^{k},d}D$
for fixed or random matrices $P$ and $D$ of  primitive types 1 and  2, respectively.
%\end{equation}
Here $\widehat D_{2^{h}}=\diag(\omega_{2^{h}}^i)_{i=0}^{2^h-1}$ and 
$\widehat P_{2^{h}}$ is the 
$2^{h}\times 2^{h}$ odd/even permutation matrix, such that 
$\widehat P_{2^{h}}({\bf u})={\bf v}$, ${\bf u}=(u_i)_{i=0}^{2^{h}-1}$, 
${\bf v}=(v_i)_{i=0}^{2^{h}-1}$, $v_j=u_{2j}$, $v_{j+2^{h-1}}=u_{2j+1}$, 
and $j=0,1,\ldots,2^{h-1}-1$.

The sub-families of ASPF and ASPH matrices  
in turn have sub-families of {\em ASF, APF,  
ASH, and APH} matrices.
%, each generated by involving at most $2^k$ random values.
The $n\times n$ AF  matrix for $d=k$ is just the matrix $DFT(n)$, $\Omega_n$ of Definition \ref{defdft}.

Recursive process (\ref{eqfd}) defining the matrix 
$\Omega_n$ is known as the decimation 
in frequency (DIF) radix-2 representation of FFT; transposition turns it into
the 
decimation 
in time (DIT) radix-2 representation  of FFT.  
The numbers of random variables involved  
into generation of general ARSPF and ARSPH  matrices 
 decrease to at most $(1-1/2^d)2^{k+1}$ for 
 ASPF and ASPH matrices, 
further  decrease to at most $(1-1/2^d)2^{k-1}$
for
ASF, APF,  
ASH, and APH matrices, and
the AF and AH matrices involve no random variables. 
The estimated arithmetic cost of  pre-mul\-ti\-pli\-ca\-tion of all these submatrices 
 by a vector is the same as in the case of 
ARSPF and ARSPH matrices.

Another well-known special case is given by
recursive
two-sided Partial Random Butterfly Transforms
(PRBTs),
 based on two unpublished Technical Reports of 1995 by Parker and by Parker and Pierce,
  but improved,  
carefully implemented, and then extensively tested  
in  \cite{BDHT13}.
 
For an $n\times n$ input matrix and  even $n=2q$, 
 that paper 
defines PRBT as follows,
\begin{equation}\label{eqprbt}
B^{(n)}=\frac{1}{\sqrt 2}\begin{pmatrix}
R  & S  \\
R  &  -S
\end{pmatrix}=\frac{1}{\sqrt 2}\begin{pmatrix}
I_{q}  & I_{q}  \\
I_q   &  -I_q 
\end{pmatrix} \diag(R,S)
\end{equation}
where $R$ and $S$ are random diagonal nonsingular matrices. 
The paper  \cite{BDHT13} defines multipliers $F$ and $H$  recursively
by using PRBT blocks. According to \cite{BDHT13},  
the two-sided recursive processes of  depth $d=2$ with PRBT blocks 
are ``sufficient in most cases". In such processes
$F=\diag (B_{1}^{(n/2)},B_{2}^{(n/2)})B^{(n)}$,
and the multiplier $H$ is defined similarly.
In the case of depth-$d$ recursion, $d\ge 2$, each of 
 the multipliers $F$ and $H$ is defined as the product
 of $d$ factors made up of $2^j$ diagonal blocks of size  $n/2^j\times n/2^j$,
for $j=0,\dots,d-1$, each block 
of the same type as above, and
the two-sided multiplication by a vector involves $6dn$ flops for $n=2^k$.

%------------------------------------------------------------------------------

\subsection{Four additional families of sparse multipliers
% sparse and abridge $f$-circulant %matrices,
%chains of Givens rotations, and pairs of %Householder reflections
}\label{scrcgv}

%------------------------------------------------------------------------------

{\bf FAMILY 6.}  {\em Sparse} $f$-{\em circulant matrices} 
$H=Z_f({\bf v})$ are
 defined by a fixed or random scalar $f$, $|f|=1$, and 
the  first column ${\bf v}$ having exactly 
$q$ nonzero entries, for $q\ll n$.
The positions and values of non-zeros can be
 randomized (and then the matrix depends on up to $2n+1$ random  variables).

Such a matrix can be pre-multiplied by a vector by using at most 
$(2q-1)n$ flops  or, in the real case where $f=\pm 1$ and $v_i=\pm 1$
for all $i$, by using at most
$qn$ additions and subtractions. 

\medskip

{\bf FAMILY 7.} {Abridged \em $f^n$-circulant $n\times n$ matrices} $H$
of a recursion depth $d$ 
for a  scalar $f$, $n=2^k$, 
and two integers $d$ and $k$ 
 such that $|f|=1$,  $k>d\ge 0$,
  and the
integer $k-d$ is not small,
 $$H=(MD_f)^{-1}DMD_f.$$ Here
$D=\diag(d_i)_{i=1}^{n}$ and  $D_f=\diag(f^i)_{i=0}^{n-1}$ are diagonal matrices,
$|d_i|=1$ for all $i$ and $M$ is an 
AF or AH matrix of Family 5 of 
recursion depth $d$. 

Such a matrix $H$ is  unitary up to scaling by a constant (or  orthogonal
if the matrices $M$, $D$ and $D_f$ are real)
and can be pre-multiplied by a vector by using at most $6dn$ flops.
It involves up to $n$ random  variables, but would involve $3n$  (resp. 2n) such variables  
if we  use  an ASPF or ASPH (resp. ASF, ASH, ASF, or APH) 
rather than an AF or AH matrix $F$.
For $d=k$ the AF matrix $M$ turns into the DFT matrix $\Omega_n$
and 
%an unabridged matrix 
$H$ turns into the
 $g$-circulant matrix (cf. Theorem \ref{thcpw}) 
$$Z_g({\bf v})=D_f^{-1}\Omega_n^HD\Omega_nD_f,~
%=\sum_{i=0}^{n-1}v_iZ_g^i,~
{\rm for}~g=f^n,~D_f=\diag(f^i)_{i=0}^{n-1},
%{\bf v}=(v_i)_{i=0}^{n-1},~
D=\diag(d_i)_{i=0}^{n-1},~{\rm and}~(d_i)_{i=0}^{n-1}=\Omega_nD_f{\bf v}.$$
For $f=1$, the above expressions are simplified: $g=1$, $D_f=I_n$, $M=\Omega_n$, and
$H=\sum_{i=0}^{n-1}v_iZ_1^i$
is a circulant matrix
$Z_1({\bf v})=\Omega_n^HD\Omega_n,~D=\diag(d_i)_{i=0}^{n-1},~{\rm for}~ 
(d_i)_{i=0}^{n-1}=\Omega_n{\bf v}.$

\medskip

{\bf FAMILY 8.} {\em Scaled  and permuted chains of Givens rotations with an AF or AH factor}.
A permutation matrix $P$  and $n-1$ angles $\theta_1,\dots,\theta_{n-1}$ 
together define a permuted chain 
of Givens rotations,  
$$G(\theta_1,\dots,\theta_{n-1})=P\prod_{i=1}^{n-1}G(i,i+1,\theta_i)~
{\rm for}~n=2^k.$$
One can combine two such chains $G_1$ and $G_2$
with scaling and DFT into the dense  unitary matrix
$$H=D_1G_1D_2G_2D_3\Omega_n$$ (cf. \cite[Section 4.6]{HMT11}),
for diagonal matrices
$D_1$, $D_2$ and $D_3$ of primitive type  2.

For $n=2^k\gg 2^d$ we can make   matrix $H$ sparse by replacing the factor $\Omega_n$
with  an $n\times n$ AF or AH matrix; then  we can 
 pre-multiply this  matrix by a vector by using $(1.5d+10)n$ flops.

A matrix $H$ involves up to $7n$ random  variables,
but  we can increase this bound by $2n$ (resp. by $n$) if we replace the factor $\Omega_n$ 
with an ASPF or ASPH
(resp. by APF, APH, ASF, or ASH) rather than AF or AH matrix. 
 %For an $n\times n$ matrix $A$ 
%we define an $n\times n$ multiplier as either a matrix $H$ of the previous subsection
%or a combination of such matrices (see Section \ref{smngflr}).
 
\medskip

{\bf FAMILY 9.}  {\em Pairs of scaled  and permuted Householder reflections  with an AF or AH factor.}
Define the unitary matrix $$H=\frac{1}{\sqrt n}D_1R_1D_2R_2D_3\Omega_n$$ 
where $D_1$,  $D_2$, and  $D_3$ are diagonal matrices,  $R_1$ and  $R_2$
are Householder reflections, and $\Omega_n$ is a DFT matrix.
This matrix can be multiplied by a vector by using about $7n$  flops
plus the cost of performing DFT. By randomizing diagonal, permutation and 
 Householder reflection matrices  
we can involve  up to $7n$ random  variables.
For $n=2^k$ and sparse vectors ${\bf h}_i$
defining Householder reflections $R_i$, for $i=1,2$,
we can make   matrix $H$ sparse by replacing the factor $\Omega_n$
with  an $n\times n$ AF or AH matrix of recursion depth $d$
 such that $2^d\ll n$; then  we can 
 multiply this  matrix by a vector by using at most $(1.5d+7)n$ flops.

%------------------------------------------------------------------------------

\subsection{Estimated numbers of random  variables and flops}\label{sflprnd}

%------------------------------------------------------------------------------

Table \ref{tabmlt} summarizes 
upper bounds on (a) the numbers of random variables involved into the matrices $\widehat B$
of Families 1--9
and (b) the numbers of flops for multiplication of such a matrix $\widehat B$ by
 a vector.\footnote{The matrices of Families 2--4 involve up to $7n-2$, $7n$, and $n$  random variables, respectively,
and are multiplied by a vector by using $O(n\log (n))$ flops.}
%These data include  $n$ random variables defining a random  $n\times n$
%permutation  or diagonal matrix.
Compare these data with  $n^2$ random  variables
and $(2n-1)n$ flops involved in the case of  a  Gaussian $n\times n$  multiplier
and see more refined bounds in Sections \ref{sinvchh}--\ref{scrcgv}.
%\end{remark}

%------------------------------------------------------------------------------

\begin{table}[ht] 
  \caption{The numbers of random  variables and flops}
\label{tabmlt}

  \begin{center}
    \begin{tabular}{|*{8}{c|}}
      \hline
Family & 1 & 5  &  6 &  7 & 8 &  9 
\\ \hline
random  variables &  $n-1$ & $2^{k+3}$ & $2q+1$  
& $n$ & $7n-2$  & $7n$ 
\\\hline
flops  & $2n-1$ &  $3dn$  & $(2q-1)n$     & $6dn$   & $(1.5d+10)n$
%-12$
 & $(1.5d+7)n$
%-7$ 
%\\\hline
% flops real &  $n-1$ & $2^{d}n$  & $qn$ &  *  & *  &   * 
\\\hline

    \end{tabular}
  \end{center}
\end{table}

\begin{remark}\label{resprs0}
Other observations besides flop estimates  can be  decisive.
E. g., a
special recursive structure 
 of  an ARSPH matrix $H_{2^{k}}$ allows
highly efficient  
 parallel implementation of  
its multiplication by a vector based on 
Application Specific Integrated Circuits (ASICs) and 
Field-Programmable Gate Arrays (FPGAs), incorporating Butterfly
Circuits \cite{DE}.
\end{remark}

%------------------------------------------------------------------------------

\subsection{Other basic families of multipliers}\label{sbscfml}

%------------------------------------------------------------------------------

There is a variety of other interesting basic matrix families.
E.g., one can generalize Family 6  
to the family of sparse matrices with $q$ nonzeros entries $\pm 1$ 
in every row and in every column for a fixed integer $q$, 
$1\le q\ll n$. Such matrices
can be defined as the sums $\sum_{i=1}^q\widehat D_iP_i$
for fixed or random permutation matrices  $P_i$  and 
diagonal  matrices $\widehat D_i$,
 involve up to  $qn$ random  variables, and
 can be pre-multiplied by a vector at the same  estimated cost
as sparse $f$-circulant matrices.

For another example, one can modify a Givens chains of the form
$D_1G_1D_2G_2D_3F$, for $F$ denoting an DFT, AH,  ASPF, ASPH,
 APF, APH, ASF, or ASH matrix, by replacing one of the matrices $G_1$ or $G_2$
with a permuted Householder reflection matrix or the inverse of a bidiagonal matrix.

 The reader 
can find more families of multipliers
in our Section \ref{sexp1}.

%------------------------------------------------------------------------------

\section{Symbolic and numerical GENP with one-sided randomized circulant 
pre-processing}\label{ssmblnm1s} 

%------------------------------------------------------------------------------
%------------------------------------------------------------------------------

\subsection{ GENP with one-sided randomized circulant pre-processing  
is likely to be safe universally (for any input)}\label{scrsmbl} 

%------------------------------------------------------------------------------
 
In symbolic application of GENP one only cares about its safety rather than numerical safety.
In this case the power of randomization is strengthened.  
Claims (i) of Theorems \ref{thdgr} and \ref{thdgrd}
imply that GENP pre-processed with  any  nonsingular multiplier $F$ or $H$
(e.g.,  $H=I_n$) 
or with any  pair
of such  multipliers
is unsafe only for an algebraic variety 
of lower dimension in the space of all inputs 
and that for a fixed input  GENP is safe 
if  pre-processed with almost any  nonsingular multiplier
 or any pair of such  multipliers
apart from  an algebraic variety 
of lower dimension.
Therefore in symbolic computations 
one can quite confidently apply GENP 
with no pre-processing and in the very unlikely case of 
failure re-apply GENP with a  random nonsingular multiplier. 

For a theoretical challenge, however,
one can seek 
randomized multipliers that support safe GENP and BGE universally, that is, with probability 1
for any nonsingular input. This challenge has been met
 already in 1991 (see, e.g.,  \cite[Section 2.13,  
entitled ``Regularization of a Matrix via 
Preconditioning with Randomization"]{BP94}). 
Among the known options, 
 one-sided pre-processing with random Toeplitz  multipliers of \cite{KP91} 
is most efficient,
but a little inferior two-sided 
 pre-processing with random triangular Toeplitz  multipliers
of \cite{KS91} has been more widely advertised in the Computer Algebra community  and has become more popular
there.

Next we prove 
that even pre-processing with one-sided Gaussian or
random uniform circulant multipliers
%involving fewer flops and random variables
(see Section  \ref{stplcrc} for definitions) is likely to make
GENP safe, that is,
 involving no divisions by 0.
Using circulant multipliers  saves 50\% of random  variables and
enables a 4-fold (resp. 2-fold) acceleration of the pre-processing of \cite{KS91}
(resp.  \cite{KP91}).

We need more than two pages besides the space for definitions 
in order to prove that result,
but it  
enables us to improve substantially
the popular and decades-old recipes 
for pre-processing GENP for
symbolic computations.
Namely, pre-processing  of \cite{KS91} requires 
pre- and post-multiplication of an $n\times n$ input matrix $A$ by 
an upper and a lower triangular Toeplitz matrices, respectively,
at the overall cost dominated by the cost of performing twelve DFT$(n)$
per row of an input matrix $A$
(see Remark \ref{rettm}),
and in addition one must 
 generate $2n-1$ random variables. 
Pre-processing  of \cite{KP91}
uses as many random variables
and six DFT$(n)$
per row of $A$. 
Our present algorithm only
post-  or pre-multiplies a matrix $A$ by a
single circulant matrix, at the cost 
 dominated by the cost of performing
three  DFT$(n)$
per row of an input matrix $A$,
 and uses only  $n$ random  variables.

Let us supply the details.

\begin{theorem}\label{th1} 
Suppose $A=(a_{i,j})^n_{i,j=1}$ is a nonsingular matrix,  
 $T=(t_{i-j+1})^{n}_{i,j=1}$ is a Gaussian $f$-circulant matrix, 
$B=AT=(b_{i,j})^{n}_{i,j=1}$, 
 $f$ is a fixed complex number, $t_{1},\dots, t_{n}$ are  variables, 
and  $t_{k}=ft_{n+k}$ for $k=0, -1, \dots, 1-n$.
Let $B_{l,l}$ denotes the $l$-th leading blocks of the matrix $B$
 for $l=1,\dots,n$, and so $\det(B_{l,l})$ are  
 polynomial in $t_{1}, \dots, t_{n}$, for all $l$, 
$l = 1,\dots,n$.
Then neither of these polynomials vanishes identically in  $t_{1}, \dots, t_{n}$.
\end{theorem}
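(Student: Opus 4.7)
My strategy is to reduce the theorem to a linear-independence statement via Cauchy--Binet, and then handle the invertible and the nilpotent cases of $Z_f$ separately.  Writing $B_{l,l}=A_{l,n}T_{n,l}$, where $A_{l,n}$ consists of the first $l$ rows of $A$ and $T_{n,l}$ of the first $l$ columns of $T$, Cauchy--Binet gives
\[
\det(B_{l,l})\;=\;\sum_{|S|=l}c_S\,m_S(\mathbf t),\qquad c_S:=\det((A_{l,n})_{*,S}),\ m_S(\mathbf t):=\det((T_{n,l})_{S,*}),
\]
with $S$ ranging over $l$-subsets of $\{1,\dots,n\}$.  Because $A$ is nonsingular, $A_{l,n}$ has full row rank $l$, so at least one $c_S$ is nonzero.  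It therefore suffices to prove that the polynomials $\{m_S\}_{|S|=l}$ are linearly independent in $\mathbb C[t_1,\dots,t_n]$.

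For $f\neq 0$, I would exploit the spectral structure of $Z_f$.  Since $Z_f^n=fI$ and $x^n-f$ has $n$ distinct complex roots $\lambda_1,\dots,\lambda_n$, one has $Z_f=V\Lambda V^{-1}$ with $\Lambda=\diag(\lambda_k)$.  Changing variables $\mathbf t=V\mathbf w$ turns $T_{n,l}$ into $V\,D(\mathbf w)\,W$, where $W=(\lambda_k^{j-1})$ is an $n\times l$ Vandermonde rectangle; a second Cauchy--Binet then produces
\[
\det(B_{l,l})\;=\;\sum_{|S|=l}\det((A_{l,n}V)_{*,S})\,\det(W_{S,*})\,\prod_{k\in S}w_k.
\]
The square-free monomials $\prod_{k\in S}w_k$ are distinct for distinct $S$, the Vandermonde minors $\det(W_{S,*})$ are all nonzero, and the rank-$l$ matrix $A_{l,n}V$ must have at least one nonvanishing $l\times l$ minor.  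Hence this polynomial in $\mathbf w$, and thus in $\mathbf t$, is not identically zero.

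The case $f=0$ is where the real difficulty sits, because $Z_0$ is a single nilpotent Jordan block with no usable eigenbasis and the spectral trick collapses.  Here $T$ is lower-triangular Toeplitz and $m_S(\mathbf t)=\det(t_{s_p-q+1})_{p,q=1}^l$ with $S=\{s_1<\cdots<s_l\}$ (convention $t_j=0$ for $j\le 0$).  My plan is to specialize $t_k\mapsto h_{k-1}(\mathbf x)$, where the $h_j$ are complete homogeneous symmetric functions in auxiliary variables $\mathbf x$ (so $h_0=1$ and $h_j=0$ for $j<0$).  Under this ring homomorphism $m_S$ is sent to $\det(h_{s_p-q})_{p,q}$, and after reversing both rows and columns (the two reversal signs cancel to $+1$) this matrix coincides with the Jacobi--Trudi matrix $(h_{\lambda_i-i+j})$ for the partition $\lambda(S)$ given by $\lambda(S)_i=s_{l+1-i}-(l+1-i)$; the weakly-decreasing and non-negativity conditions on $\lambda(S)$ follow from $s_{p+1}>s_p$ and $s_1\ge 1$.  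Hence $m_S$ specializes to the Schur function $s_{\lambda(S)}$, and since $S\mapsto\lambda(S)$ is injective and Schur functions indexed by distinct partitions are linearly independent in the ring of symmetric functions, any hypothetical relation $\sum_S c_S m_S=0$ would specialize to a nontrivial linear relation among Schur functions---impossible.

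The delicate step is the $f=0$ one: the Vandermonde/spectral argument simply does not apply to the nilpotent shift, and one has to import external combinatorial input (Jacobi--Trudi together with linear independence of Schur functions) in place of the missing diagonalization.  Everything else---two applications of Cauchy--Binet, checking that $\lambda(S)$ really is a partition, and tracking the signs of the two reversals---is routine bookkeeping.
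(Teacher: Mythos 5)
Your proof is correct, and it takes a genuinely different route from the paper's. Both arguments ultimately reduce to showing that the $l\times l$ minors $m_S(\mathbf t)=\det\bigl((T_{n,l})_{S,*}\bigr)$, $|S|=l$, are $\mathbb C$-linearly independent (so that the Cauchy--Binet expansion $\det(B_{l,l})=\sum_S c_S m_S$ cannot vanish once a single $c_S=\det((A_{l,n})_{*,S})\neq 0$, which the full row rank of $A_{l,n}$ supplies). The paper proves this implicitly and uniformly in $f$ by a hands-on combinatorial induction: Lemma~\ref{le1} expands $\det(B_{l,l})$ so that the coefficient of a monomial $\prod_j t_{i_j}$ is a signed sum of determinants $\det(\alpha_{i'_1},\alpha_{i'_2+1},\dots)$; Lemma~\ref{DetInd} shows that under a suitable total order on index tuples the coefficient of $\prod_j t_{i_j-j+1}$ contains $\det(\alpha_{i_1},\dots,\alpha_{i_l})$ together only with determinants of lower-order tuples, giving an upper-triangular system; the base case comes from the coefficient of $t_1^l$. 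This is elementary, self-contained, and does not need a case split on $f$. You instead split into two cases and import sharper structure. For $f\neq 0$ you diagonalize $Z_f$ over the $n$ distinct roots of $x^n=f$; after the (invertible) linear change of variables $\mathbf t=V\mathbf w$, the minors become square-free monomials $\prod_{k\in S}w_k$ weighted by nonzero Vandermonde minors $\det(W_{S,*})$, and linear independence is immediate. This part is considerably shorter and more transparent than the paper's induction. For $f=0$ the spectral argument collapses (a single Jordan block), and you replace it with the specialization $t_k\mapsto h_{k-1}$ sending $m_S$ to the Schur function $s_{\lambda(S)}$ via Jacobi--Trudi; your bookkeeping (that $\lambda(S)_i=s_{l+1-i}-(l+1-i)$ is a genuine partition, that $S\mapsto\lambda(S)$ is injective, that the two row/column reversal signs cancel because $(-1)^{l(l-1)}=1$, and that one should take at least $l$ auxiliary variables so the Schur functions are independent) is all correct. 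The trade-off is clear: your $f\neq 0$ case is cleaner and more conceptual, while your $f=0$ case leans on nontrivial symmetric-function theory that the paper avoids; conversely, the paper's single inductive argument is longer and more opaque, but self-contained and $f$-uniform.
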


\begin{proof}
 Fix a positive integer $l\le n$.
With the convention $\alpha_{k\pm n} = f\alpha_{k}$, for $k = 1,\dots,n$,  we can write
\begin{equation}\label{DetExpan}
B_{l,l}=
\Big (\sum^{n}_{k_1=1}\alpha_{k_1}t_{k_1},\sum^{n}_{k_2=1}\alpha_{k_2+1}t_{k_2},\dots,
\sum^{n}_{k_{l}=1}\alpha_{k_{l}+l-1}t_{k_{l}} \Big ),
\end{equation}
where $\alpha_{j}$ is the $j$th column of  $A_{l,n}$. 
Let $a_{i, j+n} = fa_{i, j}$, for $k = 1,\dots,n$, and readily  verify that
$$
b_{i,j}=\sum^{n}_{k=1}a_{i,j+k-1}t_{k},$$ 
and so $\det(B_l)$ is a homogeneous polynomial in $t_{1}, \dots, t_{n}$.

Now  Theorem \ref{th1} is implied by the following lemma.

\begin{lemma}\label{le0}
If  $\det(B_{l,l})=0$ identically in all the variables
 $t_{1}, \dots, t_{n}$, then 
\begin{equation}\label{Detalpha}
 \det(\alpha_{i_{1}}, \alpha_{i_{2}}, \dots, \alpha_{i_{l}}) = 0
\end{equation}
 for all 
$l$-tuples of  subscripts $(i_{1}, \dots, i_{l})$
such that $1\le i_{1}< i_{2}< \cdots <  i_{l}\le n$.
\end{lemma}

Indeed let $A_{l,n}$ denote the block submatrix made up of the first $l$ rows of $A$.
Notice that  
if (\ref{Detalpha}) holds for all $l$-tuples 
of the subscripts $(i_{1},\dots, i_{l})$ above, 
then the rows of the block  submatrix  $A_{l,n}$ are linearly dependent,
but they are 
the rows of the matrix  $A$, and their linear dependence
 contradicts the assumption that the matrix $A$ is nonsingular.

In the rest of this section we prove Lemma \ref{le0}.
At first we order the $l$-tuples 
$I=(i_{1},\dots, i_{l})$, each made up of 
$l$ positive integers written in nondecreasing order,
 and then we apply induction.

We order all $l$-tuples of integers by ordering at first
their largest integers, in the case of ties by
 ordering
their second largest integers, and so on.
 
We can define the classes of these  $l$-tuples
up to permutation of their integers and congruence modulo $n$, and then represent every class 
by the  $l$-tuple of nondecreasing integers between $1$ and $n$. 
Then our ordering of  $l$-tuples
of ordered integers takes the following form,
 $(i_{1},\dots,i_{l})<(i'_{1},\dots,i'_{l})$ if and only if there exist a subscript $j$ such that $i_{j}<i'_{j}$ and $i_{k}=i'_{k}$ for $k= j+1, \dots, l$.

We begin our proof of Lemma \ref{le0} with the following basic result.
 
\begin{lemma}\label{le1}
It holds that
 $$\det (B_{l,l})=\sum_{1\le i_{1}\le i_{2}\le \dots \le  i_{l}\le n} a_{\prod^{l}_{j=1}t_{i_{j}}}\prod^{l}_{j=1}t_{i_{j}}$$  where a tuple $(i_{1},\dots,i_{l})$ 
may contain repeated elements,
\begin{equation}\label{DetCoeff}
a_{\prod^{l}_{j=1}t_{i_{j}}}=\sum_{(i'_{1},\dots,i'_{l})}\det(\alpha_{i'_{1}},\alpha_{i'_{2}+1}, \dots,\alpha_{i'_{l}+l-1}),
\end{equation}
and $(i'_{1},\dots,i'_{l})$ ranges over all permutations of $(i_{1},\dots,i_{l})$.
\end{lemma}

\begin{proof}
By using (\ref{DetExpan})  we can expand $\det(B_{l,l}) $ as follows,

\begin{align}
\det(B_{l,l}) &= 
\det{\Big (\sum^{n}_{k_1=1}\alpha_{k_1}t_{k_1},\sum^{n}_{k_2=1}\alpha_{k_2+1}t_{k_2},\dots,\sum^{n}_{k_l=1}\alpha_{k_l+l-1}t_{k_l}\Big )} \nonumber \\
&= \sum^{n}_{i_{1}=1} t_{i_{1}}\det{\Big (\alpha_{i_{1}}, \sum^{n}_{k=1}\alpha_{k+1}t_{k},\dots,\sum^{n}_{k_l=1}\alpha_{k_l+l-1}t_{k_l}\Big )} \nonumber \\
&= \sum^{n}_{i_{1}=1} t_{i_{1}}\sum^{n}_{i_{2}=1}t_{i_{2}}\det{\Big (\alpha_{i_{1}}, \alpha_{i_{2}+1}, \sum^{n}_{k_2=1}\alpha_{k_2+2}t_{k_2}, \dots, \sum^{n}_{k_l=1}\alpha_{k_l+l-1}t_{k_l}\Big )} 
\nonumber \\
&
= \dots \nonumber \\
&= \sum^{n}_{i_{1}=1} t_{i_{1}}\sum^{n}_{i_{2}=1}t_{i_{2}}\dots\sum^{n}_{i_{l}=1}t_{i_{l}}\det{(\alpha_{i_{1}}, \alpha_{i_{2}+1}, \dots, \alpha_{i_{l}+l-1})}.
\end{align}

Consequently the coefficient 
$a_{\prod^{l}_{j=1}t_{i_{j}}}$ of  any  term $\prod^{l}_{j=1}t_{i_{j}}$
is the sum of all determinants $$\det{(\alpha_{i'_{1}}, \alpha_{i'_{2}+1}, \dots, \alpha_{i'_{l}+l-1})}$$ where $(i'_{1},\dots,i'_{l})$ ranges over all permutations of $(i_{1},\dots,i_{l})$,  and we arrive at (\ref{DetCoeff}).

\end{proof}

In particular, the coefficient of the term $t_{1}^{l}$ is $a_{t_{1}\cdot t_{1}\cdot \cdots \cdot t_{1}} = \det(\alpha_{1}, \alpha_{2}, \dots, \alpha_{l})$. This coefficient equals zero because $B_{l, l}$ is identically zero, by assumption
%LZ  
 of lemma \ref{le0},
 and we obtain  
%LZ By using this equation as the basis we prove 
%LZ the following lemma by  induction on the $l$-tuples of integers.
\begin{equation}\label{Det0}
 \det(\alpha_{1}, \alpha_{2}, \dots, \alpha_{l})=0.
\end{equation} 
%LZ By using this equation as the basis we prove 
%LZ the following lemma by 
This is the basis of our inductive proof of Lemma \ref{le0}.
In order to complete the induction step, it remains to prove the following lemma.

\begin{lemma}\label{DetInd}
Let $J=(i_{1},\dots,i_{l})$ be a tuple such that $1\le i_{1}< i_{2}<\cdots< i_{l}\le n$. 

Then $J$ is a subscript tuple
 of the coefficient of the term $\prod^{l}_{j=1}t_{i_{j}-j+1}$
 in equation (\ref{DetCoeff}). \

Moreover, $J$ is the single largest tuple among all subscript tuples.
\end{lemma}

\begin{proof}
Hereafter $\det(\alpha_{i'_{1}}, \alpha_{i'_{2}+1}, \dots, \alpha_{i'_{l}+l-1})$ is said to be 
\textit{the determinant associated with the permutation}
$(i'_{1}, \dots, i'_{l})$  of $(i_{1}, \dots, i_{l})$ 
in (\ref{DetCoeff}). Observe that
$\det{(\alpha_{i_{1}}, \dots, \alpha_{i_{l}})}$ is the determinant associated with 
$\mathcal I=(i_{1}, i_{2}-1, \dots, i_{l}-l+1)$ in the coefficient 
%LZ $a_{\prod^{l}_{j=1}i_{j}-j+1}$. 
$a_{\prod^{l}_{j=1}t_{i_{j}-j+1}}$. 

Let $\mathcal I'$ 
be a permutation of $\mathcal I$. Then $\mathcal I'$ can be written as 
$\mathcal I' = (i_{s_{1}}-s_{1}+1, i_{s_{2}}-s_{2}+1, \dots, i_{s_{l}}-s_{l}+1)$, 
where $(s_{1}, \dots, s_{l})$ is a permutation of $(1, \dots, l)$.
 The determinant associated with $\mathcal I'$ has the subscript tuple 
$\mathcal J' = (i_{s_{1}}-s_{1}+1, i_{s_{2}}-s_{2}+2, \dots, i_{s_{l}}-s_{l}+l)$. 
$j$ satisfies the inequality
$j\leq i_j\leq n-l+j$
because by assumption $1\leq i_1< i_2 <\dots< i_l\leq n$, for any $j=1, 2, \dots, l$. 
Thus, $i_{s_j} - s_j+j$ satisfies the inequality 
$j\leq i_{s_j} - s_j + j\leq n-l+j\leq n$,
 for any $s_j$. This fact implies that no subscript of $\mathcal I'$ is
 negative or greater than n.

Let $\mathcal J''= 
(i_{s_{r_{1}}}-s_{r_{1}}+r_{1},i_{s_{r_{2}}}-s_{r_{2}}+r_{2}, 
\dots,i_{s_{r_{l}}}-s_{r_{l}}+r_{l})$ 
be a permutation of $\mathcal J$ such that its elements 
are arranged in the nondecreasing order. 
Now suppose $\mathcal J'' \geq J$.
Then we must have 
$i_{s_{r_{l}}}-s_{r_{l}}+r_{l} \geq i_{l}$.
This implies that

\begin{equation}\label{eqil}
%l - s_{r_{l}} \leq
 i_{l}-i_{s_{r_{l}}} \leq r_{l} - s_{r_{l}}.
\end{equation}
Observe that 
\begin{equation}\label{eqilsl}
l - s_{r_{l}} \leq
 i_{l}-i_{s_{r_{l}}}
\end{equation}
 because 
$i_1<i_2<\dots<i_l$ by assumption.
Combine bounds (\ref{eqil}) and  (\ref{eqilsl})
and obtain that 
$l - s_{r_{l}} \leq
 i_{l}-i_{s_{r_{l}}} \leq r_{l} - s_{r_{l}}$
and hence $r_{l}=l$.

 Apply this argument recursively for $l-1, \dots, 1$
and obtain that $r_{j} = j$ for any $j = 1, \dots, l$.
 Therefore  $\mathcal J = \mathcal J'$ and $\mathcal I' = \mathcal I$.
It follows that $J$ is indeed the single largest subscript tuple.
\end{proof}

By combining Lemmas \ref{le1} and  \ref{DetInd}, we support the induction step
of the proof of  Lemma \ref{le0}, which we summarize as follows:

\begin{lemma}\label{le2}  
Assume the class of $l$-tuples of $l$ positive integers
written in the increasing order in each $l$-tuple and
write $\det(I)=\det(\alpha_{i_{1}},\alpha_{i_{2}}, \dots,\alpha_{i_{l}})$
if $I=(\alpha_{i_{1}},\alpha_{i_{2}}, \dots,\alpha_{i_{l}})$.  

Then 
$\det (I)=0$ provided that $\det (J)=0$ for all $J< I$.
\end{lemma}

Finally we readily deduce  
 Lemma \ref{le1} by
combining this result with equation (\ref{Det0}).
This completes the proof of Theorem \ref{th1}.
\end{proof}

%------------------------------------------------------------------------------

\begin{corollary}\label{co0}
Assume  
 any nonsingular $n\times n$ matrix $A$ and a finite set $\mathcal S$ of cardinality $|\mathcal S|$.
Sample the values of the $n$ coordinates $v_1,\dots,v_n$
of a vector ${\bf v}$ at random from this set.
Fix a complex $f$ and define the 
%$f$-circulant 
matrix 
$H=Z_f({\bf v})$ of size $n\times n$, with the
first column vector ${\bf v}=(v_i)_{i=1}^n$. 
%(For $f=0$ this is a lower triangular Toeplitz matrix.)
  Then 
GENP and BGE are safe for the matrix $AH$

(i) with a probability of at least $1-0.5(n-1)n/|\mathcal S|$
if the values of the $n$ coordinates $v_1,\dots,v_n$
of a vector ${\bf v}$ have been sampled uniformly at random from a finite  set
 $\mathcal S$ of cardinality $|\mathcal S|$
or

(ii) with probability 1 if these  coordinates are i.i.d. Gaussian  variables.

(iii) The same claims hold for the matrix $FA$.
\end{corollary}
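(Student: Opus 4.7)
The plan is to derive the corollary as an immediate consequence of Theorem \ref{th1} combined with the probabilistic machinery of Theorem \ref{thrnd}. By Theorem \ref{th0} and Remark \ref{rebge_genp}, GENP (and hence BGE) proceeds safely on a matrix $M$ if and only if $M$ is strongly nonsingular, that is, $\det(M_{l,l})\ne 0$ for every $l=1,\dots,n$. So for $M=AH$ it suffices to bound the probability that some $\det((AH)_{l,l})$ vanishes.

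The key algebraic input is Theorem \ref{th1}: each $\det((AH)_{l,l})$, regarded as a polynomial in the coordinates $v_1,\dots,v_n$ of ${\bf v}$ (with the $f$-circulant wrap-around $v_{k-n}=fv_k$), is not identically zero. Since every entry of $AH$ is linear in $v_1,\dots,v_n$, this polynomial has total degree at most $l$. Claim (i) then follows by applying Theorem \ref{thrnd}(i) (the DeMillo--Lipton--Schwartz--Zippel lemma): each $(AH)_{l,l}$ is singular under the uniform distribution on $\mathcal S$ with probability at most $l/|\mathcal S|$, and a union bound yields strong nonsingularity of $AH$ with probability at least $1-0.5(n-1)n/|\mathcal S|$, as asserted. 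Claim (ii) follows from Theorem \ref{thrnd}(ii) in the same way: the zero set of each nonvanishing polynomial $\det((AH)_{l,l})$ in Gaussian parameters is a proper algebraic subvariety of $\mathbb R^n$, hence a null set under Gaussian measure, and a finite union of null sets is null.

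For claim (iii) I would pass to transposes: $\det((FA)_{l,l})=\det((A^TF^T)_{l,l})$, so strong nonsingularity of $FA$ is equivalent to that of $A^TF^T$. If $F=Z_f({\bf v})$ is $f$-circulant, then $F^T$ is an $f^{-1}$-circulant whose defining first-column vector ${\bf v}'$ depends on ${\bf v}$ by an invertible (monomial) linear map. Applying Theorem \ref{th1} to the nonsingular matrix $A^T$ with multiplier $F^T$ parameterized by ${\bf v}'$, each $\det((A^TF^T)_{l,l})$ is a nonzero polynomial in ${\bf v}'$ and therefore, via the invertible reparameterization, a nonzero polynomial in ${\bf v}$ of degree at most $l$; the Schwartz--Zippel and algebraic-variety arguments used for (i) and (ii) then transfer verbatim. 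I expect the only obstacle to be bookkeeping, namely verifying that non-identical vanishing is preserved under the $f$-to-$f^{-1}$ transposition of the circulant structure, which should be routine because Theorem \ref{th1} is established uniformly for any complex $f\ne 0$.
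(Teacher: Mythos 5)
Your proposal is correct and follows essentially the same route as the paper: claims (i) and (ii) are obtained by combining the strong-nonsingularity criterion of Theorem \ref{th0} (via Remark \ref{rebge_genp} for BGE) with the non-identical-vanishing result of Theorem \ref{th1} and the Schwartz--Zippel/algebraic-variety machinery of Theorem \ref{thrnd}, and claim (iii) is reduced to (i)--(ii) by transposing. The only difference is that you make explicit the bookkeeping the paper leaves tacit for (iii): that $Z_f(\mathbf{v})^T$ is an $f^{-1}$-circulant whose first column is an invertible monomial image of $\mathbf{v}$, so Theorem \ref{th1} (which holds for every nonzero $f$) applies to $A^T F^T$ and the non-vanishing transfers back to a polynomial in $\mathbf{v}$; this is a welcome clarification, and your caveat that it assumes $f\neq 0$ is appropriate, since the transposition step degenerates for $f=0$ (a point the paper's one-line argument also glosses over).
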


%------------------------------------------------------------------------------

\begin{proof}
Theorems \ref{th0}, \ref{th1}, and \ref{thrnd} together imply
claims (i) and (ii) of  the corollary.
By applying transposition to the matrix $AH$, extend the results to claim (iii).
\end{proof}

%------------------------------------------------------------------------------

\subsection{GENP with any one-sided circulant pre-processing
fails numerically for some specific inputs}\label{scrfl} 

%------------------------------------------------------------------------------
 
By virtue of Corollary \ref{co0} random circulant pre-processing is a
universal means for ensuring safe GENP, but is it also 
a universal means for ensuring numerically safe  GENP?

By virtue of \cite[Corollary 6.3.1]{P16},
the answer is ``No",\footnote{This nontrivial result has been deduced by using the recent specialization in \cite{P15}
to Cauchy and Van\-der\-monde matrices of 
the general techniques of transformation of matrix structure proposed in \cite{P90}.}
and moreover
 GENP is numerically unsafe when it is applied to
the DFT matrix $\Omega_n$ already for a reasonably large integer $n$
as well as to the matrices $\Omega_nZ_1({\bf v})$
and any vector ${\bf v}$ of dimension $n$, that is, for
any circulant matrix $Z_1({\bf v})$, and consequently for a random circulant matrix
$Z_1({\bf v})$.
By combining the proof of  \cite[Corollary 6.3.1]{P16} with Theorem \ref{thcpw}
one can immediately extend the result to any $f$-circulant 
multiplier $Z_f({\bf v})$ for any $f\neq 0$.
It follows that GENP also fails numerically for the input pairs 
$(A,H)$ where $A=\Omega_nQ$ and $H=Q^HZ_f({\bf v})$
for any unitary matrix $Q$.

Surely one does not need to use GENP in order to solve a
linear system of equations with a DFT coefficient matrix,
but the above results reveal the difficulty in finding universal
classes of structured pre-processing for GENP.
Having specific bad pairs of inputs and multipliers
does not contradict claim (ii) of Corollary \ref{cogpr}, and actually in extensive
tests in \cite{PQZ13} and \cite{PQY15} very good numerical stability 
 has been observed when we applied GENP to
 various classes of nonsingular well-conditioned 
input matrices with random circulant multipliers.

%------------------------------------------------------------------------------

\section{Alternatives: 
augmentation and additive pre-processing}\label{saltaug}

% - - - - - - - - - - - - - - - - - - - - - - - - - - - - - - - - - - - - -

Other randomization techniques, besides
multiplications, are also beneficial for various 
fundamental matrix computations (see \cite{M11},
 \cite{HMT11},  \cite{PGMQ}, \cite{PIMR10}, \cite{PQ10},  \cite{PQZC}, \cite{PQ12},
 \cite{PZa}, and the bibliography therein). 
By combining our present results with those of \cite{PZa}, we next 
supports GENP and BGE 
by means of  
 {\em randomized augmentation} of a matrix,
that is, of appending to it random rows and columns,
as well as by means of an alternative  and closely related
technique of {\em additive pre-processing}
(cf. (\ref{eqkc})--(\ref{eqkppkc})).   

%------------------------------------------------------------------------------

\subsection{Gaussian augmentation}\label{sgaug} 

%------------------------------------------------------------------------------

By virtue of claim (i) of 
\cite[Theorem 10.1]{PZa},
properly scaled Gaussian augmentation 
 of a sufficiently large size
is likely to produce
strongly nonsingular and strongly well-conditioned
matrices.
Namely, this holds when we 
augment an $n\times n$ matrix $A$ and 
produce the matrix $K=\begin{pmatrix}
I_h  &   V^T   \\
U   &   A
\end{pmatrix}.$ 
Here $U$ and $V$ are 
$n\times h$ Gaussian matrices
filled with
 $2hn$ i.i.d.  Gaussian random entries
and $\nu\le h\le n$, for 
$\nu$ denoting an upper bound on the numerical nullity
 of the leading blocks,
that is, on their numerical co-rank.
In the dual version of that theorem, average matrix $K$ is strongly nonsingular and strongly well-conditioned
if $A$ is a Gaussian matrix and if the matrices $U$ and $V$
 have full rank and 
are scaled so that $||U||\approx ||V||\approx 1$,
are well-conditioned.

%------------------------------------------------------------------------------

\subsection{SRFT augmentation}\label{ssrftaug} 

%------------------------------------------------------------------------------

By virtue of  \cite[Section 8]{PZa}  
we are likely to succeed   if
 we apply GENP to the matrix $K$ above
obtained by augmenting a nonsingular and  well-conditioned
matrix $A$ with SRFT matrices $U$ and $V$
of a sufficiently large size replacing the Gaussian ones
of the previous subsection.\footnote{The same results and  estimates hold if one substitutes 
matrices of SRHT for the  ones of SRFT
(cf. \cite{T11}).}

Let us supply some details.
Claim (ii) of \cite[Theorem 10.1]{PZa} implies
that we are likely to produce a 
strongly nonsingular and 
strongly well-conditioned matrix $K$
if $U$ and $V$ are SRFT matrices such that
$\nu\ge q=cn$ for a sufficiently large constant $c$ 
and if
\begin{equation}\label{eqsrft}
4\Big(\sqrt {\nu}+\sqrt {8\log_2(\nu n)}\Big)^2\log_2(\nu)\le h.
\end{equation}

Indeed, the $q\times q$ leading blocks $K_{q,q}$ 
of the matrix $K$ are the identity
matrices $I_q$ for $q=1,\dots, h$, and so 
we only need to estimate the probability that 
the  leading blocks $K_{q,q}$ are 
 well-conditioned for all $q>h$
because their
nonsingularity with probability 1 readily follows from Theorem \ref{thrnd}.

It is proven in  \cite[Section 8]{PZa}
that under (\ref{eqsrft}) this property holds with a probability at least $1-c'/q$
for a constant $c'$ and a fixed $q$.
Therefore it holds for all $q$, $q=h+1,\dots,h+n$
with  a probability at least
$1-c'\sum_{q=h+1}^{h+n}1/q\approx 1-c'\ln (\frac{h+n}{h+1})=
1-c'\ln(1+\frac{n-1}{h+1})\ge 1-c' \ln(1+1/c)$.
This is close to 1 for a sufficiently large constant $c$,
and our claim about the matrix $K$ follows.

%------------------------------------------------------------------------------

\subsection{Linking augmentation to additive pre-processing}\label{saugadd} 

%------------------------------------------------------------------------------

Consider an augmented matrix $K$
and its inverse $K^{-1}$.
Then  its $n\times n$ trailing (that is, southeastern)
block is $C^{-1}$ for $C=A-UV^T$.
Indeed 

\begin{equation}\label{eqkc}
\begin{aligned}
K=\begin{pmatrix}
I_h  &  O_{h,n}  \\
U  & I_n
\end{pmatrix}
\begin{pmatrix}
I_h  &  O_{h,n} \\
O_{n,h}  &  C
\end{pmatrix}
\begin{pmatrix}
I_h  &  V^T  \\
O_{n,h}  & I_n
\end{pmatrix}.
\end{aligned}
\end{equation}
Consequently
\begin{equation}\label{eqkc-}
\begin{aligned}
K^{-1}=\begin{pmatrix}
I_h  &  -V^T  \\
O_{n,h}  & I_n
\end{pmatrix}
\begin{pmatrix}
I_h  &  O_{h,n} \\
O_{n,h}  &  C^{-1}
\end{pmatrix}
\begin{pmatrix}
I_h  &  O_{h,n}  \\
-U  & I_n
\end{pmatrix}
\end{aligned},
\end{equation}
$C^{-1}=\diag(O_{h,h}I_n)K^{-1}\diag(O_{h,h}I_n)$,
and the claim follows.
 
Now deduce that 
$||K^{-1}||/N\le ||C^{-1}||\le ||K^{-1}||$,
$||K||/N\le ||C^{-1}||\le N~||K||$,
and hence 
\begin{equation}\label{eqkppkc}
\kappa (K)/N^2\le \kappa (C)\le N\kappa (K),
~{\rm for}~ N=(1+||U||)(1+||V||).
\end{equation}
Equations 
(\ref{eqkc})-(\ref{eqkppkc}) 
closely link
augmentation $A\rightarrow K$ with {\em additive pre-processing}
$A\rightarrow C$ and also
link the leading blocks of  the augmented matrices
with those output by 
 additive pre-processing. 

Indeed readily extend our observations
to obtain that the $k\times k$
trailing submatrix of the leading block
$K_{h+k,h+k}^{-1}$ of the matrix $K$
is the $k\times k$ leading block $C_{k,k}^{-1}$
of the matrix $C^{-1}$
and that
\begin{equation}\label{eqkppkc1}
\kappa (K_{h+k,h+k})/N^2\le \kappa (C_{k,k})\le N\kappa (K_{h+k,h+k}),
\end{equation}
 for $k=1,\dots,n$.
If the factor $N$ is  reasonably bounded,
which is likely to be the case for
Gaussian matrices $U$ and $V$,  
 then
 the matrix $C_{k,k}$ is nonsingular and well-conditioned 
if and only if so is the matrix $K_{h+k,h+k}$.

%------------------------------------------------------------------------------

\subsection{Transition back to computations with the original matrix.
Expansion  and
homotopy continuation}\label{ssmwhc} 

%------------------------------------------------------------------------------

Having computed the inverses $K^{-1}$
and $C^{-1}$ by
applying GENP or BGE
to the augmented  matrix $K$,
one can
simplify computation of the inverse $A^{-1}$ of
the original matrix $A$ by applying the
 Sherman--\-Mor\-rison--\-Wood\-bury formula\footnote{Hereafter we use the acronym {\em SMW}.}
\begin{equation}\label{eqsmw}
A^{-1}=C^{-1}-C^{-1}U(I_h+V^TC^{-1}U)^{-1}V^TC^{-1},
\end{equation}
for $C=A-UV^T$ (cf. \cite[page 65]{GL13}).

Computing the inverse $A^{-1}$ by means of SMW
formula (\ref{eqsmw}) may still cause numerical problems
at the stages of computing and inverting the matrix
$I_h+V^TC^{-1}U$, but they are
less likely to occur if the matrix $C$ is nonsingular
and well-conditioned, which we expect to be the case in
this application.

In order to strengthen the chances for the success of this approach
we can apply some heuristic recipes. In our tests with benchmark inputs,
we succeeded by simply doubling the lower bound $\nu$
on the dimension $h$ of additive pre-processing
or equivalently  by keeping the same bound $\nu$ but requiring that $2\nu\le h\le n$.
Another natural remedy is the well-known  general technique of {\em homotopy continuation}, 
with which one proceeds as follows.
Fix two matrices $U$ and $V$ as before  and define the matrices 
$C(\tau)=A-\tau UV^T$ and 
$S(\tau)=I_h+\tau V^TC(\tau)^{-1}U$
for a nonnegative parameter $\tau$.
Suppose that the matrix $S(\bar \tau)$ is diagonally dominant
for some positive $\bar \tau$.
Notice that  the matrices  $C(1)=A-UV^T$ 
and $S(0)=I_h$ are readily invertible,
 fix the decreasing sequence of the values 
$\tau_k$, $k=0,1,\dots,l$, such that $\tau_0=1>\tau_1>\cdots>\tau_l=0$,
and compute the sequence of the matrices 
$\tau_j V^TC(\tau_j)^{-1}U$, for $j=0,1,\dots,l$,
by extending SMW formula 
 (\ref{eqsmw}) as follows,
$$C(\tau_{j+1})^{-1}=
C(\tau_{j})^{-1}-\Delta_j C(\tau_{j})^{-1}U(I_h+\Delta_j V^TC(\tau_{j})^{-1}U)^{-1}V^TC(\tau_{j})^{-1},$$
for $\Delta_j =\tau_{j+1}-\tau_{j}$.
For sufficiently small values $\Delta_j$, the matrices
$I_h+\Delta_j V^TC(\tau_{j})^{-1}U$ are diagonally dominant and readily invertible,
and then we can numerically safely perform $l$ homotopy 
 continuation steps for the transition from the inverse $C(1)^{-1}=C^{-1}$
to $C(0)^{-1}=A^{-1}$.

By generalizing
 SMW formula 
(\ref{eqsmw}) and writing $C=A-UV^T$, we can  readily  
express the inverse $K^{-1}$ 
of the augmented matrix $K$,
$$K^{-1}=\begin{pmatrix}
I_{h}  &  -V^TC^{-1}\\
O_{n,h} &  C^{-1}\end{pmatrix}
\begin{pmatrix}I_{h}  &  O_{h,n}\\
-U  &  I_n\end{pmatrix},~
%$$  
%$$
K=\begin{pmatrix}
I_{h}  &  V^T\\
U  &  A\end{pmatrix}=\begin{pmatrix}
I_{h}  &  O_{h,n}\\
U  &  I_n\end{pmatrix} \begin{pmatrix}
I_{h}  &  V^T\\
O_{n,h} &  C\end{pmatrix}.$$

%\begin{remark}\label{revarn}
%If we seek the solution ${\bf x}=A^{-1}{\bf b}$
%to the linear system of equations $A{\bf x}={\bf b}$ and if
%the inverse $C(1)^{-1}$ sufficiently closely
%approximates  $A^{-1}$, 
%then  we can obtain the vector ${\bf x}$ by applying iterative refinement
%without computing the inverse matrix $A^{-1}$.
%\end{remark}

%------------------------------------------------------------------------------
%------------------------------------------------------------------------------

\section{Numerical Experiments}\label{sexp1}

% - - - - - - - - - - - - - - - - - - - - - - - - - - - - - - - - - - - - -

 Numerical experiments have been
 % designed by the first author and have %been 
 performed,  by using MATLAB, 
 %by 
 % Xiaodong Yan (see Tables  \ref{tab63}--%\ref{tab65a})
%and 
by the second author
% (see the other tables).
%The tests have been run
 in the Graduate Center of the City University of New York 
on a Dell computer with the Intel Core 2 2.50 GHz processor and 4G memory running 
Windows 7. 
Gaussian matrices have been  generated by applying the standard normal distribution function randn of MATLAB.
We refer the reader to \cite{PQZ13}, %\cite{PQY14}, 
\cite{PQY15}, and 
 \cite{PZa}
for other extensive tests of 
 GENP with randomized pre-processing.
%and of Algorithm \ref{alg1}.

%------------------------------------------------------------------------------

%\subsection{Tests for GENP}\label{sexpgenp}

% - - - - - - - - - - - - - - - - - - - - - - - - - - - - - - - - - - - - -
%------------------------------------------------------------------------------

Tables \ref{tab63}--\ref{tab65a} show the maximum, minimum and average relative residual norms 
$||A{\bf y}-{\bf b}||/||{\bf b}||$ as well as the standard deviation
for the solution of 1000 linear system $A{\bf x}={\bf b}$ with  
Gaussian vector ${\bf b}$ and $n\times n$
input matrix $A$ for each $n$, $n=256, 512, 1024$ and $10$ linear systems for $n=2048, 4096$,
\begin{equation}\label{eqtsts}
 A=\begin{pmatrix}
A_k  &  B  \\
C    &  D
\end{pmatrix},
\end{equation}
with $k\times k$ blocks $A_k$, $B$, $C$ and $D$, for $k=n/2$,   
scaled so that
 $||B||\approx ||C||\approx ||D||\approx 1$,
 the  $k-4$ singular values of the matrix $A_k$ 
were equal 1 and the other ones were set to 0 (cf. \cite[Section 28.3]{H02}),
and with Gaussian Toeplitz matrices $B$, $C$, and $D$,
that is, with Toeplitz matrices of (\ref{eqtz}), each
defined by the i.i.d. Gaussian entries of its first row and first column.
(The norm  $||A^{-1}||$ 
 ranged from $2.2\times 10^1$ to  
$3.8\times 10^6$ in these tests.)
The linear systems have been solved by using GEPP, GENP, 
or GENP pre-processed with real Gaussian, real Gaussian  circulant, 
and  random  circulant 
multipliers, each
followed by a single loop 
 of iterative refinement.
%also called  iterative improvement

In the tests covered in Table \ref{tab65a}, the matrix
$A$  was set to equal $\Omega$, the matrix of DFT$(n)$. For pre-processing,
 either Gaussian or Gaussian  unitary circulant matrices
$C=\Omega^{-1}D(\Omega{\bf v})\Omega$ have been used
as multipliers,
with ${\bf v}=(v_i)_{i=0}^{n-1}$, $v_i=\exp(2\pi\phi_i\sqrt {-1}/n)$ and
$n$ i.i.d. 
real Gaussian variables $\phi_i$,
$i=0,\dots,n-1$ (cf. Theorem \ref{thcpw}
and  Remark \ref{reblck}).  

As should be expected, GEPP has always produced accurate solutions,
with the average relative residual norms ranging from $10^{-12}$ 
to $7\times 10^{-13}$,  
but GENP with no pre-processing has consistently 
produced corrupted output with relative residual norms 
 ranging  from
$10^{-3}$ to $10^2$ for the input matrices $A$ of 
equation (\ref{eqtsts}). Even much worse was the output accuracy
when GENP 
with no pre-processing
or with Gaussian circulant pre-processing
was applied to the matrix $A=\Omega$.
In all other cases, however, GENP with   
random circulant pre-processing 
and 
with a single loop of iterative refinement
has produced solution with desired accuracy,
matching the output accuracy of GEPP.
Furthermore GENP has performed  similarly when 
it was applied to a nonsingular and well-conditioned input
pre-processed with a Gaussian multiplier.

%------------------------------------------------------------------------------

%\clearpage

\begin{table}[ht]
  \caption{Relative residual norms:
 GENP with Gaussian multipliers}
  \label{tab63}
  \begin{center}
    \begin{tabular}{| c | c | c | c | c | c | c |}
      \hline
      \bf{dimension} & \bf{iterations} & \bf{mean} & \bf{max} & \bf{min} & \bf{std} \\ \hline
 $256$ & $0$ & $6.13\times 10^{-9}$ & $3.39\times 10^{-6}$ & $2.47\times 10^{-12}$ & $1.15\times 10^{-7}$ \\ \hline
 $256$ & $1$ & $3.64\times 10^{-14}$ & $4.32\times 10^{-12}$ & $1.91\times 10^{-15}$ & $2.17\times 10^{-13}$ \\ \hline
 $512$ & $0$ & $5.57\times 10^{-8}$ & $1.44\times 10^{-5}$ & $1.29\times 10^{-11}$ & $7.59\times 10^{-7}$ \\ \hline
 $512$ & $1$ & $7.36\times 10^{-13}$ & $1.92\times 10^{-10}$ & $3.32\times 10^{-15}$ & $1.07\times 10^{-11}$ \\ \hline
 $1024$ & $0$ & $2.58\times 10^{-7}$ & $2.17\times 10^{-4}$ & $4.66\times 10^{-11}$ & $6.86\times 10^{-6}$ \\ \hline
 $1024$ & $1$ & $7.53\times 10^{-12}$ & $7.31\times 10^{-9}$ & $6.75\times 10^{-15}$ & $2.31\times 10^{-10}$ \\ \hline
 %only added 64,256,1024 from below
%  $256$ & $0$ & $2.37\times 10^{-12}$ & $2.47\times 10^{-10}$ & $9.41\times 10^{-14}$ & $1.06\times 10^{-11}$ \\ \hline
 $2048$ & $0$ & $4.14\times 10^{-9}$ & $8.16\times 10^{-9}$ & $8.27\times 10^{-10}$ & $3.72\times 10^{-9}$\\ \hline
  $2048$ &      $1$ & $7.61\times 10^{-12}$ & $1.08\times 10^{-11}$ & $3.27\times 10^{-12}$ & $3.89\times 10^{-12}$\\ \hline
  $4096$ &      $0$ & $5.02\times 10^{-7}$ & $1.23\times 10^{-6}$ & $1.14\times 10^{-7}$ & $6.29\times 10^{-7}$\\ \hline
  $4096$ &      $1$ & $5.44\times 10^{-11}$ & $1.53\times 10^{-10}$ & $2.64\times 10^{-12}$ & $8.52\times 10^{-11}$\\ \hline

    \end{tabular}
  \end{center}
\end{table}

%table ends

%------------------------------------------------------------------------------

\begin{table}[ht]
  \caption{Relative residual norms:
 GENP with Gaussian circulant
  multipliers}
  \label{tab64}
  \begin{center}
    \begin{tabular}{| c | c | c | c | c | c | c |}
      \hline
      \bf{dimension} & \bf{iterations} & \bf{mean} & \bf{max} & \bf{min} & \bf{std} \\ \hline
 $256$ & $0$ & $8.97\times 10^{-11}$ & $1.19\times 10^{-8}$ & $6.23\times 10^{-13}$ & $4.85\times 10^{-10}$ \\ \hline
 $256$ & $1$ & $2.88\times 10^{-14}$ & $2.89\times 10^{-12}$ & $1.89\times 10^{-15}$ & $1.32\times 10^{-13}$ \\ \hline
 $512$ & $0$ & $4.12\times 10^{-10}$ & $3.85\times 10^{-8}$ & $2.37\times 10^{-12}$ & $2.27\times 10^{-9}$ \\ \hline
 $512$ & $1$ & $5.24\times 10^{-14}$ & $5.12\times 10^{-12}$ & $2.95\times 10^{-15}$ & $2.32\times 10^{-13}$ \\ \hline
 $1024$ & $0$ & $1.03\times 10^{-8}$ & $5.80\times 10^{-6}$ & $1.09\times 10^{-11}$ & $1.93\times 10^{-7}$ \\ \hline
 $1024$ & $1$ & $1.46\times 10^{-13}$ & $4.80\times 10^{-11}$ & $6.94\times 10^{-15}$ & $1.60\times 10^{-12}$ \\ \hline    
   $2048$ 	 &      $0$ 	 & $1.03\times 10^{-8}$ 	 & $2.87\times 10^{-8}$ 	 & $1.13\times 10^{-9}$ 	 & $1.59\times 10^{-8}$\\ \hline   
  $2048$ 	 &      $1$ 	 & $3.74\times 10^{-13}$ 	 & $6.09\times 10^{-13}$ 	 & $9.69\times 10^{-14}$ 	 & $2.59\times 10^{-13}$\\ \hline   
  $4096$ 	 &      $0 $	 & $2.46\times 10^{-9}$ 	 & $4.17\times 10^{-8}$ 	 & $1.93\times 10^{-10}$ 	 & $2.05\times 10^{-9}$\\ \hline   
  $4096$ 	 &      $1 $	 & $7.82\times 10^{-13}$ 	 & $1.35\times 10^{-12}$ 	 & $2.02\times 10^{-13}$ 	 & $5.72\times 10^{-13}$\\ \hline   
 \end{tabular}
  \end{center}
\end{table}

%table ends

%..................................................................................

\begin{table}[ht]
  \caption{Relative residual norms:
 GENP with circulant
multipliers filled with $\pm 1$}
  \label{tab65}
  \begin{center}
    \begin{tabular}{| c | c | c | c | c | c | c |}
      \hline
      \bf{dimension} & \bf{iterations} & \bf{mean} & \bf{max} & \bf{min} & \bf{std} \\ \hline
 $256$ & $0$ & $2.37\times 10^{-12}$ & $2.47\times 10^{-10}$ & $9.41\times 10^{-14}$ & $1.06\times 10^{-11}$ \\ \hline
 $256$ & $1$ & $2.88\times 10^{-14}$ & $3.18\times 10^{-12}$ & $1.83\times 10^{-15}$ & $1.36\times 10^{-13}$ \\ \hline
 $512$ & $0$ & $7.42\times 10^{-12}$ & $6.77\times 10^{-10}$ & $3.35\times 10^{-13}$ & $3.04\times 10^{-11}$ \\ \hline
 $512$ & $1$ & $5.22\times 10^{-14}$ & $4.97\times 10^{-12}$ & $3.19\times 10^{-15}$ & $2.29\times 10^{-13}$ \\ \hline
 $1024$ & $0$ & $4.43\times 10^{-11}$ & $1.31\times 10^{-8}$ & $1.28\times 10^{-12}$ & $4.36\times 10^{-10}$ \\ \hline
 $1024$ & $1$ & $1.37\times 10^{-13}$ & $4.33\times 10^{-11}$ & $6.67\times 10^{-15}$ & $1.41\times 10^{-12}$ \\ \hline
  $2048$ 	 &      $0$ 	 & $5.42\times 10^{-9}$ 	 & $1.59\times 10^{-8}$ 	 & $1.54\times 10^{-10 }$	 & $9.04\times 10^{-9}$\\ \hline
  $2048$ 	 &      $1$ 	 & $1.17\times 10^{-13}$ 	 & $2.40\times 10^{-13}$ 	 & $5.14\times 10^{-14}$ 	 & $1.07\times 10^{-13}$\\ \hline
  $4096$ 	 &      $0$ 	 & $1.22\times 10^{-8}$ 	 & $2.47\times 10^{-8}$ 	 & $6.41\times 10^{-10}$ 	 & $1.21\times 10^{-8}$\\ \hline
  $4096$ 	 &      $1$ 	 & $2.29\times 10^{-13}$ 	 & $4.36\times 10^{-13}$ 	 & $1.05\times 10^{-13}$ 	 & $1.81\times 10^{-13}$\\ \hline
   \end{tabular}
  \end{center}
\end{table}
%table ends..................................................................................

\begin{table}[ht]
  \caption{Relative residual norms:
 GENP for DFT$(n)$ with Gaussian multipliers}
  \label{tab65a}
  \begin{center}
  \begin{tabular}{| c |c|  c | c |  c |c|}
      \hline
  \bf{dimension} & \bf{iterations} & \bf{mean} & \bf{max} & \bf{min} & \bf{std} \\ \hline
 $256$ & $0$ & $2.26\times 10^{-12}$ & $4.23\times 10^{-11}$ & $2.83\times 10^{-13}$ & $4.92\times 10^{-12}$ \\ \hline
 $256$ & $1$ & $1.05\times 10^{-15}$ & $1.26\times 10^{-15}$ & $9.14\times 10^{-16}$ & $6.76\times 10^{-17}$ \\ \hline
 $512$ & $0$ & $1.11\times 10^{-11}$ & $6.23\times 10^{-10}$ & $6.72\times 10^{-13}$ & $6.22\times 10^{-11}$ \\ \hline
 $512$ & $1$ & $1.50\times 10^{-15}$ & $1.69\times 10^{-15}$ & $1.33\times 10^{-15}$ & $6.82\times 10^{-17}$ \\ \hline
 $1024$ & $0$ & $7.57\times 10^{-10}$ & $7.25\times 10^{-8}$ & $1.89\times 10^{-12}$ & $7.25\times 10^{-9}$ \\ \hline
 $1024$ & $1$ & $2.13\times 10^{-15}$ & $2.29\times 10^{-15}$ & $1.96\times 10^{-15}$ & $7.15\times 10^{-17}$ \\ \hline
  $2048 $	 &      $0$ 	 & $2.11\times 10^{-11}$ 	 & $3.05\times 10^{-11}$ 	 & $1.64\times 10^{-11}$ 	 & $8.08\times 10^{-12}$\\ \hline
  $2048 $	 &      $1$ 	 & $1.47\times 10^{-13}$ 	 & $2.73\times 10^{-13}$ 	 & $8.10\times 10^{-14}$ 	 & $1.09\times 10^{-13}$\\ \hline
  $4096 $	 &      $0$ 	 & $1.36\times 10^{-10 }$	 & $3.01\times 10^{-10}$ 	 & $4.52\times 10^{-11}$ 	 & $1.43\times 10^{-10}$\\ \hline
  $4096 $	 &      $1$ 	 & $6.12\times 10^{-13}$ 	 & $9.69\times 10^{-13}$ 	 & $1.91\times 10^{-13}$ 	 & $3.93\times 10^{-13}$\\ \hline
    \end{tabular}
  \end{center}
\end{table}

Next we cover our tests of GENP pre-processed 
%by means of some core
with  some multipliers defined by means of  
combining matrices of Section \ref{ssprsml}.    
The test results are represented in Tables \ref{GENPEx0} and \ref{GENPEx}.

In this series of our tests we set 
$n=128$ and
applied GENP  to 
 matrices of 
(\ref{eqtsts})  and six families of
benchmark matrices from \cite{BDHT13},
 pre-processed with multipliers combining the ones 
of following  three basic families.

{\bf Family 1}: The matrices APF of depth 3 (with $d=3$)
and with a (single) random permutation.

{\bf Family 2}: Sparse circulant matrices $C = \Omega^{-1} D(\Omega {\bf v}) \Omega$, 
where the vector  ${\bf v}$ has been filled with zeros,
except for its ten coordinates filled with  $\pm 1$.
Here and hereafter each sign $+$ or $-$ has been assigned with probability 1/2.

{\bf Family 3}:   Sum of two inverse bidiagonal matrices. 
At first their main diagonals have been filled with the integer 101,
 and their first subdiagonals 
have been  filled with $\pm 1$. Then each matrix
have been  multiplied by a  diagonal 
matrix $\diag(\pm 2^{b_i})$, where $b_i$ were random integers
uniformly chosen from 0 to 3.

We combined these three 
basic families of multipliers
and tested GENP on their ten combinations, listed below. 
%The size of the linear system was 128. 
For each combination we have performed 1000 tests
 and have recorded the average relative error $||A{\bf x} - {\bf b}||/||{\bf b}||$
with matrices $A$ from the seven benchmark families and vectors ${\bf b}$ being standard Gaussian vectors. Here are these ten combinations.

1. $F = I$, $H$ is a  matrix of Family 1.

2. $F = I$, $H$ is a  matrix of Family 3.

3. $F=H$ is a matrix of Family 1. 

4. $F=H$ is a matrix of Family 3.  

5. $F$ is a matrix of Family 1, $H$ is a matrix of Family 3.

6. $F = I$, $H$ is the product of two matrices of Family 1.

7. $F = I$, $H$ is the product of two matrices of Family 2.

8. $F = I$, $H$ is the product of two matrices of Family 3.

9. $F = I$, $H$ is the sum of two matrices of Families 1  and 3.

10. $F = I$, $H$ is the sum of two matrices of Families 2  and 3.

We tested these multipliers for the same linear systems as in our previous tests
in this section 
and for six classes generated from Matlab, by following
their complete description in Matlab and  \cite{BDHT13}. 
Here is the list of these seven test classes.

1. The matrices $A$ of (\ref{eqtsts}).

2. 'circul': circulant matrices whose first row is a standard Gaussian random vector.

3. 'condex': counter-examples to matrix condition number estimators.

4. 'fiedler': symmetric matrices generated with $(i,j)$
and $(j,i)$ elements equal to $c_i - c_j$
 where $c_1,\dots, c_n$ are i.i.d. standard Gaussian variables.

5. 'orthog': orthogonal matrices  with  $(i,j)$ elements 
$\sqrt{\frac{2}{n+1}}\sin{\frac{ij\pi}{n+1}}$.%sqrt(2/(n+1)) * sin(i*j*pi/(n+1))

6. 'randcorr': random $n\times n$ correlation matrices with random eigenvalues from a uniform distribution. (A correlation matrix is a symmetric positive semidefinite matrix with ones on the diagonal.)

7. 'toeppd': $n\times n$ symmetric, positive semi-definite (SPSD) Toeplitz matrices $T$ equal to the sums of $m$ rank-2 SPSD Toeplitz matrices. Specifically,

$$T = w(1)*T(\theta(1)) + ... + w(m)*T(\theta(m))$$
where $\theta(k)$ are i.i.d.
 Gaussian variables  and 
$T(\theta(k))=(cos(2\pi (i-j)\theta(k)))_{i,j=1}^n$.

In our tests, for some pairs of inputs and multipliers, GENP has
produced no meaningful output. In such cases we filled
 the respective entries of 
 Tables \ref{GENPEx0} and \ref{GENPEx} with $\infty$.

 GENP  pre-processed with our 
 multipliers of the 9th combination of 
three basic families, has produced 
accurate outputs without iterative refinement
for all seven benchmark classes of input matrices.
With the other combinations of the three basic families of our multipliers,
this was achieved from 4 to 6 (out of 7)
benchmark input classes.
For comparison,  
the 2-sided pre-processing 
with PRBT-based multipliers of
 \cite{BDHT13}
and  \cite{BBBDD14}
always required iterative refinement.

\begin{table}[h]
\caption{Relative residual norms output by pre-processed GENP 
with no refinement iterations}

\label{GENPEx0}
  \begin{center}

    \begin{tabular}{|c|c|c|c|c|c|c|c|}
    
    \hline
class & 1 & 2 & 3 & 4 & 5 & 6 &7 \\ \hline
1 &2.61e-13 &6.09e-15 &  $\infty$ &2.62e+02 &7.35e-15 &1.38e-12 &3.04e-13 \\ \hline
2 &2.02e+02 &4.34e-14 &5.34e-16 &  $\infty$ &7.35e+02 &5.27e-15 &3.23e-15 \\ \hline
3 &4.34e-13 &8.36e-15 &  $\infty$ &3.03e+02 &1.94e-14 &3.04e-13 &3.21e-13 \\ \hline
4 &1.48e+01 &1.36e-12 &2.39e-16 &1.01e-11 &4.71e+01 &5.09e-15 &5.12e-15 \\ \hline
5 &3.71e-11 &2.21e-14 &  $\infty$ &2.85e+01 &5.83e-10 &2.23e-12 &1.34e-12 \\ \hline
6 &3.33e-13 &9.36e-15 &  $\infty$ &3.66e-05 &7.04e-15 &3.75e-13 &2.11e-13 \\ \hline
7 &7.76e-12 &3.55e-14 &9.91e+01 &7.90e+00 &7.75e+00 &7.11e+00 &1.05e+01\\ \hline 
8 &7.95e+00 &9.55e-14 &7.56e-16 &  $\infty$ &5.74e+03 &6.51e-15 &3.57e-15 \\ \hline
9 &5.36e-13 &1.51e-14 &4.26e-16 &2.24e-11 &3.68e-13 &6.47e-15 &4.92e-15\\ \hline 
10 &3.50e-12 &8.43e-14 &3.43e-13 &2.90e-10 &1.36e+01 &3.53e-13 &1.67e-13\\ \hline 

\end{tabular}
  \end{center}
\end{table}

%------------------------------------------------------------------------------

\begin{table}[h]
\caption{Relative residual norms output by pre-processed GENP 
followed by a single refinement iteration}

\label{GENPEx}
  \begin{center}

    \begin{tabular}{|c|c|c|c|c|c|c|c|}
    
    \hline
class & 1 & 2 & 3 & 4 & 5 & 6 &7 \\ \hline
1 &1.13e-15 &6.90e-17 &  $\infty$ &1.12e+00 &5.23e-17 &2.10e-16 &1.05e-15 \\ \hline
2 &5.07e-04 &7.71e-17 &1.03e-16 &  $\infty$ &4.40e+02 &1.99e-16 &1.19e-15 \\ \hline
3 &1.14e-15 &7.34e-17 &  $\infty$ &5.43e-13 &5.15e-17 &2.24e-16 &1.10e-15 \\ \hline
4 &1.55e-03 &6.19e-17 &1.31e-16 &5.69e-13 &2.69e+02 &2.13e-16 &1.17e-15 \\ \hline
5 &9.80e-16 &6.96e-17 &  $\infty$ &6.75e+01 &5.35e-17 &2.47e-16 &9.84e-16 \\ \hline
6 &1.08e-15 &6.13e-17 &  $\infty$ &6.35e-13 &5.08e-17 &1.86e-16 &1.05e-15 \\ \hline
7 &3.47e+01 &6.17e-17 &2.61e+06 &5.21e+00 &5.31e-17 &1.97e-16 &9.97e-16\\ \hline 
8 &2.56e-04 &6.67e-17 &1.15e-16 &  $\infty$ &7.96e+02 &1.98e-16 &1.08e-15 \\ \hline
9 &9.81e-16 &7.44e-17 &3.99e-17 &6.40e-13 &5.09e-17 &2.02e-16 &1.15e-15\\ \hline 
10 &9.79e-16 &8.32e-17 &1.14e-16 &7.34e-13 &4.07e+01 &2.23e-16 &1.04e-15\\ \hline 

\end{tabular}
  \end{center}
\end{table}

%------------------------------------------------------------------------------

%We also tested GENP with additive pre-processing 
%applied to the same
%$n\times n$ test matrices $A$ 
%of (\ref{eqtsts}), but for $n = 128, 256, 512, 1024$. 
%In this case we applied GENP to the matrix $C = A- UV^T$
%where $U$ and $V$ were $n\times h$ random Gaussian 
%subcirculant matrices, each defined by the $n$ i.i.d.
%Gaussian entries of its first column and
%scaled so that $||A||=2||UV^T||$. 
%Then we computed the solution ${\bf x}$
%to the linear system  $A{\bf x}={\bf b}$ for a Gaussian vector ${\bf b}$
%by substituting the 
% SMW formula (\ref{eqsmw})
%into the equation  ${\bf x}=A^{-1}{\bf b}$.
%
%We present the test results in Table \ref{GENPAdd01}.
% The results changed  little when we 
%scaled the matrices $U$ and $V$ to increase the ratio
% $||A||/||UV^T||$ to 10 and 100.

%------------------------------------------------------------------------------

Finally we present the results of our tests of GENP with additive pre-processing 
applied to the same
$n\times n$ test matrices $A$ 
of (\ref{eqtsts}), but for $n = 128, 256, 512, 1024$. 
In this case we applied GENP to the matrix $C = A- UV^T$
where $U$ and $V$ were $n\times h$ random Gaussian 
subcirculant matrices, each defined by the $n$ i.i.d.
Gaussian entries of its first column and
scaled so that $||A||=2||UV^T||$. 
Then we computed the solution ${\bf x}$
to the linear system  $A{\bf x}={\bf b}$ for a Gaussian vector ${\bf b}$
by substituting 
the SMW
formula (\ref{eqsmw}) 
%(cf. \cite[page 65]{GL13}) 
%\begin{equation}\label{eqsmw}
%C^{-1}=(\Sigma_{C_-}+\bar U\bar V^T)^{-1}=
%\Sigma_{C_-}^{-1}-\Sigma_{C_-}^{-1}\bar U(I_{r-r_-}+\bar V^T\Sigma_{C_-}^{-1}\bar U)^{-1}\bar V^T\Sigma_{C_-}^{-1}
%\end{equation} 
into the equation  ${\bf x}=A^{-1}{\bf b}$.

We present the test results in Table \ref{GENPAdd01}. The statistics were taken over 100 runs for each $n$.
 The results changed  little when we 
scaled the matrices $U$ and $V$ to increase the ratio
 $||A||/||UV^T||$ to 10 and 100.

\begin{table}[h]
\caption{Relative residual norms of GENP with Gaussian subcirculant additive pre-processing}
\label{GENPAdd01}
  \begin{center}

    \begin{tabular}{| c | c | c | c | c | c | c |}

\hline

%\bf{$n$} & $h$ & Refinement  & \bf{mean} & \bf{std} \\ \hline
%
%16 & 4 & 0 & 1.67e-11 & 7.87e-11 	\\\hline
%16 & 4 & 1 & 6.15e-15 & 2.92e-14 	\\\hline
%32 & 4 & 0 & 8.42e-11 & 4.86e-10 	\\\hline
%32 & 4 & 1 & 1.49e-14 & 9.09e-14 	\\\hline
%64 & 4 & 0 & 9.23e-11 & 3.99e-10 	\\\hline
%64 & 4 & 1 & 1.63e-14 & 4.24e-14 	\\\hline
%
%128 & 4 & 0 & 6.55e-10 & 2.42e-09 \\ \hline
%128 & 4 & 1 & 7.50e-13 & 2.99e-13 \\ \hline
%256 & 4 & 0 & 1.13e-08 & 3.38e-08 \\ \hline
%256 & 4 & 1 & 1.10e-12 & 6.23e-13 \\ \hline

n 	 & h 	 & iterations 	 & mean 	 & max 	 & min 	 & std \\ \hline
   128 	 &      4 	 &      0 	 & 1.47e-10 	 & 2.51e-09 	 & 1.05e-12 	 & 3.85e-10\\ \hline
   128 	 &      4 	 &      1 	 & 1.58e-14 	 & 3.39e-13 	 & 1.26e-15 	 & 4.48e-14\\ \hline
   256 	 &      4 	 &      0 	 & 8.14e-10 	 & 2.87e-08 	 & 1.20e-11 	 & 3.02e-09\\ \hline
   256 	 &      4 	 &      1 	 & 3.57e-14 	 & 1.16e-12 	 & 2.52e-15 	 & 1.24e-13\\ \hline
   512 	 &      4 	 &      0 	 & 8.86e-09 	 & 3.03e-07 	 & 4.42e-11 	 & 3.44e-08\\ \hline
   512 	 &      4 	 &      1 	 & 2.16e-13 	 & 1.35e-11 	 & 4.60e-15 	 & 1.36e-12\\ \hline
  1024 	 &      4 	 &      0 	 & 2.12e-08 	 & 3.06e-07 	 & 1.45e-10 	 & 4.98e-08\\ \hline
  1024 	 &      4 	 &      1 	 & 9.87e-14 	 & 1.95e-12 	 & 6.64e-15 	 & 2.38e-13\\ \hline

\end{tabular}
  \end{center}
\end{table}

%------------------------------------------------------------------------------
%------------------------------------------------------------------------------

\section{Conclusions}\label{sconc}

%------------------------------------------------------------------------------

Gaussian
elimination with partial pivoting is the workhorse for modern matrix computations,
but it is significantly slowed down by   communication intensive pivoting, both for inputs of  small and large sizes.  Pre-processing with random and fixed multipliers as well as by means of augmentation are efficient alternatives to pivoting according to the results of extensive tests in this paper and a number of previous papers. Our novel insight provides formal support for these empirical observations
and embolden widening the search area 
for efficient pre-processors. We present our initial findings in our search for new classes of such pre-processors
and confirm their efficiency with our numerical tests.

 In \cite{PZ16} and \cite{PZa}   our techniques yield similar results  for
  the fundamental problem of {\em low-rank approximation of a matrix} and for 
the approximation of two singular spaces of a matrix associated with the two sets of its  largest and smallest singular values separated by a gap, respectively.
  
   For the latter subject our substantial new research progress is under way. 
We also plan to test our algorithms on the state-of-the-art computers and to  extend our present results to the computation of Rank Revealing LU Factorization of \cite{P00}.

%------------------------------------------------------------------------------

\medskip

{\bf Acknowledgements:}
 We thank a reviewer for valuable comments  and acknowledge support by NSF Grants CCF 1116736 and
 CCF--1563942
and  PSC CUNY Award 67699-00 45.

%\medskip

%------------------------------------------------------------------------------

%\clearpage

%{\bf {\LARGE { }}}
%\appendix 

%------------------------------------------------------------------------------

\end{document}